\providecommand{\algorithmname}{Algorithm}
\newtheorem{definition}{Definition}[chapter]
\newtheorem{theorem}[definition]{Theorem}
\newtheorem{prop}{Proposition}[chapter]
\newtheorem{lemma}[definition]{Lemma}
\newcounter{hypB}
\newenvironment{hypB}{\refstepcounter{hypB}\begin{itemize}
  \item[({\bf B\arabic{hypB}})]}{\end{itemize}}
\begin{document}


\renewcommand{\baselinestretch}{1.2}

\markright{\hbox{\footnotesize\rm  }\hfill}

\markboth{\hfill{\footnotesize\rm ADAM PERSING AND AJAY JASRA} \hfill}
{\hfill {\footnotesize\rm TWISTING THE ALIVE PARTICLE FILTER} \hfill}

\renewcommand{\thefootnote}{}
$\ $\par


\fontsize{10.95}{14pt plus.8pt minus .6pt}\selectfont
\vspace{0.8pc}
\centerline{\large\bf TWISTING THE ALIVE PARTICLE FILTER}
\vspace{.4cm}
\centerline{Adam Persing$^{1}$ and Ajay Jasra$^{2}$}
\vspace{.4cm}
\centerline{\it $^{1}$Department of Mathematics, Imperial College London}
\centerline{\it $^{2}$Department of Statistics \& Applied Probability, National University of Singapore}
\vspace{.55cm}
\fontsize{9}{11.5pt plus.8pt minus .6pt}\selectfont


\begin{quotation}
\noindent {\it Abstract:}
This work focuses on sampling from hidden Markov models \cite{Cappe_2005} whose observations have intractable density functions.  We develop a new sequential Monte Carlo (\cite{Doucet_2008}, \cite{Doucet_2001}, \cite{Gordon_1993}) algorithm and a new particle marginal Metropolis-Hastings \cite{Andrieu_2010} algorithm for these purposes.  We build from \cite{Jasra_2013} and \cite{Whiteley_2013} to construct the sequential Monte Carlo (SMC) algorithm (which we call the alive twisted particle filter).  Like the alive particle filter of \cite{Jasra_2013}, our new SMC algorithm adopts an approximate Bayesian computation \cite{Tavare_1997} estimate of the HMM.  Our alive twisted particle filter also uses a twisted proposal as in \cite{Whiteley_2013} to obtain a low-variance estimate of the HMM normalising constant.  We demonstrate via numerical examples that, in some scenarios, this estimate has a much lower variance than that of the estimate obtained via the alive particle filter.  The low variance of this normalising constant estimate encourages the implementation of our SMC algorithm within a particle marginal Metropolis-Hastings (PMMH) scheme, and we call the resulting methodology ``alive twisted PMMH''.  We numerically demonstrate on a stochastic volatility model how our alive twisted PMMH can converge faster than the standard alive PMMH of \cite{Jasra_2013}.
\par

\vspace{9pt}
\noindent {\it Key words and phrases:}
Alive particle filters, approximate Bayesian computation, hidden Markov models, particle Markov chain Monte Carlo, sequential Monte Carlo, twisted particle filters.
\par
\end{quotation}\par


\fontsize{10.95}{14pt plus.8pt minus .6pt}\selectfont

\setcounter{chapter}{1}
\setcounter{section}{1}
\setcounter{subsection}{0}
\setcounter{figure}{0}
\setcounter{equation}{0} 
\noindent {\bf 1. Introduction}

Consider a Markov process that evolves in discrete time, and assume that we cannot directly observe the states of the process.  At each time point, we can only indirectly observe the latent state through some other random variable.  Assume also that the observation random variables are statistically independent of one another, conditioned upon the latent Markov process at the current state.  This model is called a hidden Markov model \cite{Cappe_2005}.  Hidden Markov models (HMMs) are very flexible, and so they are used in a wide array of real world applications.   Some examples include stochastic volatility models \cite{Langrock_2012}, real-time hand writing recognition \cite{Hu_1996}, and DNA segmentation \cite{Peshkin_1999}.

As increasing amounts of data have become available to practitioners, real world HMMs have become more and more complex; see \cite{Colella_2007}, \cite{Rosales_2004} and \cite{Yau_2011} for examples.  In instances where analytical Bayesian inference for an HMM is not feasible, one may resort to numerical methods, such as the extended Kalman filter \cite{Anderson_1979}, the unscented Kalman filter \cite{Julier_2004}, block updating Markov chain Monte Carlo \cite{Shephard_1997bb}, sequential Monte Carlo (\cite{Doucet_2008},\cite{Doucet_2001},\cite{Gordon_1993}), and particle Markov chain Monte Carlo \cite{Andrieu_2010}.  The latter two methods are widely regarded as the state-of-the-art, and we further discuss the details of those techniques in Sections 4 and 8 below.  We briefly say here that sequential Monte Carlo (SMC) is a popular class of algorithms that simulate a collection of $N$ weighted samples (or ``particles'') of the HMM's hidden state sequentially in time by combining importance sampling and resampling techniques.  SMC algorithms can be implemented online, and they can be used to obtain unbiased estimates of the probability density of the HMM's observations given the model parameters (i.e., the marginal likelihood).  Particle marginal Metropolis-Hastings (PMMH) algorithms (a type of particle Markov chain Monte Carlo) employ SMC and an unbiased estimate of the marginal likelihood within a Metropolis-Hastings scheme to sample the latent states of the HMM and the model parameters (should they also be unknown).  In \cite{Andrieu_2010}, the authors explain that the variance of the unbiased estimate of the marginal likelihood is critical in the performance of PMMH.  Also note that PMMH is not an online methodology.  Both SMC and PMMH have applications outside of HMMs, but we do not explore those applications here.


Most SMC and PMMH techniques require calculation of the likelihood density of the observations given the latent state (see Sections 4 and 8 below).  The likelihood density can be quite difficult (or even impossible) to evaluate for some complex models, and the focus of this paper is on the subset of HMMs whose observations have unknown or intractable likelihood densities.  Such models can arise, for example, when modeling stochastic volatility \cite{Jasra_2013} and optimising portfolio asset allocation \cite{Jasra_2012b}.  It is typical to adopt approximate Bayesian computation (ABC) \cite{Tavare_1997} estimates of these models (thereby replacing the likelihood density with an approximation) and then use SMC techniques such as in \cite{Jasra_2012b} and \cite{Martin_2012} to perform inference.  The SMC methods of \cite{Jasra_2012b} and \cite{Martin_2012} can, in practice, yield samples that all have a weight of zero (i.e., they can die out).  The competing methods of \cite{Cerou_2012} and \cite{Del_Moral_2012} (which can be used outside of the context of HMMs) reduce the possibility of dying out, but even those methods are not guaranteed to work in practice.  The alive particle filter and the alive PMMH algorithm of \cite{Jasra_2013}, on the other hand, are better methods both in practice and in theory because they cannot die out.

The goal of this paper is to develop new SMC and PMMH algorithms, for ABC approximations of HMMs, that cannot die out and further improve over existing methodologies.  To that end, this paper improves upon the work of \cite{Jasra_2013} by twisting the proposals of the alive algorithms; in a twisted SMC algorithm \cite{Whiteley_2013}, a change of measure is applied to a standard SMC algorithm to reduce its variance (see Section 4 below for a review).  We introduce here a change of measure to the alive algorithms of \cite{Jasra_2013} to yield new ``alive twisted'' sampling techniques that cannot die out and (under certain scenarios) have a superior performance compared to the SMC and PMMH methods of \cite{Jasra_2013}.

The paper commences by introducing our notation in Section 2 (the notation is similar to the Feynman-Kac notation of \cite{Whiteley_2013}).  Section 3 provides a review of the HMMs of interest. Section 4 provides a brief review of SMC and the particular algorithms from which our work builds (i.e., the alive particle filter and the twisted particle filter).  Section 5 states our new alive twisted SMC algorithm.  In Section 6, we state the optimal change in measure of our new algorithm; the assumptions and the main theorem that justify this change of measure are stated as well.  A proof of this result is given in Section A of the appendix, and it follows the framework developed in \cite{Whiteley_2013}.  We numerically compare the alive twisted SMC algorithm to the alive particle filter in Section 7 by implementing both on an ABC approximation of a linear Gaussian HMM.  The numerical example shows that the new algorithm has a significantly lower variance under certain scenarios.  The encouraging performance of the alive twisted particle filter prompts us to embed it within a PMMH algorithm in Section 8, and we find empirically that the alive twisted PMMH is able to converge faster than a non-twisted alive PMMH in Section 9 (when both algorithms are implemented on a stochastic volatility model).  Section 10 concludes the paper with a summary and discussion.
\par

\setcounter{chapter}{2}
\setcounter{section}{2}
\setcounter{subsection}{0}
\setcounter{figure}{0}
\setcounter{equation}{0} 
\noindent {\bf 2. Notation and definitions}

Consider a random variable $X_{k}\in\mathbb{R}^{d_{x}}$ (with index $k$) which may take a value $x_{k}$.  The vector of all $X_{k}$'s corresponding to all $k\in\{1,\dots,n \}$ for $n \geq 1$ will be designated $X_{1:n}$, and the joint density of $X_{1:n}$ will be written $\pi_{\theta}\left( x_{1:n} \right)$; conditional densities of the form $\pi_{\theta} \left( x_{n} \mid x_{n-1} \right)$ may sometimes be written as $\pi_{\theta} \left( x_{n-1}, x_{n} \right)$.  When $X_k$ is to be drawn from the distribution corresponding to the density $\pi_{\theta}\left( x_{k} \right)$, we will slightly abuse the notation and write $X_k \sim \pi_{\theta}\left( \cdot \right)$.  The parameter $\theta\in\Theta\subseteq\mathbb{R}^{d_{\theta}}$ is static across all values of $n$; $\mathscr{B}(\Theta)$ denotes the Borel sets on $\Theta$ and $\mathscr{P}(\Theta)$ denotes the class of probability measures on $\Theta$.  Standard notation $\mathbb{E}[ X_k ]$ is adopted for the expected value of a random variable $X_k$, and similarly, the variance of the same random variable is denoted $\mathbb{V}[ X_k ]$.  All distributions used in this work can be found in Table \ref{table:dist} in Section B of the appendix.

It is assumed that each joint density $\pi_{\theta}\left( x_{1:n} \right)$ may be decomposed as $\pi_{\theta}\left( x_{1:n} \right) = {\gamma_{\theta}\left( x_{1:n} \right)}/{Z_{\theta,1:n}}$, where $Z_{\theta,1:n}$ is a normalising constant.  When a density conditions on a sequence of random variables (and that sequence of random variables is obvious), we may interchangeably use $\pi_{\theta}\left(x_1\mid x_{2:n}\right)=\pi_{\theta}\left( x_1 \mid \cdots \right)$.  Any approximations to any density $\pi_{\theta}\left( x_{1:n} \right)$ will be denoted $\widehat{\pi}_{\theta}\left( x_{1:n} \right)$, and the approximation of the normalising constant will be similarly written as $\widehat{Z}_{\theta,1:n}$.  A collection of $N$ samples with a common density $\pi_{\theta}\left( x_{1:n} \right)$ will be written as $x_{1:n}^{1:N}$ or $\underline{x}_{1:n}$.  We will sometimes assign the value $a_k^i \in \{1,\dots,N\}$ to be the index of a sampled value for $X_k$, and in that instance, we allow $x_k^{a_k^i}=x_k^{a(i)}$.  A sequence of index assignments $a_1^i$, $a_2^i$, $\dots$, $a_n^i$ will be written as $a_{1:n}^i$.

A probability space $(\Omega,\mathscr{F},\mathbb{P}_{\theta})$ consists of a sample space $\Omega$ and a set of events $\mathscr{F}$.  $\mathbb{P}_{\theta}$ is a probability measure defined for every $\theta\in\Theta$ such that for every $A\in\mathscr{F}$, $\mathbb{P}_{\theta}(A)$
is $\mathscr{B}(\Theta)$-measurable.  Furthermore, $\mathscr{M}(\Omega)$ denotes the collection of measures on $\Omega$ and $\mathscr{P}(\Omega)$ can also denote the collection of probability measures on $\Omega$.  When clearly stated, $\mathscr{F}$ may alternatively denote a filtration.

The conventions $\sum_{i=n}^{n-1} = 0$ and $\prod_{\emptyset} = 1$ are used throughout, and $a \wedge b$ denotes the minimum between the two real numbers $a$ and $b$.  For the real-valued numbers $(a,b)$, $\mathbb{I}_{a}\left( b \right)$ will denote the indicator function that equals one when $a=b$ and zero when $a \neq b$.  For some measurable space $(E,\mathcal{E})$, let $B_\epsilon \left(a\right)\subseteq \mathcal{E}$ be a ball of radius $\epsilon$ centred on $a\in E$.  Thus, for $b\in E$, $\mathbb{I}_{B_{\epsilon} \left(a\right)} \left( b \right)$ will be an indicator function whose value is one when $b\in B_{\epsilon} \left(a\right)$ and zero otherwise.

For a measurable function $\varphi:\mathbb{R}^d\rightarrow\mathbb{R}$ such that $\sup_{x\in\mathbb{R}^d}|\varphi(x)|<+\infty$, we write $\varphi\in\mathcal{B}_b(\mathbb{R}^d)$.  $\mathcal{B}_b(\mathbb{R}^d)$ is the Banach space that is complete with respect to the norm $\sup_{x\in\mathbb{R}^d}|\varphi(x)|$.

As our work follows that of \cite{Whiteley_2013}, we use as similar a notation as possible to that original article.  Consider a sequence of $(H,\mathcal{H})$-valued random variables, denoted by $Y_{n}\in\mathbb{R}^{d_{y}}$ at each (time) point $n$.  For the probability space $(\Omega,\mathscr{F},\mathbb{P})$, let $\Omega=H^{\mathbb{Z}}$ be the set of doubly infinite sequences valued in $H$ and let $\mathscr{F}=\mathcal{H}^{\otimes \mathbb{Z}}$.  Then for $\omega=\{\omega\left(n\right)\}_{n\in\mathbb{Z}}\in\Omega$, we can write each random variable as $Y_n = Y_n\left(\omega\right)=\omega\left(n\right)$.  We use $z$ to define a shift operator $z:\Omega \rightarrow \Omega$ as $\left(z\omega\right)\left(n\right)=\omega\left(n+1\right)$, where applying $z$ $m$-times is written $z^m$.  Thus, for example, $Y_n\left(z\omega\right)=Y_{n+1}\left(\omega\right)$ and $Y_n\left(z^m\omega\right)=Y_{n+m}\left(\omega\right)$.

Now consider two more evolving discrete time processes: the $(R,\mathcal{R})$-valued process with states denoted by the random variable $K_{n}\in\mathbb{R}^{d_{k}}$ and the $(E,\mathcal{E})$-valued sequence $\left\{  X_{n}\right\}_{n\mathbb{\geq}1}$ with $E=R \times H$.  At any time point we define the density $M_\theta\left(\omega, x,{d}x\right)$ to be the transition from $x$ to ${d}x$.  Let $\mathbf{M}^{m_1,m_2}_\theta:\Omega\times E^{m_1} \times \mathcal{E}^{\otimes m_2} \rightarrow [0,1]$ be the transition density of an SMC algorithm (see Section 4) that one can use to simulate $m_2$ samples of ${d}x$ conditioned upon $m_1$ samples of $x$:
\begin{align}\label{eq:Mm1m2defnsgtues}
 \mathbf{M}^{m_1,m_2}_\theta\left(\omega, x,{d}x\right)&=\prod_{i=1}^{m_2} \Phi_\theta^{\omega,m_1}\left( \eta_{\omega}^{m_1} \right) \left({d}x^i\right) \\ \nonumber
\Phi_\theta^{\omega,m_1}\left( \eta_{\omega}^{m_1} \right) \left({d}x^i\right)&=
\frac{\frac{1}{m_1}\sum_{j=1}^{m_1} W\left(\omega,x^j\right)M_\theta\left(\omega, x^j,{d}x^i\right)}{\frac{1}{m_1}\sum_{j=1}^{m_1} W\left(\omega,x^j\right)} \\ \nonumber
 \eta_{\omega}^{m_1} \left( W \right) &= \frac{1}{m_1}\sum_{j=1}^{m_1} W\left(\omega,x^j\right), \nonumber
\end{align}
where $W\left(\omega,x^j\right)$ is a weight assigned to a sample of $x$.  Also, define $\mathbf{W}^{m_1}\left(\omega,x\right) = \frac{1}{m_1} \sum_{j=1}^{m_1} W\left(\omega,x^j\right)$, and define the additive functional
$$
\mathbf{f}^{m_1}\left(\omega,x\right) = \frac{1}{m_1} \sum_{j=1}^{m_1} f\left(\omega,x^j\right),
$$
where $f:\Omega\times E \rightarrow (0,\infty)$.  To make our work easier to follow, we further establish some kernel and operator notation in Table \ref{table:kernelnotation} in Section B of the appendix.

Let $\widetilde{\mathbf{M}}^{m_1,m_2}_\theta:\Omega\times E^{m_1} \times \mathcal{E}^{\otimes m_2} \rightarrow [0,1]$ be some other SMC transition density, which may or may not be the same as $\mathbf{M}^{m_1,m_2}_\theta$, and define a family of kernels $\mathbb{M}^{m_1,m_2}$ similar to as in \cite{Whiteley_2013}:
\begin{definition}\label{def:coolM}
Any $\widetilde{\mathbf{M}}^{m_1,m_2}_\theta$ is said to be a member of $\mathbb{M}^{m_1,m_2}$ if and only if there exist positive, finite constants $(\widetilde{\epsilon}_{-},\widetilde{\epsilon}_{+})$ and probability measures $\nu\in\mathscr{P}(E)$ and $\widetilde{\nu}\in\mathscr{P}(E^{m_2})$ such that
\begin{enumerate}
 \item{$ \widetilde{\nu}\left(\cdot\right)\widetilde{\epsilon}_{-} \leq \widetilde{\mathbf{M}}^{m_1,m_2}_\theta\left(\omega, x,\cdot\right) \leq \widetilde{\epsilon}_{+}\widetilde{\nu}\left(\cdot\right) \quad \forall \left(\omega, x\right)\in\Omega\times E^{m_1} $,}
 \item{$ \nu^{\otimes m_2} $ is dominated by $ \widetilde{\nu} $, and}
 \item{$ \int_{E^{m_2}} \left( \frac{\mathrm{d}\nu^{\otimes m_2}}{\mathrm{d}\widetilde{\nu}} \left(x^{'}\right)\right)^2 \widetilde{\nu}\left(dx^{'}\right)<\infty $.}
\end{enumerate}
Furthermore, when $\widetilde{\mathbf{M}}^{m_1,m_2}_\theta$ is a member of $\mathbb{M}^{m_1,m_2}$, we write 
\begin{equation}\label{eq:partofcoolMM}
\phi_\theta^{\omega,m_1,m_2}\left(x,dx\right)=\frac{\mathrm{d}\mathbf{M}^{m_1,m_2}_\theta\left(\omega, x,\cdot \right)}{\mathrm{d}\widetilde{\mathbf{M}}^{m_1,m_2}_\theta\left(\omega, x,\cdot \right)}\left(dx\right),
\end{equation}
which allows us to define the following:
\begin{align*}
&\mathbf{R}^{m_1,m_2}_\theta\left(\omega, x,{d}x^{'}\right) = \mathbf{W}^{m_1}\left(\omega,x\right)^2 \phi_\theta^{\omega,m_1,m_2}\left(x,x^{'}\right)^2 \widetilde{\mathbf{M}}^{m_1,m_2}_\theta\left(\omega, x,{d}x^{'}\right) \\
&\mathbf{J}^{m_1,m_2}_\theta\left(\omega, x\right) = \int_{E^{m_2}} \mathbf{W}^{m_1}\left(\omega,x\right)^2 \phi_\theta^{\omega,m_1,m_2}\left(x,x^{'}\right)^2 \widetilde{\mathbf{M}}^{m_1,m_2}_\theta\left(\omega, x,{d}x^{'}\right)\\
&\mathbf{L}^{m_1,m_2}_\theta\left(\omega, x,{d}x^{'}\right)=\frac{\mathbf{R}^{m_1,m_2}_\theta\left(\omega, x,{d}x^{'}\right)}{\mathbf{J}^{m_1,m_2}_\theta\left(\omega, x\right)}.
\end{align*}
\end{definition}

Finally, this paper frequently refers to the ratio
\begin{align}\label{eq:Wewillfrequentlyrefertotheratiosgthai}
 \widetilde{\mathcal{V}}_{\theta,n}^\omega &=\frac
 {\int_{E^{m_2}} \prod_{k=1}^{n} \mathbf{R}^{m_k,m_{k+1}}_\theta\left(z^k\omega, x,{d}x^{'}\right)}
 {\bigg[ \int_{E} \prod_{k=1}^{n} W\left(z^k\omega,x\right) M_\theta\left(z^k\omega, x,{d}u\right) \bigg]^2}
\end{align}
and to the additive functional
\begin{equation}\label{eq:labeltheaddfunclh}
\mathbf{h}^{m_2}\left(\omega,x\right) = \frac{1}{m_2} \sum_{j=1}^{m_2} h\left(\omega,x^j\right),
\end{equation}
where $h:\Omega\times E \rightarrow (0,\infty)$.
\par

\setcounter{chapter}{3}
\setcounter{section}{3}
\setcounter{subsection}{0}
\setcounter{figure}{0}
\setcounter{equation}{0} 
\noindent {\bf 3. Hidden Markov models}

Allow an $(R,\mathcal{R})$-valued process with states denoted by the random variable $K_{n}\in\mathbb{R}^{d_{k}}$ to be a Markov process.  Assume that we cannot directly observe each $K_{n}$, but we can only indirectly observe each latent state through the random variable $Y_{n}$ (whose properties were defined in Section 2).  Assume also that the observations are statistically independent of one another, conditioned upon the latent process.  This model is called a hidden Markov model \cite{Cappe_2005}, and it can be formally written as
\begin{align}\label{eqn:standardHMMglenridgermhsr}
K_n \mid \left( K_{1:n-1}= k_{1:n-1}, Y_{1:n-1}=y_{1:n-1} \right) &\sim f_\theta \left(\cdot \mid k_{n-1}\right) \\
Y_n \mid \left( K_{1:n}=k_{1:n}, Y_{1:n-1}=y_{1:n-1} \right) &\sim g_\theta \left(\cdot \mid k_n\right), \nonumber
\end{align}
for $n \geq 1$ where $K_1 \sim f_\theta (\cdot \mid k_0) = \mu_{\theta}(\cdot)$.  The parameter $\theta\in\Theta\subseteq\mathbb{R}^{d_{\theta}}$ is static, and it may or may not be known.

When $\theta$ is known, inference on the hidden process at time $n$ relies on the joint density
\begin{equation}\label{eq:HMMjointlitrev}
 \pi_\theta \left(k_{1:n} \mid y_{1:n}\right) = \frac{\gamma_\theta \left(k_{1:n} , y_{1:n}\right)}{Z_{\theta,1:n}} = \frac{\prod_{t=1}^n g_\theta \left(y_t \mid k_t\right)f_\theta \left(k_t \mid k_{t-1}\right)}{\int \prod_{t=1}^n g_\theta \left(y_t \mid k_t\right)f_\theta \left(k_t \mid k_{t-1}\right)\mathrm{d}k_{1:n}}.
\end{equation}
The normalising constant $Z_{\theta,1:n}$ is the probability density of the observations given $\theta$ (i.e., $Z_{\theta,1:n}=p_{\theta} \left( Y_{1:n}=y_{1:n} \right)=p_{\theta} \left( y_{1:n} \right)$).  It is often referred to as the marginal likelihood.  If $\theta$ is unknown, then we would be interested in inferring not only the hidden process but also values for $\theta$.  In this case, we assign a prior density $\pi \left( \theta \right)$, and Bayesian inference at time $n$ relies on the joint density
\begin{equation}\label{eq:thepointofpmcmctarget}
 \pi \left(\theta, k_{1:n} \mid y_{1:n}\right) \propto \pi \left( \theta \right) \gamma_\theta \left(k_{1:n} , y_{1:n}\right).
\end{equation}

\subsection{\normalsize{Intractable likelihood densities}}\label{sec:hmmfilteringsmoothinglitrevabcabcabc}
This paper concentrates on a particular subset of HMMs whose likelihood density function  $g_\theta$ is intractable, thereby making exact calculation of $\gamma_\theta$ impossible (or at least difficult).  We also assume that there does not exist a true unbiased estimate of $g_\theta$.  However, following the method described in \cite{Jasra_2012b}, we can use a biased approximation of the likelihood density.  In more detail, consider the approximation of the joint density \eqref{eq:HMMjointlitrev} given in \cite{Dean_2010} and \cite{Jasra_2012b}:
\begin{equation}\label{eq:pithetaepsilon1118londonjuly}
\pi_\theta^\epsilon \left(k_{1:n} \mid y_{1:n}\right) = \frac{\prod_{t=1}^n g_\theta^\epsilon \left(y_t \mid k_t\right)f_\theta \left(k_t \mid k_{t-1}\right)}{\int \prod_{t=1}^n g_\theta^\epsilon \left(y_t \mid k_t\right)f_\theta \left(k_t \mid k_{t-1}\right)\mathrm{d}k_{1:n}},
\end{equation}
where $g_\theta^\epsilon \left(y_t \mid k_t\right)={\int_{B_{\epsilon}\left(y_t\right)} g_\theta \left(u \mid k_t\right)\mathrm{d}u}/{\int_{B_{\epsilon}\left(y_t\right)} \mathrm{d}u}$.  Under strong assumptions, \cite[Theorem 1]{Jasra_2012b} and \cite[Theorem 2]{Jasra_2012b} show that \eqref{eq:pithetaepsilon1118londonjuly} is a consistent approximation of \eqref{eq:HMMjointlitrev} as $\epsilon$ tends to zero.
\par

\setcounter{chapter}{4}
\setcounter{section}{4}
\setcounter{subsection}{0}
\setcounter{figure}{0}
\setcounter{equation}{0} 
\noindent {\bf 4. Brief review of sequential Monte Carlo}

When the HMMs in Section 3 are impossible or difficult to work with analytically, one can resort to numerical techniques to draw from the models and approximate densities such as \eqref{eq:HMMjointlitrev} and \eqref{eq:pithetaepsilon1118londonjuly}.  Sequential Monte Carlo (SMC) methods comprise a popular collection of approximation techniques for HMMs (see \cite{Doucet_2008},\cite{Doucet_2001}, and \cite{Gordon_1993}).  SMC techniques simulate a collection of $N$ samples (or ``particles'') in parallel, sequentially in time and combine importance sampling and resampling to approximate sequences such as $\pi_\theta \left(k_{1} \mid y_{1}\right), \dots,\pi_\theta \left(k_{1:n} \mid y_{1:n}\right)$.  The sequence of probability densities only must be known up to their additive constants.  The bootstrap particle filter, which is an SMC scheme that first appeared in \cite{Gordon_1993} and can be used to target \eqref{eq:HMMjointlitrev},  is presented here as Algorithm \ref{alg:SMCfiltering}.  To obtain unbiased estimates of the unknown normalising constants, one can use the output of Algorithm \ref{alg:SMCfiltering} to compute the following formula \cite{Del_Moral_2004}: ${\widehat{Z}}_{\theta,1:n}  = \prod_{t=0}^{n-1} [\frac{1}{N}\sum_{i=1}^{N} W\left(z^t\omega,k_{t+1}^i\right)]$.

\begin{algorithm}
\begin{itemize}
\item{ Step 1: For $i\in\{1,\dots,N\}$, sample $K_{1}^{i} \sim \mu_{\theta}\left( \cdot \right)$ and compute the un-normalised weight:
\begin{equation*}
W\left(\omega,k_{1}^i\right) = \frac{\mu_{\theta}\left( k_{1}^i \right) g_{\theta}\left( y_1(\omega) \mid k_{1}^i \right)}{\mu_{\theta}\left( k_{1}^i \right)}=g_{\theta}\left( y_1(\omega) \mid k_{1}^i \right).
\end{equation*}
For $i\in\{1,\dots,N\}$, sample $A_1^i\in\{1,\dots,N\}$ from a discrete distribution on $\{1,\dots,N\}$ with $j^\text{th}$ probability proportional to $W\left(\omega,k_{1}^j\right)$.  The sample $\{ a_1^{1:N} \}$ are the indices of the resampled particles. Set all normalised weights equal to $1/N$, and set $n=2$.}
\item{ Step 2: For $i\in\{1,\dots,N\}$, sample $K_{n}^{i} \mid k_{n-1}^{a(i)} \sim f_{\theta}\left( \cdot \mid k_{n-1}^{a(i)} \right)$ and compute the un-normalised weight:
\begin{equation*}
W\left(z^{n-1}\omega,k_{n}^i\right) = \frac{f_{\theta}\left( k_{n}^{i} \mid k_{n-1}^{a(i)} \right) g_{\theta} \left( y_n(\omega) \mid k_{n}^{i} \right)}{f_{\theta}\left( k_{n}^{i} \mid k_{n-1}^{a(i)} \right)}=g_{\theta} \left( y_n(\omega) \mid k_{n}^{i} \right).
\end{equation*}
For $i\in\{1,\dots,N\}$, sample $A_{1:n}^i\in\{1,\dots,N\}$ from a discrete distribution on $\{1,\dots,N\}$ with $j^\text{th}$ probability proportional to $W\left(z^{n-1}\omega,k_{n}^j\right)$.  Set all normalised weights equal to $1/N$, and set $n=n+1$.  Return to the start of Step 2.}
\end{itemize}
\caption{\label{alg:SMCfiltering}Bootstrap particle filter}
\end{algorithm}

\subsection{\normalsize{Twisted particle filters}}\label{sec:Iamwatchingsomenewsrightnowyeah222}
A study in \cite{Whiteley_2013} has led to a better understanding of how one might obtain an ideal SMC algorithm in the following sense:
\begin{equation}\label{eq:222idealresulttwistedfrankie}
 \frac{1}{n} \text{log} \left( \frac{\mathbb{E}[\widehat{Z}_{\theta,1:n}^2]}{Z_{\theta,1:n}^2} \right) \rightarrow 0 \quad \text{as} \quad n \rightarrow \infty, \quad \mathbb{P}-a.s.,
\end{equation}
where the expectation is taken with respect to the joint density of the samples obtained by the algorithm.  The ideal algorithm basically amounts to introducing a change of measure on the bootstrap particle filter.  To explain this notion in more detail, consider the transition density $\mathbf{M}^{N,N}_\theta:\Omega\times R^{N} \times \mathcal{R}^{\otimes N} \rightarrow [0,1]$ of Algorithm \ref{alg:SMCfiltering}.  The authors of \cite{Whiteley_2013} define the additive, non-negative functional $\mathbf{h}^{N}\left(\omega,k\right)$ of the form \eqref{eq:labeltheaddfunclh}, where each $h$ is a particular eigenfunction,
\begin{equation*}
 h\left(\omega,k\right) = \lim_{n \to \infty} \frac{Q_{\theta,n}^\omega\left(1\right)\left(k\right)}{\Phi_{\theta,n}^{z^{-n}\omega} \left( \sigma \right)Q_{\theta,n}^\omega\left(1\right)},
\end{equation*}
such that $\eta^\omega Q_\theta^\omega\left(\cdot\right) = \lambda_\omega \eta^{z\omega} \left(\cdot\right)$, $Q_\theta^\omega\left( h\left(z\omega,\cdot\right) \right)\left(k\right) = \lambda_\omega h\left(\omega,k\right)$, and $\eta^\omega (h\left(\omega,k\right)) = 1$, for the limit
\begin{equation}\label{eq:bigwaspnest1}
 \eta^\omega (A) = \lim_{n \to \infty} \Phi_{\theta,n}^{z^{-n}\omega} \left( \sigma \right) \left( A \right)
\end{equation}
and the $\mathbb{R}^{+}$-valued, $\mathscr{F}$-measurable eigenvalue
\begin{equation}\label{eq:bigwaspnest2}
\lambda:\omega\in\Omega \rightarrow \eta^\omega \left(W^{\omega}\right).
\end{equation}
One can use the additive functional $\mathbf{h}^N$ to change the measure of the entire particle system generated by Algorithm \ref{alg:SMCfiltering} and replace $\mathbf{M}^{N,N}_\theta\left(\omega, k,{d}k\right)$ with
$$
\widetilde{\mathbf{M}}^{N,N}_\theta\left(\omega, k,{d}k\right)\propto\mathbf{M}^{N,N}_\theta\left(\omega, k,{d}k\right)\mathbf{h}^{N}\left(z\omega,{d}k\right)
$$
to obtain the ideal SMC algorithm of \cite{Whiteley_2013} that achieves \eqref{eq:222idealresulttwistedfrankie}.  This algorithm uses a new estimate of the normalising constant,
\begin{align*}
 \widehat{Z}_{\theta,1:n}&=\prod_{t=0}^{n-1} \frac{\frac{1}{N}\sum_{i=1}^{N} Q_{\theta}^{z^{t-1}\omega}\left(h(z^{t}\omega,\cdot)\right)\left(k_{t}^{a(i)}\right)}{\mathbf{h}^{N}\left(z^t\omega,k\right)}\\ &=\prod_{t=0}^{n-1} \bigg[\frac{1}{N}\sum_{i=1}^{N} W\left(z^t\omega,k_{t+1}^i\right)\bigg]
  \phi_\theta^{z^{t-1}\omega,N,N}\left(k_{t},k_{t+1}\right),
\end{align*}
which is clearly unbiased when the expectation is taken with respect to $\widetilde{\mathbf{M}}_\theta^{N,N}$.  As re-weighting Markov transitions using eigenfunctions is typically referred to as ``twisting'', the authors call their algorithm the twisted particle filter (see Algorithm \ref{alg:SMCtwisted}).

\begin{algorithm}
\begin{itemize}
\item{ Step 0: For $i\in\{1,\dots,N\}$, sample $K_{0}^{i}$ from some appropriate initial distribution and set the un-normalised weight: $W\left(z^{-1}\omega,k_{0}^i\right) = 1$.  Set $n=1$.}
\item{ Step 1: Resampling steps:

 - Sample $U$ from the discrete uniform distribution on $\{1,\dots,N\}$.

 - Sample $A_{n-1}^u\in\{1,\dots,N\}$ from a discrete distribution on $\{1,\dots,N\}$ with $j^\text{th}$ probability proportional to $Q_{\theta}^{z^{n-2}\omega}\left(h(z^{n-1}\omega,\cdot)\right)\left(k_{n-1}^{j}\right)$.

 - For $i\in\{1,\dots,N\}$ and $i\neq u$, sample $A_{n-1}^i\in\{1,\dots,N\}$ from a discrete distribution on $\{1,\dots,N\}$ with $j^\text{th}$ probability proportional to $W\left(z^{n-2}\omega,k_{n-1}^j\right)$.}
\item{ Step 2: Sampling steps:

 - Sample $K_{n}^{u} \mid k_{n-1}^{a(u)} \propto f_{\theta}\left( \cdot \mid k_{n-1}^{a(u)} \right)h(z^{n-1}\omega,\cdot)$.

 - For $i\in\{1,\dots,N\}$ and $i\not=u$, sample $K_{n}^{i} \mid k_{n-1}^{a(i)} \sim f_{\theta}\left( \cdot \mid k_{n-1}^{a(i)} \right)$.}
\item{ Step 3: For $i\in\{1,\dots,N\}$, compute the un-normalised weight:
\begin{equation*}
W\left(z^{n-1}\omega,k_{n}^i\right) = g_{\theta} \left( y_n(\omega) \mid k_{n}^{i} \right).
\end{equation*}
Set $n=n+1$ and return to the start of Step 1.}
\end{itemize}
\caption{\label{alg:SMCtwisted}Twisted bootstrap particle filter}
\end{algorithm}

\subsection{\normalsize{Alive particle filters}}\label{sec:jingoisticspeech}
In the algorithms just discussed, it is necessary to calculate the likelihood density $g_\theta$.  When repetitively calculating $g_\theta$ is not feasible, one can instead target \eqref{eq:pithetaepsilon1118londonjuly} and employ SMC algorithms that utilise approximate Bayesian computation (ABC); see \cite{Jasra_2013}, \cite{Jasra_2012b}, and \cite{Martin_2012} for examples.  We focus here on one particular combination of SMC and ABC: the alive particle filter of \cite{Jasra_2013}, which is printed here as Algorithm \ref{alg:SMCalive}.

Consider an $(E,\mathcal{E})$-valued discrete-time sequence $\left\{  X_{n}\right\}_{n\mathbb{\geq}1}$ with $E=R \times H$.  This process clearly has the Markov property, and it provides a framework through which \eqref{eq:pithetaepsilon1118londonjuly} can be calculated and the HMM of Section 3 can be approximated.  Recall the transition densities of $\left\{  X_{n}\right\}_{n\mathbb{\geq}1}$:
\begin{align*}
M_\theta\left(z^{-1}\omega, x_{0},x_1\right) &= \mu \left( x_{1} \right) =f_\theta\left(k_{1} \mid k_{0}\right) g_\theta \left( u_{1}(\omega) \mid k_{1} \right) \\
 \{ M_\theta\left(z^{n-2}\omega, x_{n-1},x_n\right)&=f_\theta\left(k_{n} \mid k_{n-1}\right)g_\theta \left( u_{n}(\omega) \mid k_{n} \right)\}_{n\mathbb{\geq}2}. \nonumber
\end{align*}
The sequence $\left\{  X_{n}\right\}_{n\mathbb{\geq}1}$ propagates with $k_{n}\sim f_\theta \left( \cdot \mid k_{n-1} \right)$ and $u_n(\omega)\sim g_\theta \left( \cdot \mid k_{n} \right)$.  The values taken by this propagating sequence can be assigned weights
$$
\{W\left(z^{n-1}\omega,x_n\right) = \mathbb{I}_{R \times B_{n,\epsilon} \left(y_n(\omega)\right)}\left( x_n \right)\}_{n\geq 1},
$$
with $x_{n}=(k_{n},u_{n}(\omega))$ and $B_{n,\epsilon} \left(y_n(\omega)\right) \in \mathcal{H}$, which take a value of one when $u_n(\omega) \in B_{n,\epsilon} \left(y_n(\omega)\right)$ and zero otherwise.  When a realisation of the sequence has weights that are each equal to one, then that realisation is an approximate draw from the true HMM.

The authors of \cite{Jasra_2013} use the sequence $\left\{  X_{n}\right\}_{n\mathbb{\geq}1}$ to obtain a biased approximation of an SMC algorithm targeting an HMM with transition $f_\theta$ and likelihood density $g_\theta$ when $g_\theta$ is either impossible or undesirable to compute (assuming it is still possible to simulate from the likelihood distribution).  Basically, the particle filter of \cite{Jasra_2013} commences by simulating $k_{n}\sim f_\theta \left( \cdot \mid k_{n-1} \right)$, simulating $u_n(\omega)\sim g_\theta \left( \cdot \mid k_{n} \right)$, and then considering $k_{n}$ to be a draw from the latent process of the HMM only if $u_n(\omega) \in B_{n,\epsilon} \left(y_n(\omega)\right)$.  The total number of samples drawn at a time point of the algorithm is denoted by the random variable $T_\omega$, where
\begin{equation}\label{eq:alivetwistedtomegasgtues}
 T_{\omega}=\inf\left\{p\geq N : \sum_{i=1}^{p} W\left(\omega,x^i\right)\geq N\right\}.
\end{equation}
In other words, sampling continues at each time point until at least $N$ samples of non-zero weight are obtained.  This practice prevents Algorithm \ref{alg:SMCalive} from dying out, but it does introduce a random running time.

The alive particle filter has an upper bound on its error that does not depend on $n$ \cite[Theorem 3.1]{Jasra_2013}, and its associated unbiased estimate of the normalising constant is given by \cite[Proposition 3.1]{Jasra_2013}:
\begin{align}\label{eq:apfsg}
 \widehat{Z}_{\theta,1:n} =\prod_{t=0}^{n-1} \bigg[\frac{1}{T_{z^t\omega}-1}\sum_{i=1}^{T_{z^t\omega}-1} W\left(z^t\omega,x_{t+1}^i\right)\bigg] =\prod_{t=0}^{n-1} \frac{N-1}{T_{z^t\omega}-1}.
\end{align}
Note that in proving these results, the authors of \cite{Jasra_2013} make use of a nuance in Algorithm \ref{alg:SMCalive}: the last sampled particle is deleted at every time step.

\begin{algorithm}
\begin{itemize}
\item{ Step 1: Set $h=1$ and $S=0$.}
\item{ Step 2: Sample $X_{1}^{h} \sim \mu \left(\cdot \right)$ and compute the un-normalised weight:
\begin{equation*}
W\left(\omega,x_1^h\right) = \mathbb{I}_{R \times B_{1,\epsilon} \left(y_1(\omega)\right)}\left( x_1^h \right).
\end{equation*}
Compute $S = \sum_{i=1}^h W\left(\omega,x_1^i\right)$.  If $S<N$, then set $h=h+1$ and return to the beginning of Step 2.  Otherwise, set $T_\omega=h$ and $n=2$.}
\item{ Step 3: Set $h=1$ and $S=0$.}
\item{ Step 4: Sample $A_{1:n-1}^h\in\{1,\dots,T_{z^{n-2}\omega}-1\}$ from a discrete distribution on $\{1,\dots,T_{z^{n-2}\omega}-1\}$ with $j^\text{th}$ probability $W\left(z^{n-2}\omega,x_{n-1}^j\right)$.  Sample $X_{n}^{h} \mid x_{n-1}^{a(h)} \sim M_\theta\left(z^{n-2}\omega, x_{n-1}^{a(h)},\cdot\right)$ and compute the un-normalised weight:
\begin{equation*}
W\left(z^{n-1}\omega,x_n^h\right) = \mathbb{I}_{R \times B_{n,\epsilon} \left(y_n(\omega)\right)}\left( x_n^h \right).
\end{equation*}
Compute $S = \sum_{i=1}^h W\left(z^{n-1}\omega,x_n^i\right)$.  If $S<N$, then set $h=h+1$ and return to the beginning of Step 4.  Otherwise, set $T_{z^{n-1}\omega}=h$ and $n=n+1$ and return to the start of Step 3.}
\end{itemize}
\caption{\label{alg:SMCalive}Alive particle filter}
\end{algorithm}

Finally, we note that the conditional probability of the stopping time $T_{z^{n-1}\omega}=t_{z^{n-1}\omega}$ and the particles $x_{n}^{1:{t_{z^{n-1}\omega}}}$ generated by Algorithm \ref{alg:SMCalive} at time $n$ is
\begin{align}\label{eq:probofalivesmcalgo}
 &\mathbb{P}\left(x_{n}^{1:{t_{z^{n-1}\omega}}}, t_{z^{n-1}\omega} \mid \mathscr{F}_{n-1} \right) = \binom{t_{z^{n-1}\omega}-1}{N-1}\times \\ \nonumber
&\bigg[ \prod_{i=1}^{t_{z^{n-1}\omega}} \frac{\frac{1}{t_{z^{n-2}\omega}-1}\sum_{l=1}^{t_{z^{n-2}\omega}-1} 
W\left(z^{n-2}\omega,x_{n-1}^{a(l)}\right)M_\theta\left(z^{n-2}\omega, x_{n-1}^{a(l)},x_n^i\right)}{
\mathbf{W}^{t_{z^{n-2}\omega}-1}\left(\omega,x\right)
} \bigg], \nonumber
\end{align}
where $\mathscr{F}_{n-1}$ is the filtration generated by the particle system through time $(n-1)$ and we require that $\sum_{i=1}^{t_{z^{n-1}\omega}} W\left(z^{n-1}\omega,x_n^i\right)=N$ and $W\left(z^{n-1}\omega,x_n^{t_{z^{n-1}\omega}}\right)=1$.  This is a result which is used in the original work appearing below.
\par

\setcounter{chapter}{5}
\setcounter{section}{5}
\setcounter{subsection}{0}
\setcounter{figure}{0}
\setcounter{equation}{0} 
\noindent {\bf 5. Alive twisted sequential Monte Carlo}

In an effort to try to reduce the variance of Algorithm \ref{alg:SMCalive}'s estimate of ${Z}_{\theta,1:n}$, this paper introduces a change of measure on the particle system generated by that algorithm (in the same spirit of how Algorithm \ref{alg:SMCtwisted} improved Algorithm \ref{alg:SMCfiltering}).  We can use an additive functional of the form \eqref{eq:labeltheaddfunclh} to introduce the change of measure (similar to as in \cite{Whiteley_2013} and Section \ref{sec:Iamwatchingsomenewsrightnowyeah222}).  The conditional probability \eqref{eq:probofalivesmcalgo} of the alive particle filter then becomes 
\begin{align}\label{eq:mehulheadsunday}
\widetilde{\mathbb{P}}\left(x_{n}^{1:{t_{z^{n-1}\omega}}}, t_{z^{n-1}\omega} \mid \mathscr{F}_{n-1} \right) \propto \mathbb{P}\left(x_{n}^{1:{t_{z^{n-1}\omega}}}, t_{z^{n-1}\omega} \mid \mathscr{F}_{n-1} \right) \times \\ \frac{1}{t_{z^{n-1}\omega}-1} \sum_{i=1}^{t_{z^{n-1}\omega}-1} h\left(z^{n-1}\omega,x_n^i\right). \nonumber
\end{align}
This expression can be normalised by changing the summand to
\begin{align*}
\hbar \left(z^{n-1}\omega,x_n^i\right) &= h\left(z^{n-1}\omega,x_n^i\right)\times\\& \bigg[W\left(z^{n-1}\omega,x_n^i\right)\cdot\mathbb{I}_{\{a:a\geq N\}}\left( t_{z^{n-1}\omega} \right)\cdot\frac{N-1}{t_{z^{n-1}\omega}-1} \\&+ (1-W\left(z^{n-1}\omega,x_n^i\right))\cdot\mathbb{I}_{\{a:a\geq N+1\}}\left( t_{z^{n-1}\omega} \right)\cdot\frac{t_{z^{n-1}\omega} - N}{t_{z^{n-1}\omega}-1}\bigg]
\end{align*}
and dividing the entire R.H.S.~of \eqref{eq:mehulheadsunday} by $\Phi_\theta^{{z^{n-2}\omega},t_{z^{n-2}\omega}-1}\left( \eta_{z^{n-2}\omega}^{t_{z^{n-2}\omega}-1} \right) \left(h\left(z^{n-1}\omega,\cdot\right)\right)$.  One can sample from the normalised version of \eqref{eq:mehulheadsunday} via Algorithm \ref{alg:SMCalivetwistedsunday}, which is known hereafter as the alive twisted particle filter, or alive twisted SMC.  We stress that Algorithm \ref{alg:SMCalivetwistedsunday} is not all that more complicated to implement than Algorithm \ref{alg:SMCalive}.

\begin{algorithm}
In the following, it is assumed that the first observation of the HMM is given the index of one (e.g., $y_{1}=y_1(z^0\omega)$ and $y_{2}=y_1(z^1\omega)$).
\begin{itemize}
 \item{ Step 0: Set $n=0$.}
 \item{ Step 1: Sample the twisted particle $X_{n+1}^{1}$ from the probability
\begin{equation*}
 \frac{\Phi_\theta^{{z^{n-1}\omega},t_{z^{n-1}\omega}-1}\left( \eta_{z^{n-1}\omega}^{t_{z^{n-1}\omega}-1} \right) \left(dx_{n+1}^{1}\right)\hbar \left(z^{n}\omega,x_{n+1}^{1}\right)}{\Phi_\theta^{{z^{n-1}\omega},t_{z^{n-1}\omega}-1}\left( \eta_{z^{n-1}\omega}^{t_{z^{n-1}\omega}-1} \right)\left(h\left(z^{n}\omega,\cdot\right)\right)}
\end{equation*}
 and compute the un-normalised weight:
\begin{equation*}
W\left(z^n\omega,x_{n+1}^1\right)=\mathbb{I}_{R \times B_{n+1,\epsilon} \left(y_1\left(z^n\omega\right)\right)}\left( x_{n+1}^1 \right).
\end{equation*}
 }
 \item{ Step 2: Set $r=2$ and $S=0$.}
 \item{ Step 3: Sample the non-twisted particle $X_{n+1}^{r}$ from the probability
$\Phi_\theta^{{z^{n-1}\omega},t_{z^{n-1}\omega}-1}\left( \eta_{z^{n-1}\omega}^{t_{z^{n-1}\omega}-1} \right) \left(x_{n+1}^{r}\right)$
and compute the un-normalised weight:
\begin{equation*}
W\left(z^n\omega,x_{n+1}^r\right)=\mathbb{I}_{R \times B_{n+1,\epsilon} \left(y_1\left(z^n\omega\right)\right)}\left( x_{n+1}^r \right).
\end{equation*}
Compute $S = \sum_{i=1}^r W\left(z^n\omega,x_{n+1}^i\right)$.  If $S<N$, then set $r=r+1$ and return to the beginning of Step 3.  Otherwise, set $T_{z^n\omega}=r$.}
 \item{ Step 4: Sample $C_{n+1}$ from the discrete uniform distribution on $\{1,\dots,T_{z^n\omega}-1\}$; this is the index of the twisted particle in Step 1.  Set $n=n+1$ and return to the start of Step 1.}
\end{itemize}
\caption{\label{alg:SMCalivetwistedsunday}Alive twisted particle filter}
\end{algorithm}

For a simulated path $\underline{x}_{1:n}$ generated by Algorithm \ref{alg:SMCalivetwistedsunday}, where $T_{z^{t-1}\omega}$ samples of $x_t$ have been obtained, we have
\begin{align}\label{eq:atpfsg}
 \widehat{Z}_{\theta,1:n}&=\prod_{t=0}^{n-1} \frac{N-1}{T_{z^t\omega}-1}\cdot\frac{\sum_{i=1}^{T_{z^t\omega}-1} Q_{\theta}^{z^{t-1}\omega}\left(h(z^{t}\omega,\cdot)\right)\left(x_{t}^{a(i)}\right)}{\sum_{i=1}^{T_{z^t\omega}-1} W\left(z^t\omega,x_{t+1}^i\right) h\left(z^{t}\omega,x_{t+1}^i\right)} \\ \nonumber
 &=\prod_{t=0}^{n-1} \left(\frac{1}{T_{z^t\omega}-1}\sum_{i=1}^{T_{z^t\omega}-1} W\left(z^t\omega,x_{t+1}^i\right)\right)
  \phi_\theta^{z^{t-1}\omega,T_{z^{t-1}\omega}-1,T_{z^{t}\omega}-1}\left(x_{t},x_{t+1}\right).
\end{align}
This estimate is clearly unbiased when the expectation is taken with respect to the transition densities of Algorithm \ref{alg:SMCalivetwistedsunday}:
 \begin{equation*}
 \widetilde{\mathbf{M}}^{T_{\omega}-1,T_{z\omega}-1}_\theta\left(\omega, x,D\right)=\frac{\int_D \mathbf{M}^{T_{\omega}-1,T_{z\omega}-1}_\theta\left(\omega, x,dx^{'}\right)\mathbf{h}^{T_{z\omega}-1}\left(z \omega, x^{'}\right)}{\int_{E^{T_{z\omega}-1}} \mathbf{M}^{T_{\omega}-1,T_{z\omega}-1}_\theta\left(\omega, x,du\right)\mathbf{h}^{T_{z\omega}-1}\left(z \omega, u\right)},
 \end{equation*}
for all $D \in \mathcal{E}^{\otimes (T_{z\omega}-1)}$, and $\widetilde{\mathbf{M}}^{T_{\omega}-1,T_{z\omega}-1}_\theta$ is a member of $\mathbb{M}^{T_{\omega}-1,T_{z\omega}-1}$.  Of course, any generic choice of the function $h$ is not guaranteed to induce a low variance for \eqref{eq:atpfsg}.  We show below in Section 6 that the unique optimal choice of
\begin{equation}\label{eq:stillmehulsheadjinsunmondaylondon}
 h\left(\omega,x\right) = \lim_{n \to \infty} \frac{Q_{\theta,n}^\omega\left(1\right)\left(x\right)}{\Phi_{\theta,n}^{z^{-n}\omega} \left( \sigma \right)Q_{\theta,n}^\omega\left(1\right)}
\end{equation}
leads to the low variance.
\par

\setcounter{chapter}{6}
\setcounter{section}{6}
\setcounter{subsection}{0}
\setcounter{figure}{0}
\setcounter{equation}{0} 
\noindent {\bf 6. Optimal change in measure}

The purpose of introducing the change of measure on the alive particle filter is to reduce the variance of the algorithm's estimate of the normalising constant, $\widehat{Z}_{\theta,1:n}$.  More specifically, we would like to achieve
\begin{equation*}
\frac{1}{n}\log \widetilde{\mathcal{V}}_{\theta,n}^\omega=
 \frac{1}{n} \text{log} \left( \frac{\mathbb{E}[\widehat{Z}_{\theta,1:n}^2]}{Z_{\theta,1:n}^2} \right) \rightarrow \Upsilon\left(\widetilde{\mathbf{M}}_\theta\right)=0 \quad \text{as} \quad n \rightarrow \infty, \quad \mathbb{P}-a.s.,
\end{equation*}
which is similar to expression \eqref{eq:222idealresulttwistedfrankie}.  The non-negative, finite constant $\Upsilon\left(\widetilde{\mathbf{M}}_\theta\right)$ is some limiting value which depends on the transition density of Algorithm \ref{alg:SMCalivetwistedsunday}.

We show below that the optimal choice of $h$ which leads to $\Upsilon\left(\widetilde{\mathbf{M}}_\theta\right)=0$ is \eqref{eq:stillmehulsheadjinsunmondaylondon}, which also happens to be an eigenfunction and the unique solution to the system of equations
\begin{align}\label{eq:ihavetoleavesoonforcoventgarden1558}
 \eta^\omega Q_\theta^\omega\left(\cdot\right) &= \lambda_\omega \eta^{z\omega} \left(\cdot\right), \quad Q_\theta^\omega\left( h\left(z\omega,\cdot\right) \right)\left(x\right) = \lambda_\omega h\left(\omega,x\right), \quad \eta^\omega (h\left(\omega,x\right)) = 1,
\end{align}
for the limit \eqref{eq:bigwaspnest1} and the $\mathbb{R}^{+}$-valued, $\mathscr{F}$-measurable eigenvalue \eqref{eq:bigwaspnest2}.  In other words, we show that the same change of measure utilised in \cite{Whiteley_2013} can also be used to reduce the variance of unbiased estimates of the normalising constant when the likelihood density is not computable.  The original work of \cite{Whiteley_2013} only considers the case where $M_\theta\left(\omega, x,{d}x\right)$ can be evaluated pointwise.

To prove our result, we adopt slightly different assumptions from those of \cite{Whiteley_2013}:
\begin{hypB}
\label{hyp:H1anal}
The shift operator $z$ preserves $\mathbb{P}$ and is ergodic. 
\end{hypB}
\begin{hypB}
\label{hyp:H2anal}
At any time point,
\begin{equation}\label{eq:assumptionb21}
 \sup_{\omega\in\Omega} \sup_{\left(x,u\right)\in E^2} \frac{M_\theta^\omega\left( W \right)\left( x \right)}{M_\theta^\omega\left( W \right)\left( u \right)}\leq\Delta_1,
\end{equation}
for some $\Delta_1\in(0,\infty)$.  Furthermore, there exist positive, finite constants $(\epsilon_{-},\epsilon_{+})$ and a probability measure $\nu\in\mathscr{P}(E)$ such that 
\begin{equation}\label{eq:assumptionb22}
\nu\left(\cdot\right)\epsilon_{-} \leq M_\theta\left(\omega, x,\cdot\right) \leq \epsilon_{+}\nu\left(\cdot\right) \quad \forall \left(\omega, x\right)\in\Omega\times E.
\end{equation}
Given the definition of the incremental weights, note that \eqref{eq:assumptionb21} and \eqref{eq:assumptionb22} imply, for all $\sigma, \sigma_1, \sigma_2\in\mathscr{P}(E)$:
\begin{equation*}
 \sup_{n\geq 1}\sup_{\omega\in\Omega} \sup_{\left(x,u\right)\in E^2} \frac{\sigma_1 Q_{\theta,n}^\omega\left( x \right)}{\sigma_2 Q_{\theta,n}^\omega\left( u \right)}\leq\Delta_2, \quad
 0 < \sup_{n\geq 1} \sup_{\omega\in\Omega} \sigma Q_{\theta,n}^\omega\left( 1 \right) <\infty,
\end{equation*}
\begin{equation*}
 0 < \sup_{n\geq 1}\sup_{\left(x,u\right)\in E^2} \frac{Q_{\theta,n}^\omega\left(1\right)\left(x\right)}{Q_{\theta,n}^\omega\left(1\right)\left(u\right)}<\infty,
\end{equation*}
for some $\Delta_2\in(0,\infty)$.
\end{hypB}
\begin{hypB}
\label{hyp:H3}
We always fix $N$ such that $1<N<\infty$, and for all $\omega\in\Omega$, $\epsilon$ is always set in such a way that $T_{\omega}$ as in \eqref{eq:alivetwistedtomegasgtues} is finite.
\end{hypB}
Essentially, the first assumption means that the process producing the observations is stationary and ergodic \cite{Whiteley_2013}.  The second and third assumptions effectively place upper and lower bounds on the estimate of the normalising constant, and they place a finite restriction on the running time of Algorithm \ref{alg:SMCalivetwistedsunday}.  We acknowledge that these assumptions are strong, but they are typical of those used in the literature.

In the appendix in Section A, the following theorem is proven for when $h$ is defined as in \eqref{eq:stillmehulsheadjinsunmondaylondon}:
\begin{theorem}\label{theo:analoguetoTheorem1NWAL}
Assume (B\ref{hyp:H1anal}), (B\ref{hyp:H2anal}) and (B\ref{hyp:H3}). For each $\widetilde{\mathbf{M}}^{T_{\omega}-1,T_{z\omega}-1}_\theta$ any member of a $\mathbb{M}^{T_{\omega}-1,T_{z\omega}-1}$, the following are equivalent:
\begin{enumerate}
 \item{$\Upsilon\left(\widetilde{\mathbf{M}}_\theta\right)=0$.}
 \item{For $\mathbb{P}$-almost all $\omega\in\Omega$, $\exists A_\omega \in \mathcal{E}^{\otimes (T_{\omega}-1)}$ such that $\nu^{\otimes (T_{\omega}-1)}\left(A_\omega^c\right)=0$ and $\forall x\in A_\omega$,
 $$
 \widetilde{\mathbf{M}}^{T_{\omega}-1,T_{z\omega}-1}_\theta\left(\omega, x,D\right)=\frac{\int_D \mathbf{M}^{T_{\omega}-1,T_{z\omega}-1}_\theta\left(\omega, x,dx^{'}\right)\mathbf{h}^{T_{z\omega}-1}\left(z \omega, x^{'}\right)}{\int_{E^{T_{z\omega}-1}} \mathbf{M}^{T_{\omega}-1,T_{z\omega}-1}_\theta\left(\omega, x,du\right)\mathbf{h}^{T_{z\omega}-1}\left(z \omega, u\right)}$$
 for all $D \in \mathcal{E}^{\otimes (T_{z\omega}-1)}$.
 }
 \item{For $\mathbb{P}$-almost all $\omega\in\Omega$, $\sup_{n} \widetilde{\mathcal{V}}_{\theta,n}^\omega<\infty$.}
\end{enumerate}
\end{theorem}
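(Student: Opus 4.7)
The plan is to follow, with the modifications needed to accommodate random sample sizes, the Feynman--Kac/multiplicative-ergodic framework developed in \cite{Whiteley_2013} for the fixed-$N$ twisted particle filter. I would close the cycle of implications in the order $(2)\Rightarrow(3)\Rightarrow(1)\Rightarrow(2)$. For $(2)\Rightarrow(3)$, substituting the prescribed Radon--Nikodym form of $\widetilde{\mathbf{M}}^{T_\omega-1,T_{z\omega}-1}_\theta$ into \eqref{eq:Wewillfrequentlyrefertotheratiosgthai} and using the eigenfunction identities in \eqref{eq:ihavetoleavesoonforcoventgarden1558} causes the product to telescope into a quantity whose uniform boundedness follows from (B\ref{hyp:H2anal}) and (B\ref{hyp:H3}). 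The implication $(3)\Rightarrow(1)$ is the elementary observation that $n^{-1}\log a_n\to 0$ whenever $(a_n)$ is bounded in $(0,\infty)$. The substantive content of the theorem therefore lies in $(1)\Rightarrow(2)$.

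For the core step I would first show that $n^{-1}\log \widetilde{\mathcal{V}}_{\theta,n}^\omega$ converges $\mathbb{P}$-almost surely to a non-random limit $\Upsilon(\widetilde{\mathbf{M}}_\theta)\geq 0$. Writing the numerator in \eqref{eq:Wewillfrequentlyrefertotheratiosgthai} as a product of shifted factors and pairing each with the corresponding factor of the denominator, ergodicity of $z$ from (B\ref{hyp:H1anal}), together with the uniform bounds on $M_\theta$ and on $Q_{\theta,n}^\omega$ provided by (B\ref{hyp:H2anal}) and the finiteness guarantee (B\ref{hyp:H3}), places the argument within the scope of a Birkhoff/Kingman ergodic theorem. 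Using the eigenfunction--eigenvalue pair $(h,\lambda)$ characterised by \eqref{eq:ihavetoleavesoonforcoventgarden1558}, the limit can be re-expressed as an integral of $\log\!\int \phi_\theta^{\omega,T_\omega-1,T_{z\omega}-1}(x,x')^{2}\,\widetilde{\mathbf{M}}^{T_\omega-1,T_{z\omega}-1}_\theta(\omega,x,dx')$ against an invariant probability $\Pi$ on the extended space, where $\phi_\theta$ is the Radon--Nikodym derivative from \eqref{eq:partofcoolMM}.

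A factor-wise Jensen inequality, combined with the $L^2$ integrability supplied by clause (3) of Definition~\ref{def:coolM}, then gives $\Upsilon(\widetilde{\mathbf{M}}_\theta)\geq 0$ with equality if and only if $\phi_\theta^{\omega,T_\omega-1,T_{z\omega}-1}(x,\cdot)$ is, on a set of full $\Pi$-measure, $\widetilde{\mathbf{M}}^{T_\omega-1,T_{z\omega}-1}_\theta(\omega,x,\cdot)$-almost surely constant. Translating this equality case through \eqref{eq:partofcoolMM} and the explicit form of $\mathbf{M}^{m_1,m_2}_\theta$ in \eqref{eq:Mm1m2defnsgtues}, and then invoking the eigenfunction identity $Q_\theta^\omega(h(z\omega,\cdot))(x)=\lambda_\omega h(\omega,x)$, recovers the claimed Radon--Nikodym relation on a set of full $\nu^{\otimes(T_\omega-1)}$-measure, which is the content of statement~(2).

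The principal obstacle, and the reason the argument is not a verbatim adaptation of \cite{Whiteley_2013}, is the varying dimension of the state spaces: the kernels appearing in the product defining $\widetilde{\mathcal{V}}_{\theta,n}^\omega$ act from $E^{T_{z^{k-1}\omega}-1}$ into $E^{T_{z^k\omega}-1}$, and both endpoints are $\omega$-measurable random integers. To push the ergodic argument through I would first condition on the doubly-infinite sequence of stopping times $(T_{z^k\omega})_{k\in\mathbb{Z}}$, which is itself stationary and ergodic under (B\ref{hyp:H1anal}); assumption (B\ref{hyp:H3}) then supplies the integrability needed for a Fubini/tower interchange between this conditional expectation and the shift-invariant time average, while the minorisation \eqref{eq:assumptionb22} in (B\ref{hyp:H2anal}) delivers the uniform-in-$\omega$ control of $\mathbf{J}^{T_\omega-1,T_{z\omega}-1}_\theta$ that legitimises the final passage to the limit. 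Setting up this conditioning-then-averaging scheme while keeping track of how the random dimensions interact with the Radon--Nikodym derivative $\phi_\theta$ is the most delicate part of the proof.
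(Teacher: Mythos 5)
Your overall architecture --- the cycle $(2)\Rightarrow(3)\Rightarrow(1)\Rightarrow(2)$, a subadditive-ergodic argument giving $\mathbb{P}$-a.s.\ convergence of $n^{-1}\log\widetilde{\mathcal{V}}_{\theta,n}^\omega$ to $\Upsilon(\widetilde{\mathbf{M}}_\theta)=\Xi-2\Lambda$, eigenfunction telescoping for $(2)\Rightarrow(3)$, and uniform-in-$T$ bounds supplied by (B\ref{hyp:H2anal})--(B\ref{hyp:H3}) --- is exactly the scaffolding the paper erects in Propositions \ref{theo:analoguetoProp1NWAL}--\ref{theo:analoguetoProp4NWAL} and Lemmas \ref{theo:analoguetoLem1NWAL}, \ref{theo:analoguetoLem6NWAL} and \ref{theo:analoguetoLem2NWAL} before deferring the final assembly to the proof of Theorem~1 of \cite{Whiteley_2013}. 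The genuine gap is in your pivotal step $(1)\Rightarrow(2)$: you characterise the equality case of your Jensen inequality as ``$\phi_\theta^{\omega,T_\omega-1,T_{z\omega}-1}(x,\cdot)$ is $\widetilde{\mathbf{M}}_\theta(\omega,x,\cdot)$-a.s.\ constant.'' Since $\phi_\theta$ is the Radon--Nikodym derivative \eqref{eq:partofcoolMM} of one probability kernel with respect to another, $\int\phi_\theta(x,x')\,\widetilde{\mathbf{M}}_\theta^{T_\omega-1,T_{z\omega}-1}(\omega,x,dx')=1$, so constancy forces $\phi_\theta\equiv 1$ and hence $\widetilde{\mathbf{M}}_\theta=\mathbf{M}_\theta$: your argument would identify the \emph{untwisted} kernel as optimal, contradicting statement (2), for which $\phi_\theta(x,x')\propto 1/\mathbf{h}^{T_{z\omega}-1}(z\omega,x')$ is manifestly non-constant. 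The inequality must be applied \emph{after} inserting the eigenfunction. Using Lemma \ref{theo:analoguetoLem2NWAL} and $\mathbf{Q}_\theta=\mathbf{W}\phi_\theta\widetilde{\mathbf{M}}_\theta$,
\begin{equation*}
\lambda_\omega^2\,\mathbf{h}^{T_\omega-1}(\omega,x)^2=\Bigl(\int \mathbf{W}^{T_\omega-1}(\omega,x)\,\phi_\theta(x,x')\,\mathbf{h}^{T_{z\omega}-1}(z\omega,x')\,\widetilde{\mathbf{M}}^{T_\omega-1,T_{z\omega}-1}_\theta(\omega,x,dx')\Bigr)^2\le \int \mathbf{R}^{T_\omega-1,T_{z\omega}-1}_\theta(\omega,x,dx')\,\mathbf{h}^{T_{z\omega}-1}(z\omega,x')^2,
\end{equation*}
and the equality case of \emph{this} inequality is that $\phi_\theta(x,\cdot)\,\mathbf{h}^{T_{z\omega}-1}(z\omega,\cdot)$ be $\widetilde{\mathbf{M}}_\theta(\omega,x,\cdot)$-a.s.\ constant, which is precisely the twisted form in (2). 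Iterating this bound is also what actually delivers $\Xi\ge 2\Lambda$; your proposed identity expressing $\Xi$ as $\int\log\int\phi_\theta^2\,d\widetilde{\mathbf{M}}_\theta\,d\Pi$ is not valid for products of non-commuting positive operators and has to be replaced by the Perron--Frobenius analysis of the $\mathbf{R}_\theta$-semigroup that the minorisation of $\mathbf{L}_\theta$ in Lemma \ref{theo:analoguetoLem6NWAL} is designed to enable.

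A secondary concern is your proposal to ``condition on the doubly-infinite sequence of stopping times $(T_{z^k\omega})_{k\in\mathbb{Z}}$, which is itself stationary and ergodic under (B\ref{hyp:H1anal}).'' By \eqref{eq:alivetwistedtomegasgtues}, $T_\omega$ is a function of the simulated particles, not of $\omega$ alone, and its law depends on which kernel $\widetilde{\mathbf{M}}_\theta$ is being run; it is therefore not an exogenous stationary sequence that can be conditioned on before the ergodic argument. The paper's route is instead to make every bound uniform over all admissible values $T\ge N$ (Lemmas \ref{theo:analoguetoLem1NWAL} and \ref{theo:analoguetoLem6NWAL}, where $\Delta_3$, $\alpha$, $\delta_\pm$ do not depend on the realised stopping times), so that Kingman's theorem is applied to functionals of $\omega$ only. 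You would need either to adopt that device or to justify carefully why the conditional law of the $T$'s is shift-invariant.
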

This theorem (which is analogous to \cite[Theorem 1]{Whiteley_2013}) states that there is a unique choice for the change in measure of the particle system that, when analytically available, leads to $\Upsilon\left(\widetilde{\mathbf{M}}_\theta\right)=0$.  However, that optimal $h$ often needs to be approximated.  In the following section, we implement Algorithm \ref{alg:SMCalivetwistedsunday} on an example where the exact form of $h$ needs to be approximated.  The numerical illustration shows that under certain scenarios, the approximation of $h$ is sufficient to reduce the variance of $\widehat{Z}_{\theta,1:n}$.
\par

\setcounter{chapter}{7}
\setcounter{section}{7}
\setcounter{subsection}{0}
\setcounter{figure}{0}
\setcounter{equation}{0} 
\noindent {\bf 7. Implementation of alive twisted SMC}

We compare the variability of the alive particle filter's \eqref{eq:apfsg} to that of the alive twisted particle filter's \eqref{eq:atpfsg}.
We consider a linear Gaussian HMM similar to that of \cite[Section 4.4]{Whiteley_2013}:
\begin{align}\label{eq:WhiteleyHMM2012}
K_0 &\sim \mathcal{N}\left( 0, \nu^2 \right)\\ \nonumber
K_{n} \mid \left( K_{1:n-1}=k_{1:n-1}, Y_{1:n-1}=y_{1:n-1} \right) &\sim \mathcal{N}\left( 0.9k_{n-1}, \nu^2 \right) = f_\theta \left( k_{n} \mid k_{n-1} \right)\\ \nonumber
Y_{n} \mid \left( K_{1:n}=k_{1:n}, Y_{1:n-1}=y_{1:n-1} \right) &\sim \mathcal{N}\left( k_{n}, \tau^2 \right) = g_\theta \left( y_{n}(\omega) \mid k_{n} \right), \nonumber 
\end{align}
for $1 \leq n \leq T$.  In our numerical illustrations, we assume it is undesirable to repetitively calculate the density $g_\theta$, but it is possible to use an ABC approximation.  We know from \cite{Whiteley_2013} that the best approximation of $h$ appropriate for a twisted bootstrap particle filter targeting this HMM is $h\left(z^{n-1}\omega,k_n\right)=\pi_\theta\left( y_1(z^{n-1+l}\omega) \mid k_n \right)=\pi_\theta\left( y_{n}(z^l\omega) \mid k_n \right)$, where $l=5$ is a lag length.  As this expression is analytically available for \eqref{eq:WhiteleyHMM2012}, we use this $h$ in our simulations.  Furthermore, given this form of $h$, it is possible to obtain the closed form expression $Q_{\theta}^{z^{n-1}\omega}\left(h(z^{n}\omega,\cdot)\right)\left(x_{n}^{a(i)}\right)
 =\mathbb{I}_{R \times B_{n+1,\epsilon} \left(y_1(z^{n}\omega)\right)}\left( \cdot \right) \pi_\theta\left( y_{1}(z^{n+l}\omega) \mid k_{n}^{a(i)} \right)$.

In our analysis, we calculate $\text{log} [ \mathbb{V}[ \widehat{Z}_{\theta,1:T}^{\text{Algo}\ref{alg:SMCalive}} ] ]-\text{log} [ \mathbb{V}[ \widehat{Z}_{\theta,1:T}^{\text{Algo}\ref{alg:SMCalivetwistedsunday}} ] ]$ for different pairs of values of the state noise $\nu$ and the observation noise $\tau$, where the variance is taken with respect to the appropriate algorithm.  We run $300$ simulations per pair $(\nu,\tau)$.  The experiment is repeated under different values of $N$ for a fixed $T=100$.  We fix the radius of each weight, which is defined as $\epsilon={|u_1(z^{n-1}\omega) - y_1(z^{n-1}\omega)|}/{|y_1(z^{n-1}\omega)|}={|u_n(\omega) - y_n(\omega)|}/{|y_n(\omega)|}$.

The output (see Figure \ref{fig:FigSingapore3}) is similar to the results in \cite[Section 4.4]{Whiteley_2013}.  When the noise values are concentrated around $\nu=\tau=1$, twisting the alive particle filter results in a significant increase in the precision of $\widehat{Z}_{\theta,1:T}$.  We see less of an improvement when $\nu$ and $\tau$ increase, as these are cases where the alive particle filter already performs poorly (at least under the settings that we tested).

\begin{figure}[H]
\begin{center}
{$\text{log} \bigg[ \mathbb{V}\bigg[ \widehat{Z}_{\theta,1:T}^{\text{Algo}\ref{alg:SMCalive}} \bigg] \bigg]-\text{log} \bigg[ \mathbb{V}\bigg[ \widehat{Z}_{\theta,1:T}^{\text{Algo}\ref{alg:SMCalivetwistedsunday}} \bigg] \bigg]$}\\
\scalebox{0.31}{\includegraphics[trim = 35mm 89mm 35mm 106mm, clip]{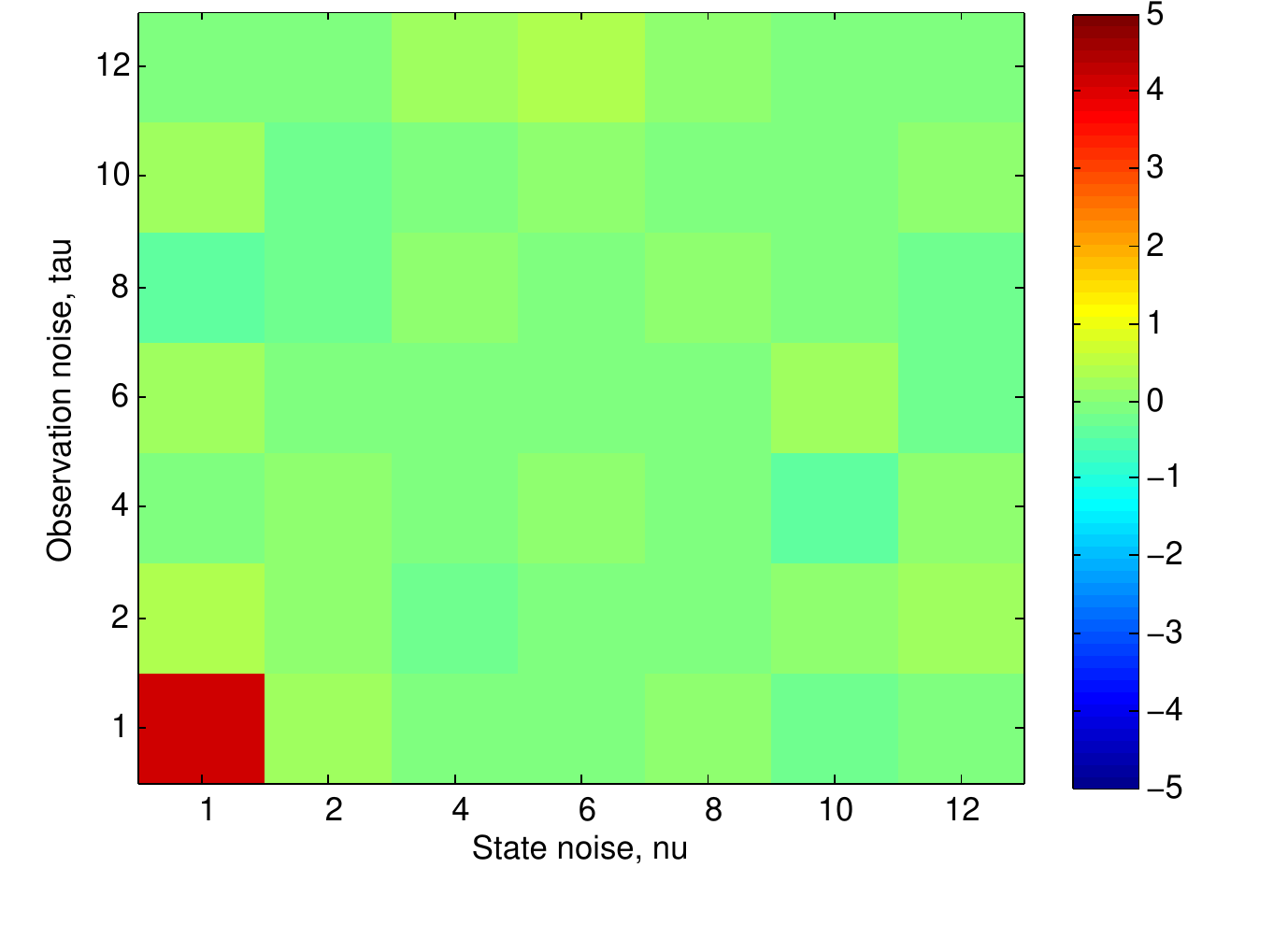}}
\scalebox{0.31}{\includegraphics[trim = 35mm 89mm 35mm 106mm, clip]{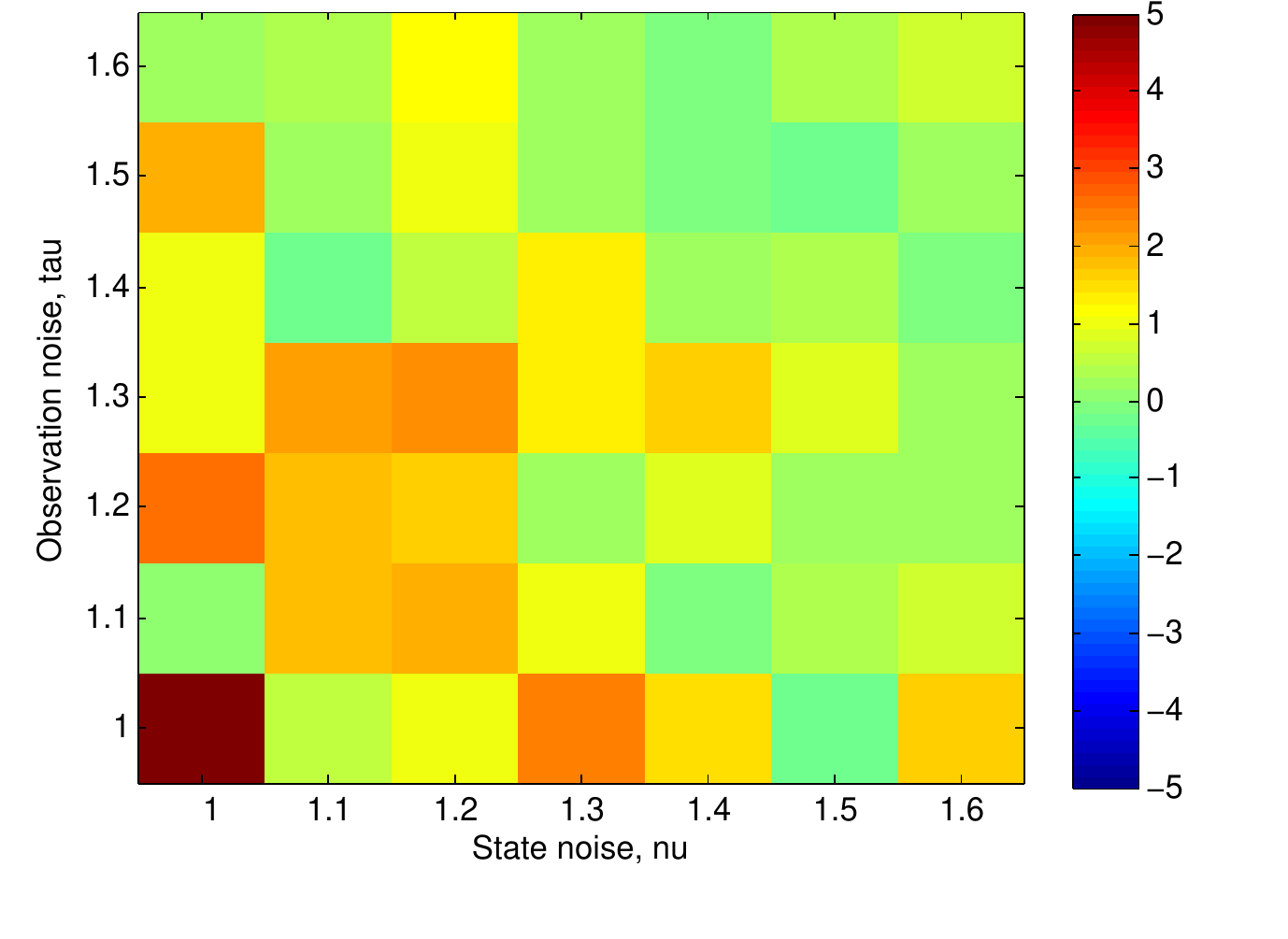}}
\scalebox{0.31}{\includegraphics[trim = 35mm 89mm 35mm 106mm, clip]{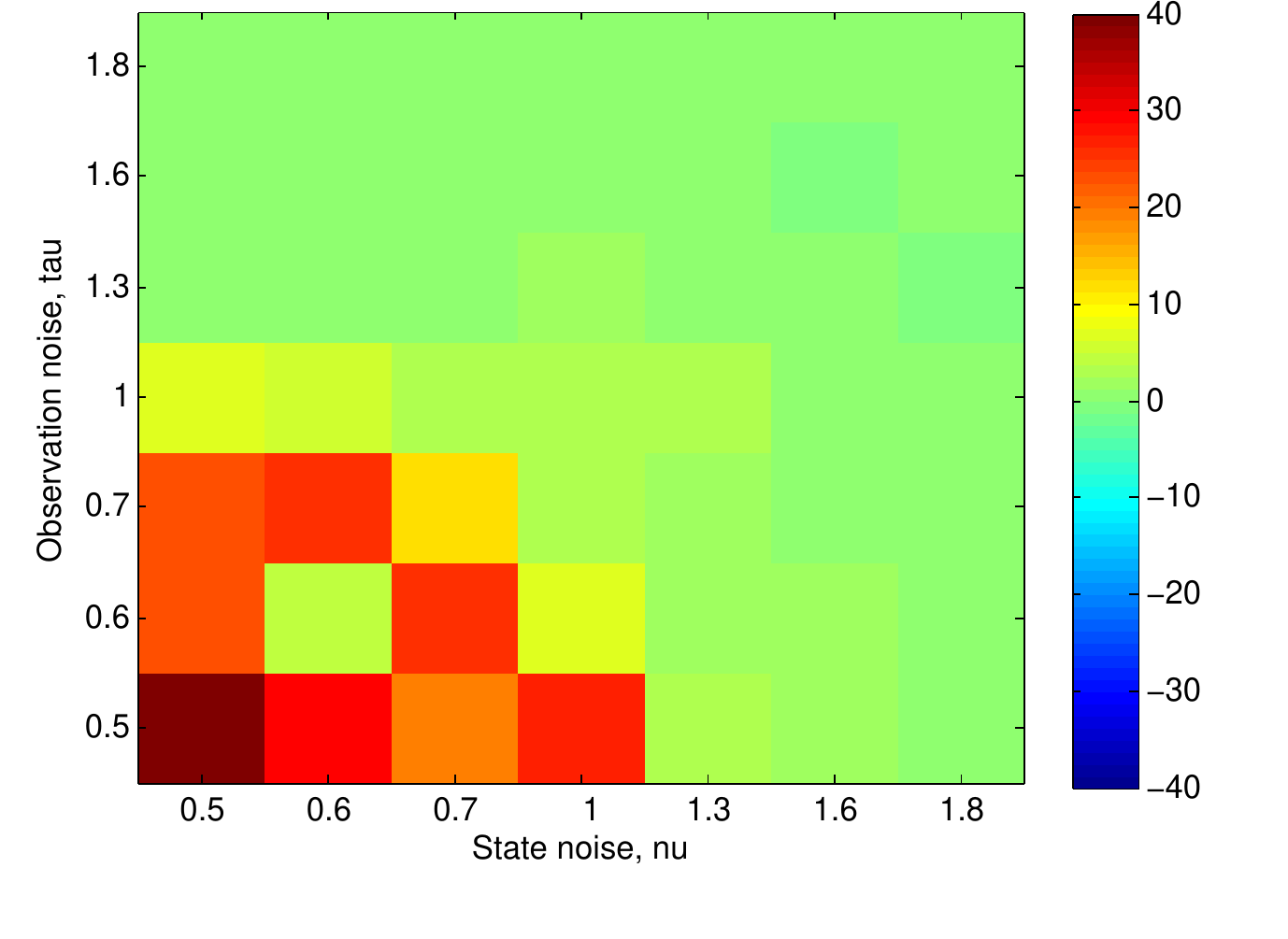}}
\caption[$\quad$Comparing alive SMC to alive twisted SMC]{Each graph measures the difference between the log of the variance of estimates of $Z_{\theta,1:T}$ obtained by the two algorithms.  We set $T=100$ and $\epsilon=1.5$ in each case.  Each block represents $300$ simulations, where $N=1250$ at left and $N=1500$ at centre and right.  Notice that the centre and right graphs are more concentrated around $\nu=\tau=1$ than the graph at left.}
\label{fig:FigSingapore3}
\end{center}
\end{figure}
\par

\setcounter{chapter}{8}
\setcounter{section}{8}
\setcounter{subsection}{0}
\setcounter{figure}{0}
\setcounter{equation}{0} 
\noindent {\bf 8. Alive twisted particle marginal Metropolis-Hastings}

Up to this point, the discussion has focused on sampling from \eqref{eq:pithetaepsilon1118londonjuly}.  Suppose now that the model parameter $\theta$ is unknown, in which case we would sample from
\begin{equation}\label{eq:thepointofpmcmctargetepsilon}
\pi^\epsilon \left(\theta, k_{1:n} \mid y_{1:n}\right) \propto \pi \left( \theta \right) \gamma_\theta^\epsilon \left(k_{1:n} , y_{1:n}\right),
\end{equation}
a density which is similar to \eqref{eq:thepointofpmcmctarget}.  The particle marginal Metropolis-Hastings algorithm \cite{Andrieu_2010} is designed for sampling from densities of the forms \eqref{eq:thepointofpmcmctarget} and \eqref{eq:thepointofpmcmctargetepsilon}.  In a particle marginal Metropolis-Hastings (PMMH) algorithm, one targets an extended density (which is henceforth denoted as $\pi^N$) that yields the true density of interest $\pi$ (or $\pi^\epsilon$) as a marginal; the PMMH employs SMC within a Metropolis-Hastings scheme to sample the variables of the extended target $\pi^N$, and the acceptance ratio of that Metropolis-Hastings scheme is calculated using $\widehat{Z}_{\theta,1:n}$.

The original alive particle filter paper \cite{Jasra_2013} outlines a PMMH algorithm that employs Algorithm \ref{alg:SMCalive}, thereby allowing one to sample from \eqref{eq:thepointofpmcmctargetepsilon}.  However, Section 7 shows that Algorithm \ref{alg:SMCalivetwistedsunday} can indeed outperform Algorithm \ref{alg:SMCalive} in certain scenarios, with the variance of $\widehat{Z}_{\theta,1:n}$ being reduced.  In \cite{Andrieu_2010}, the authors explain that the variance of $\widehat{Z}_{\theta,1:n}$ is critical in the performance of PMMH.  Thus, it is sensible to embed the alive twisted particle filter in PMMH to attempt to expedite the convergence of PMMH.  As $\theta$ is variable, performance improvements in even just some locations of $\Theta$ can be desirable for a PMMH.

It is straightforward to define an alive twisted PMMH as Algorithm \ref{alg:PMMHalivetwistedsunday}, whose extended target density $\pi^N$ is structured as follows.  The joint density of the simulated variables through time $n$ of Algorithm \ref{alg:SMCalivetwistedsunday} is
\begin{align*}
&{\psi}_{\theta}({\underline{x}}_{1:n},{\underline{a}}_{1:n-1},t_{\omega},\dots,t_{z^{n-1}\omega}) \propto \bigg[\binom{t_{\omega}-1}{N-1} \prod_{i=1;\not=c_1}^{t_\omega}  M_\theta\left(z^{-1}\omega, x_{0}^{a(i)},x_1^i\right)\bigg] \times \\ &\bigg[\prod_{j=2}^{n} \binom{t_{z^{j-1}\omega}-1}{N-1} \prod_{i=1;\not=c_j}^{t_{z^{j-1}\omega}} \frac{W\left(z^{j-2}\omega,x_{j-1}^{a(i)}\right)}{\sum_{l=1}^{t_{z^{j-2}\omega}-1}W\left(z^{j-2}\omega,x_{j-1}^{l}\right)} M_\theta\left(z^{j-2}\omega, x_{j-1}^{a(i)},x_j^i\right)\bigg] \times \\
&\bigg[M_\theta\left(z^{-1}\omega, x_{0}^{a(c)},x_1^c\right)h\left(\omega,x_1^c\right)
\bigg] \times \\ &\bigg[\prod_{j=2}^{n} \bigg\{Q_{\theta}^{z^{j-2}\omega}\left(h(z^{j-1}\omega,\cdot)\right)\left(x_{j-1}^{a(c)}\right)\bigg\} M_\theta\left(z^{j-2}\omega, x_{j-1}^{a(c)},x_j^c\right)h\left(z^{j-1}\omega,x_j^c\right)\bigg],
\end{align*}
where we use ${\underline{a}}_{1:n-1}$ to denote the full ancestry of the twisted and non-twisted particles (note also that $c_n$ is being used to denote the index of the twisted particle at any time step $n$).  One can use this expression to establish an extended target as
\begin{align}\label{eq:qqqqqqqqqqqqq}
 &\pi^N \left( d, \theta, {\underline{x}}_{1:n},{\underline{a}}_{1:n-1}, t_{\omega},\dots,t_{z^{n-1}\omega} \mid y_{1:n} \right) \propto \\ \nonumber &\pi\left(\theta\right) {\psi}_{\theta}({\underline{x}}_{1:n},{\underline{a}}_{1:n-1},t_{\omega},\dots,t_{z^{n-1}\omega}) \widehat{Z}_{\theta,1:n} \frac{W\left(z^{n-1}\omega,x_{n}^d\right)}{\sum_{l=1}^{t_{z^{n-1}\omega}-1} W\left(z^{n-1}\omega,x_{n}^l\right)},\nonumber
\end{align}
where $\pi\left(\theta\right)$ is an appropriate prior for the parameter $\theta$.  In both of the above expressions, it is assumed that, at any time step $n$, each sample $\left(\underline{x}_{n}, t_{z^{n-1}\omega}\right)$ satisfies the following: $\sum_{l=1}^{t_{z^{n-1}\omega}-1} W\left(z^{n-1}\omega,x_{n}^l\right)=N-1 \quad \cap \quad W\left(z^{n-1}\omega,x_{n}^{t_{z^{n-1}\omega}}\right)=1$.
Similarly, the proposal density of the PMMH takes the form
\begin{align*}
 &q^N \left( d, \theta, {\underline{x}}_{1:n},{\underline{a}}_{1:n-1},t_{\omega},\dots,t_{z^{n-1}\omega} \right) \propto \\ &q\left( \theta \mid \zeta \right){\psi}_{\theta}({\underline{x}}_{1:n},{\underline{a}}_{1:n-1},t_{\omega},\dots,t_{z^{n-1}\omega})\frac{W\left(z^{n-1}\omega,x_{n}^d\right)}{\sum_{l=1}^{t_{z^{n-1}\omega}-1} W\left(z^{n-1}\omega,x_{n}^l\right)},
\end{align*}
where $q\left( \theta \mid \zeta \right)$ is the density that proposes a new value $\theta\in\Theta$ conditional on a current accepted value $\zeta\in\Theta$.

\begin{algorithm}
\begin{itemize}
 \item{ Step 0: Set $\theta$ arbitrarily.  All remaining random variables can be sampled from their full conditionals defined by the target \eqref{eq:qqqqqqqqqqqqq}:
 
 - Sample ${\underline{x}}_{1:n},{\underline{a}}_{1:n-1},t_{\omega},\dots,t_{z^{n-1}\omega} \mid \cdots$ via Algorithm \ref{alg:SMCalivetwistedsunday} using parameter value $\theta$.

 - Choose $d$ with probability $\frac{W\left(z^{n-1}\omega,x_{n}^d\right)}{\sum_{l=1}^{t_{z^{n-1}\omega}-1} W\left(z^{n-1}\omega,x_{n}^l\right)}$.
 
 Finally, calculate the marginal likelihood estimate, $\widehat{Z}_{\theta,1:n}$, via \eqref{eq:atpfsg}.}
 
 \item{ Step 1: Sample $\theta^* \sim q\left( \cdot \mid \theta \right)$.  All remaining random variables can be sampled from their full conditionals defined by the target \eqref{eq:qqqqqqqqqqqqq}:
 
 - Sample ${\underline{x}}_{1:n}^*,{\underline{a}}_{1:n-1}^*,t_{\omega}^*,\dots,t_{z^{n-1}\omega}^* \mid \cdots$ via Algorithm \ref{alg:SMCalivetwistedsunday} using parameter value $\theta^*$.

 - Choose $d^*$ with probability $\frac{W\left(z^{n-1}\omega,x_{n}^{d^*}\right)}{\sum_{l=1}^{t_{z^{n-1}\omega}-1} W\left(z^{n-1}\omega,x_{n}^l\right)}$.

 Finally, calculate the marginal likelihood estimate, $\widehat{Z}_{\theta^*,1:n}$, via \eqref{eq:atpfsg}.}

 \item{ Step 2: With acceptance probability
 \begin{align*}
  1 &\wedge \frac{\pi(\theta^*)}{\pi(\theta)} \frac{q(\theta \mid \theta^*)}{q(\theta^* \mid \theta)}\frac{\widehat{Z}_{\theta^*,1:n}}{\widehat{Z}_{\theta,1:n}},
 \end{align*}
 set $d=d^*$, $\theta=\theta^*$, ${\underline{x}}_{1:n}={\underline{x}}_{1:n}^*$, ${\underline{a}}_{1:n-1}={\underline{a}}_{1:n-1}^*$, and $t_{\omega},\dots,t_{z^{n-1}\omega}=t_{\omega}^*,\dots,t_{z^{n-1}\omega}^*$.
 
 Return to the beginning of Step 1.}
\end{itemize}
\caption{\label{alg:PMMHalivetwistedsunday}Alive twisted PMMH}
\end{algorithm}
\par

\setcounter{chapter}{9}
\setcounter{section}{9}
\setcounter{subsection}{0}
\setcounter{figure}{0}
\setcounter{equation}{0} 
\noindent {\bf 9. Implementation of alive twisted PMMH}

In the next numerical illustration, we compare the convergence of Algorithm \ref{alg:PMMHalivetwistedsunday} to that of the PMMH employing the non-twisted alive particle filter (see \cite{Jasra_2013}).  We consider a stochastic volatility model which is similar to the one appearing in \cite{Jasra_2013}:
\begin{align*}
K_0 &\sim \mathcal{N}\left( 0, \nu^2 \right)\\ \nonumber
K_{n} \mid \left( K_{1:n-1}=k_{1:n-1}, Y_{1:n-1}=y_{1:n-1} \right) &\sim \mathcal{N}\left( Fk_{n-1}, \nu^2 \right) = f_\theta \left( k_{n} \mid k_{n-1} \right)\\ \nonumber
Y_{n} \mid \left( K_{1:n}=k_{1:n}, Y_{1:n-1}=y_{1:n-1} \right) &\sim \exp{\left(k_n/2\right)}\mathcal{S}\left(\alpha,0.05,\gamma,0\right), \nonumber
\end{align*}
for $1 \leq n \leq T$.  This model is more challenging than the linear Gaussian HMM \eqref{eq:WhiteleyHMM2012} because the probability density functions of the observations are not defined for all parameter values of the stable distribution.  However, the stable distribution is Gaussian when the stability parameter is $\alpha=2$.  Thus, this section uses the same approximation for $h$ that was used in Section 7, and only when calculating 
\begin{equation}\label{eq:frankiegothisboostershotstoday440pm}
h\left(z^{n-1}\omega,x_n\right)=\pi_\theta\left( y_1(z^{n-1+l}\omega) \mid k_n \right)_{\{l=5\}}=\pi_\theta\left( y_{n}(z^l\omega) \mid k_n \right)_{\{l=5\}},
\end{equation}
we assume that the density of the observations is Gaussian.

The observations are daily logarithmic returns of the S\&P 500.  We consider three datasets that each begin with 10th December 2009 and run for $T=200$, $T=500$, or $T=700$ time steps (see Figure \ref{fig:nocountryforoldmen0}).  The datasets are chosen for their different time lengths, to study how $T$ affects the relative performance of the algorithms.

\begin{figure}[H]
\centering
\scalebox{0.45}{\includegraphics[trim = 35mm 90mm 40mm 107mm, clip]{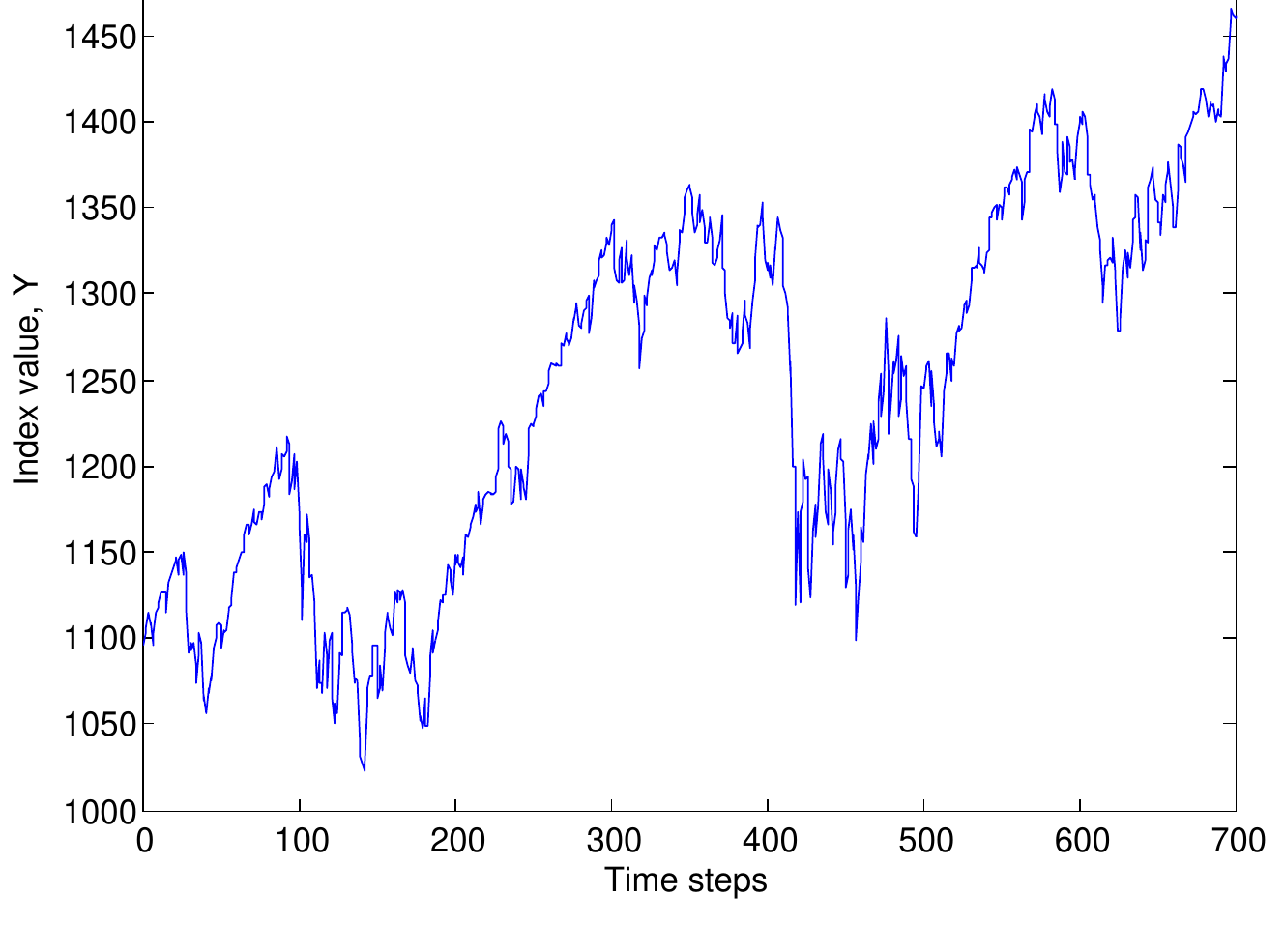}}
\scalebox{0.45}{\includegraphics[trim = 35mm 90mm 40mm 107mm, clip]{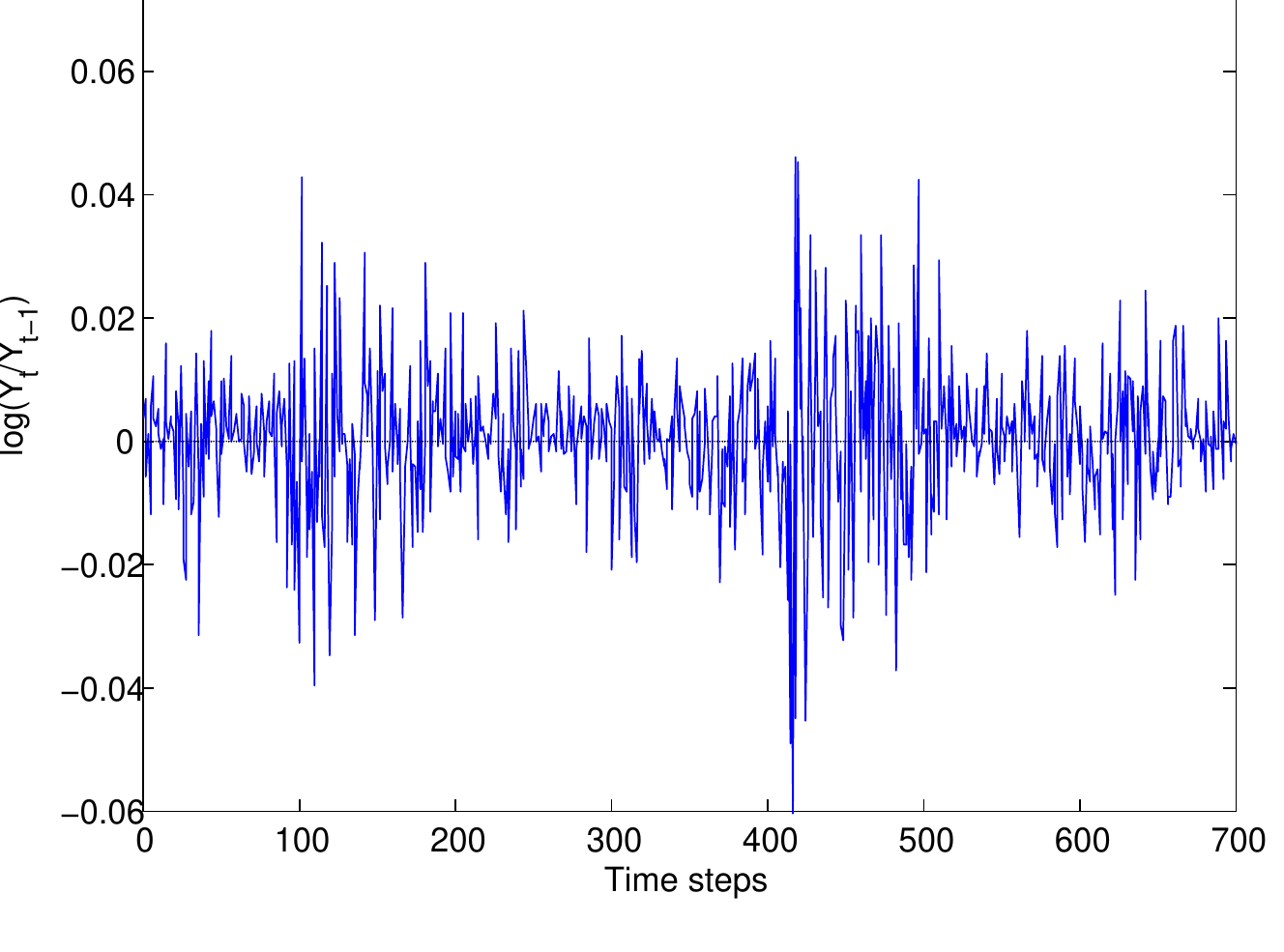}}
\caption[$\quad$Daily closing index value of S\&P 500 and the daily logarithmic returns]{Daily closing index value of S\&P 500 (left) and the daily logarithmic returns (right).  In each plot, the first time step corresponds to 10th December 2009.}
\label{fig:nocountryforoldmen0} 
\end{figure}

Both PMMH algorithms are used to infer the scalars $F\sim\mathcal{N}\left( 0, 0.15 \right)$, ${\nu^{-2}}\sim\mathcal{G}a\left( 2, 100 \right)$, and ${\gamma^{-1}}\sim\mathcal{G}a\left( 2, 1 \right)$.  We try different algorithmic settings for each of the three datasets, choosing $\alpha\in\{1.75,1.95\}$ and $N\in\{50,100,1000\}$; both PMMH schemes have approximately equal running times for equal values of $N$.  Across all datasets, we fix the number of PMMH iterations to $M=100000$, and we fix $\epsilon=3.5$.  Five runs of both algorithms are repeated per group of algorithmic settings.  The proposals for the parameters $\nu^2$ and $\gamma$ are log normal random walks: $\text{log}(\theta^*)=\text{log}(\theta)+\iota$, $\iota\sim\mathcal{N}(0,0.5)$.  The proposal for $F$ is a normal random walk: $\theta^*\sim\mathcal{N}(\theta,1)$.  We track the convergence of the PMMH algorithms using the autocorrelation functions (ACFs) and the trace plots of $F$, $\nu^2$, and $\gamma$.

Both algorithms seem to perform similarly when $\alpha=1.75$, regardless of the values of $T$ or $N$ (results not shown).  This output suggests that \eqref{eq:frankiegothisboostershotstoday440pm} is a poor approximation of the true eigenfunction $h$ in the case where $\alpha=1.75$.  However, when $\alpha=1.95$, the ACF plots (see Figure \ref{fig:nocountryforoldmen7}) show the alive twisted PMMH slightly outperforming the non-twisted alive PMMH.  Thus, it appears \eqref{eq:frankiegothisboostershotstoday440pm} is a fair approximation to the true, optimal $h$ when $\alpha=1.95$.  We only present the output for $T=500$ (see Figures \ref{fig:nocountryforoldmen7} and \ref{fig:nocountryforoldmen8}), as the results are similar for the slightly different values of $T=200$ and $T=700$.

\begin{figure}[H]
\centering
\scalebox{0.32}{\includegraphics[trim = 35mm 90mm 40mm 107mm, clip]{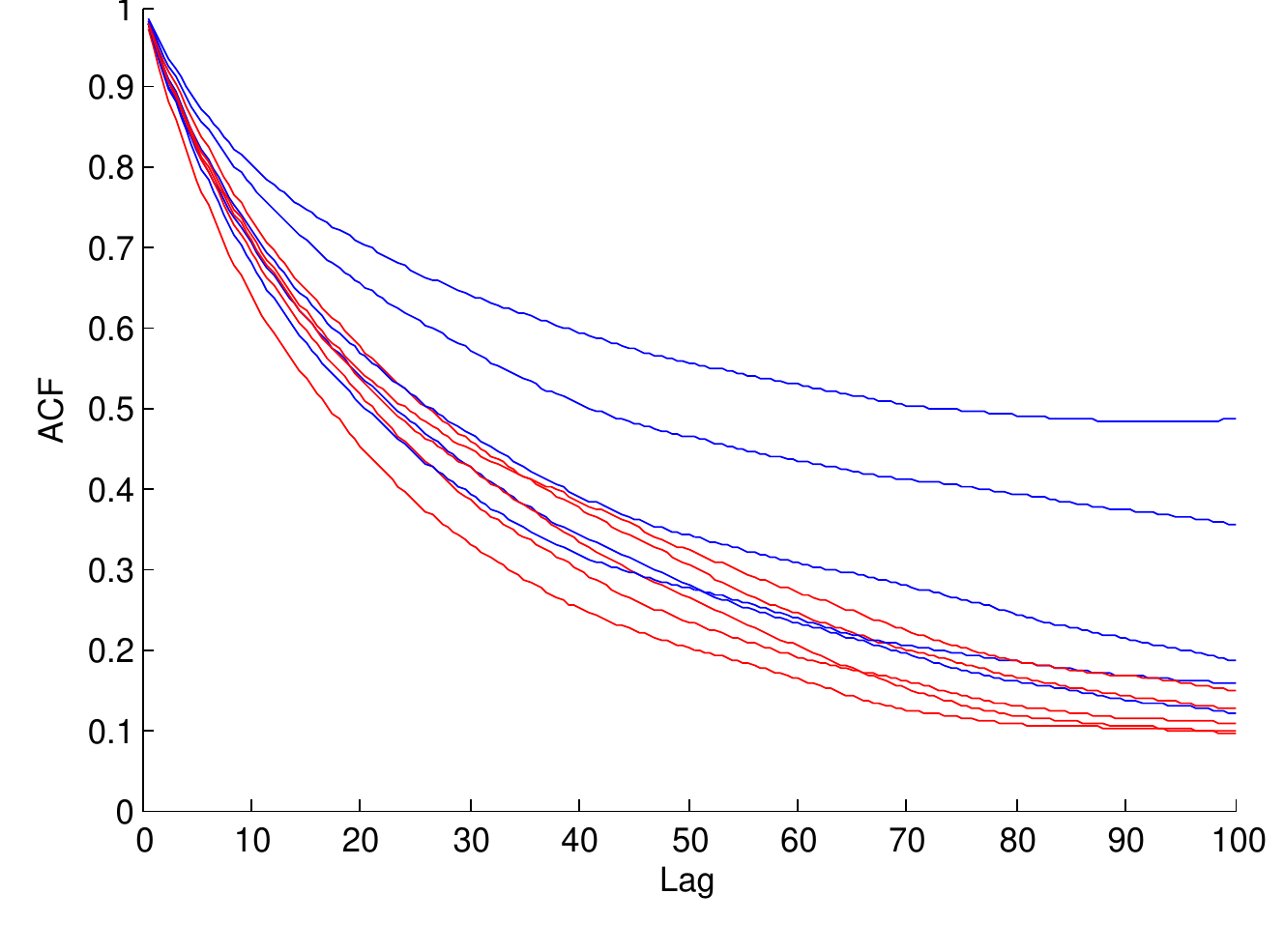}}
\scalebox{0.32}{\includegraphics[trim = 35mm 90mm 40mm 107mm, clip]{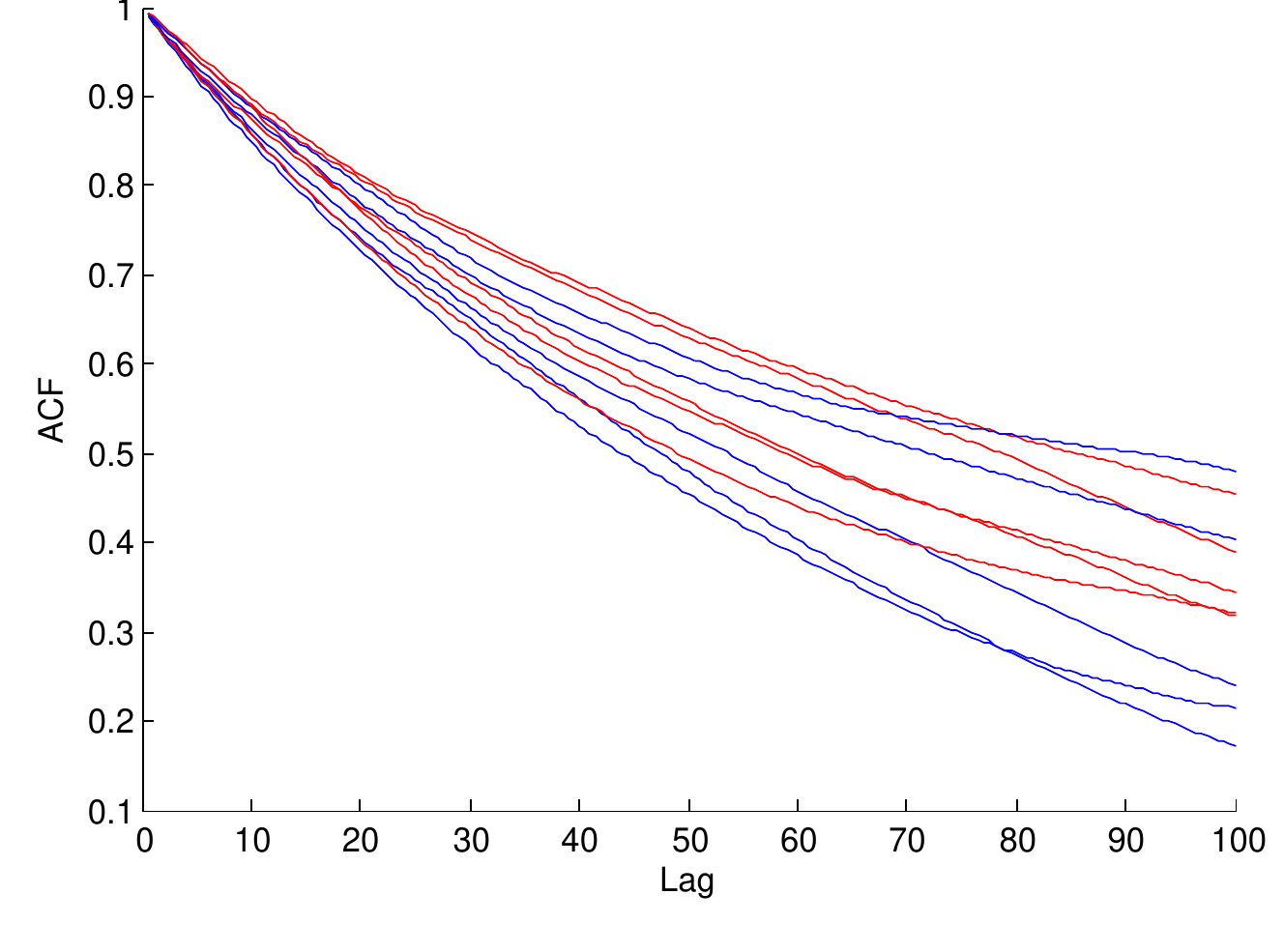}}
\scalebox{0.32}{\includegraphics[trim = 35mm 90mm 40mm 107mm, clip]{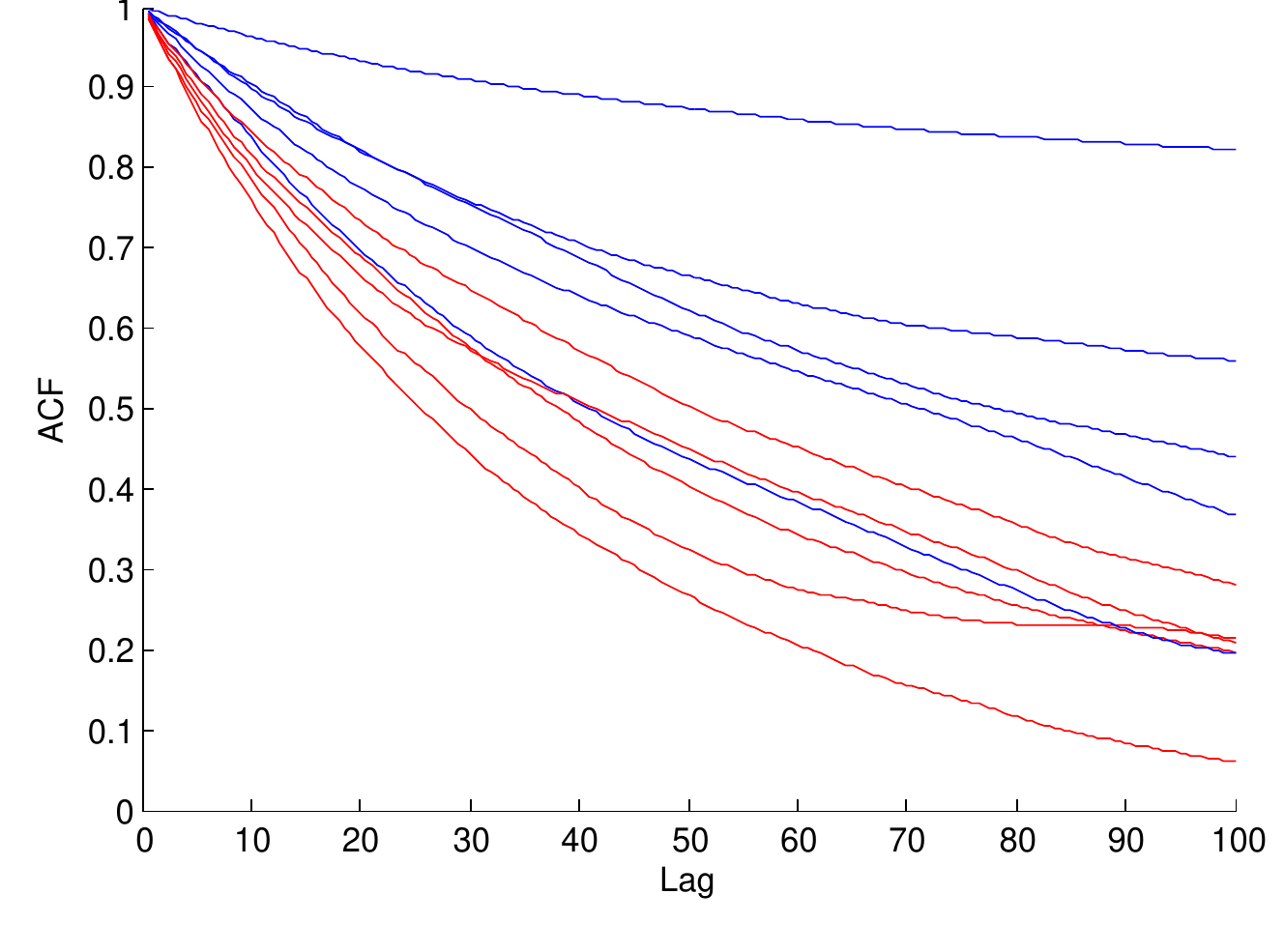}} \\
\scalebox{0.32}{\includegraphics[trim = 35mm 90mm 40mm 107mm, clip]{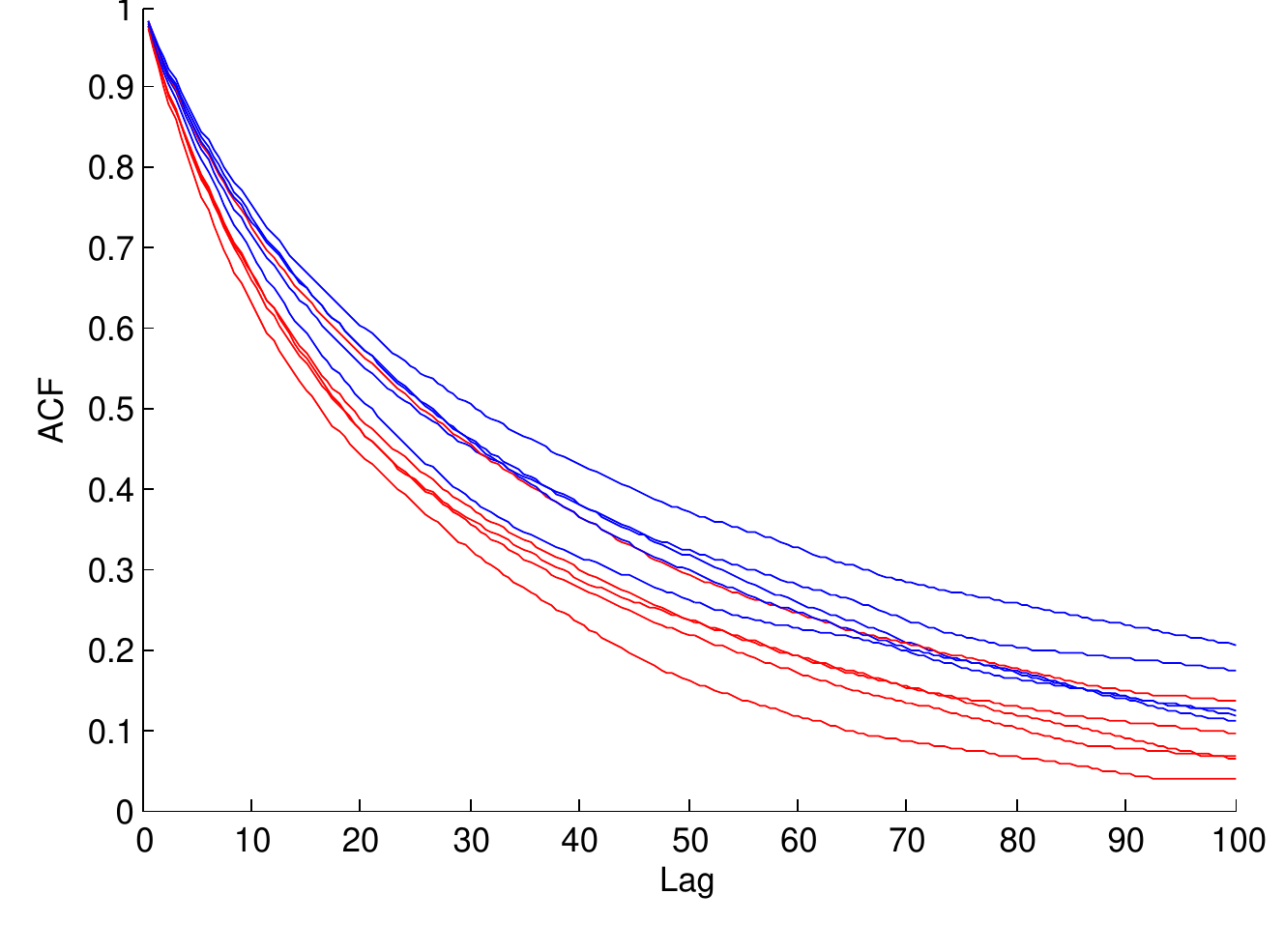}}
\scalebox{0.32}{\includegraphics[trim = 35mm 90mm 40mm 107mm, clip]{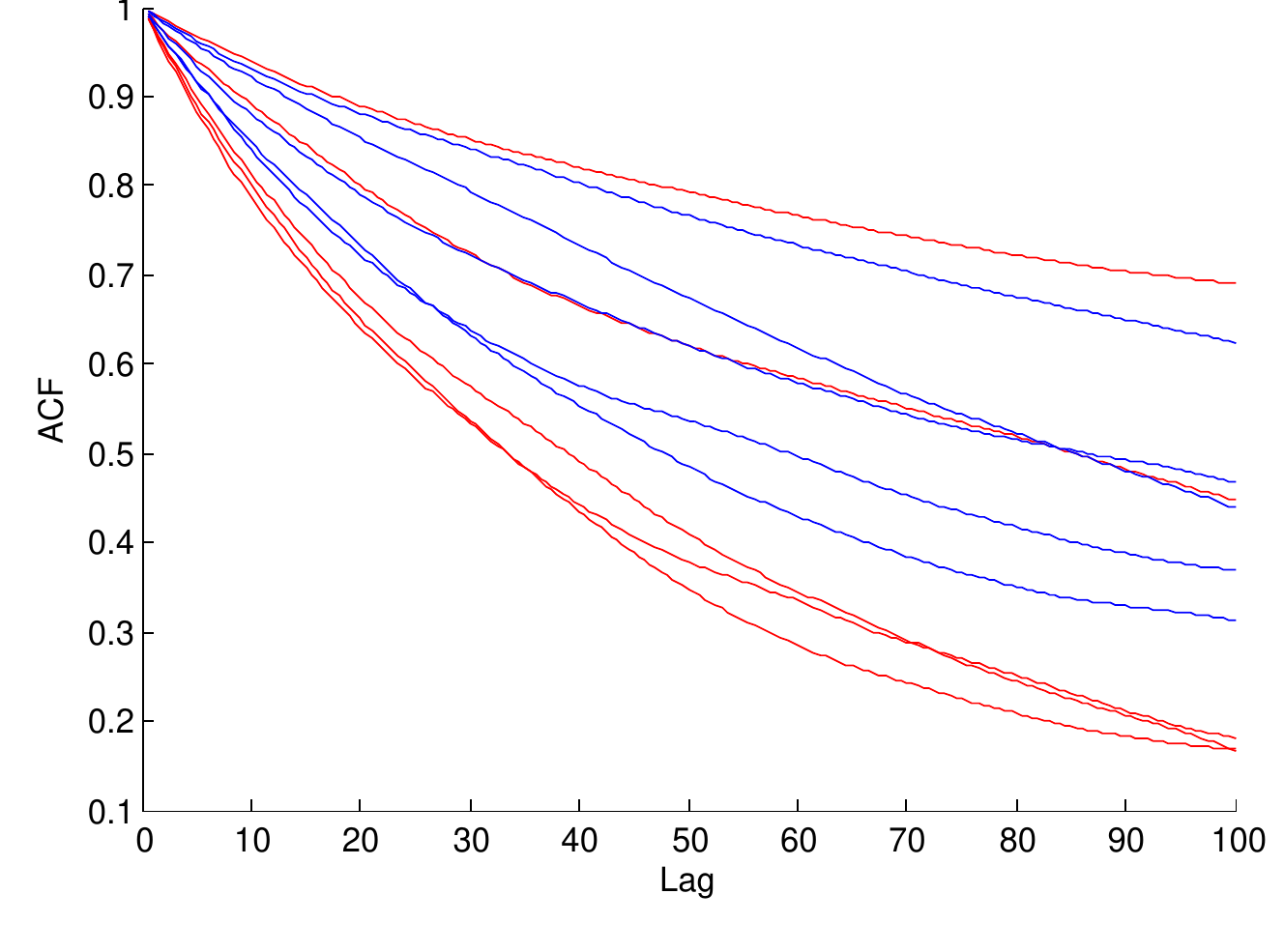}}
\scalebox{0.32}{\includegraphics[trim = 35mm 90mm 40mm 107mm, clip]{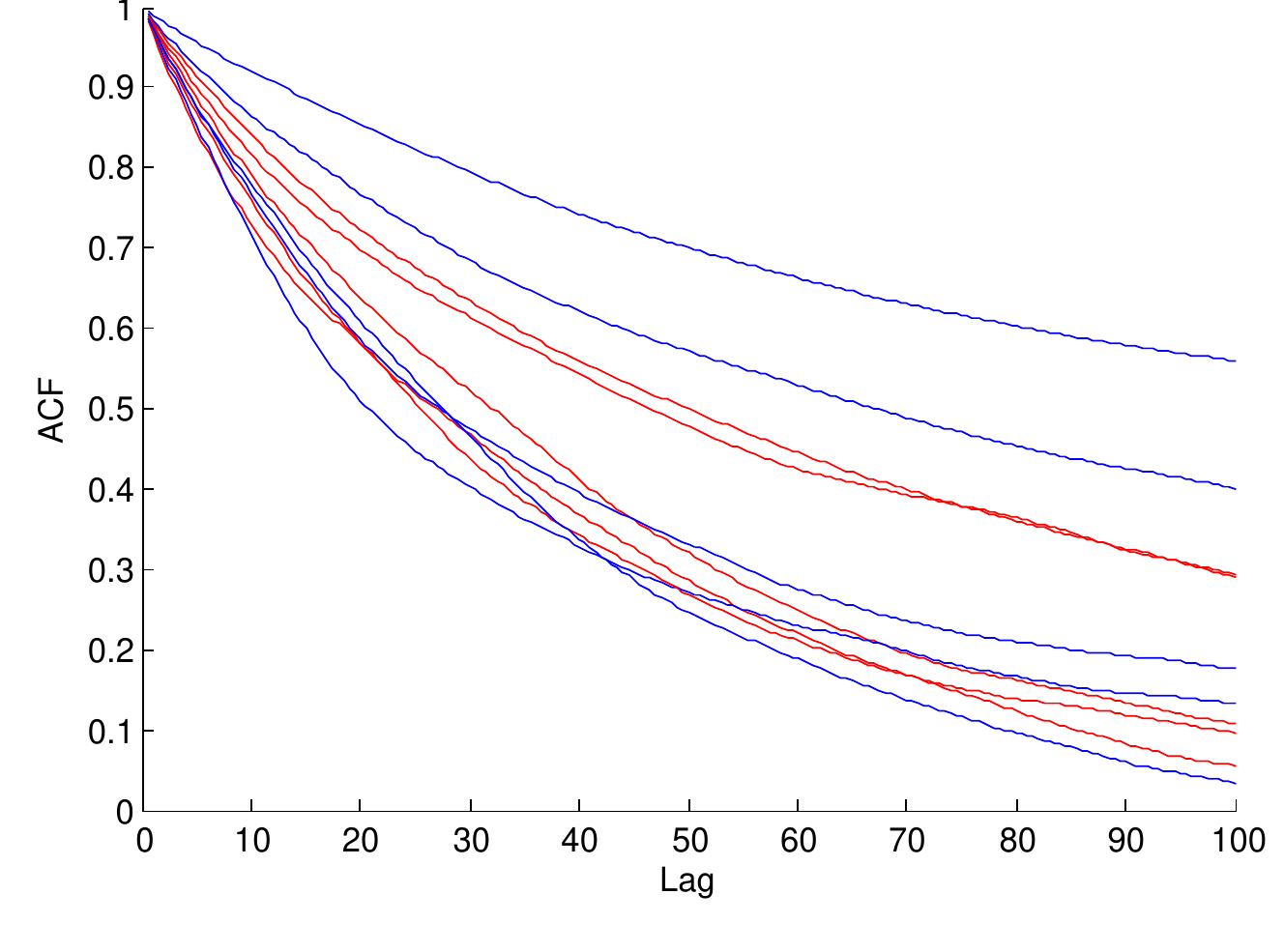}}
\caption[$\quad$Comparing alive PMMH to alive twisted PMMH]{Results for $N=50$ (top) and $N=100$ (bottom).  ACF plots for $F$, $\nu^2$, and $\gamma$ (from left to right).
The alive PMMH corresponds to the blue lines, and the alive twisted PMMH corresponds to the red lines.}
\label{fig:nocountryforoldmen7} 
\end{figure}

\begin{figure}[H]
\centering
\scalebox{0.3}{\includegraphics[trim = 40mm 90mm 40mm 107mm, clip]{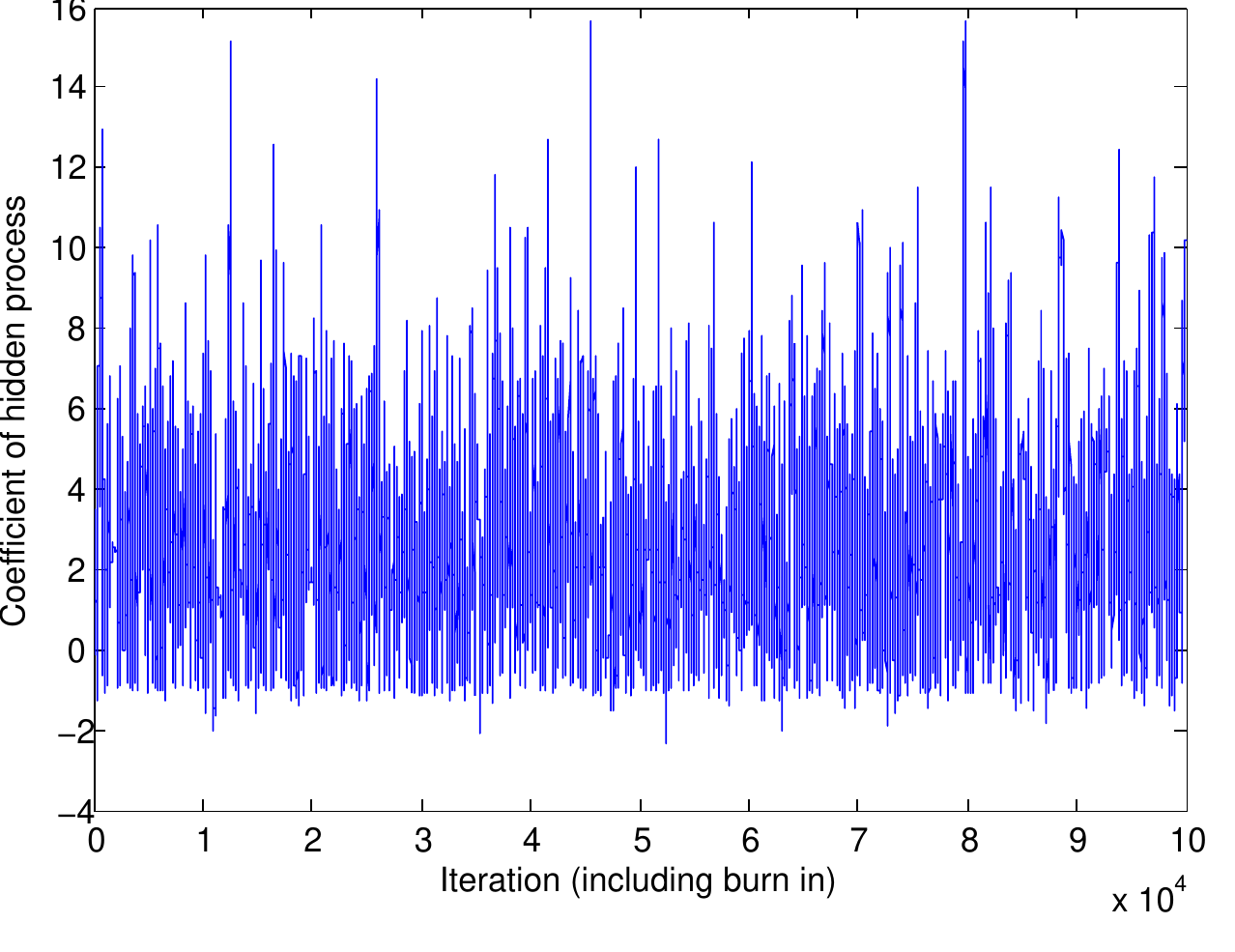}}
\scalebox{0.3}{\includegraphics[trim = 40mm 90mm 40mm 107mm, clip]{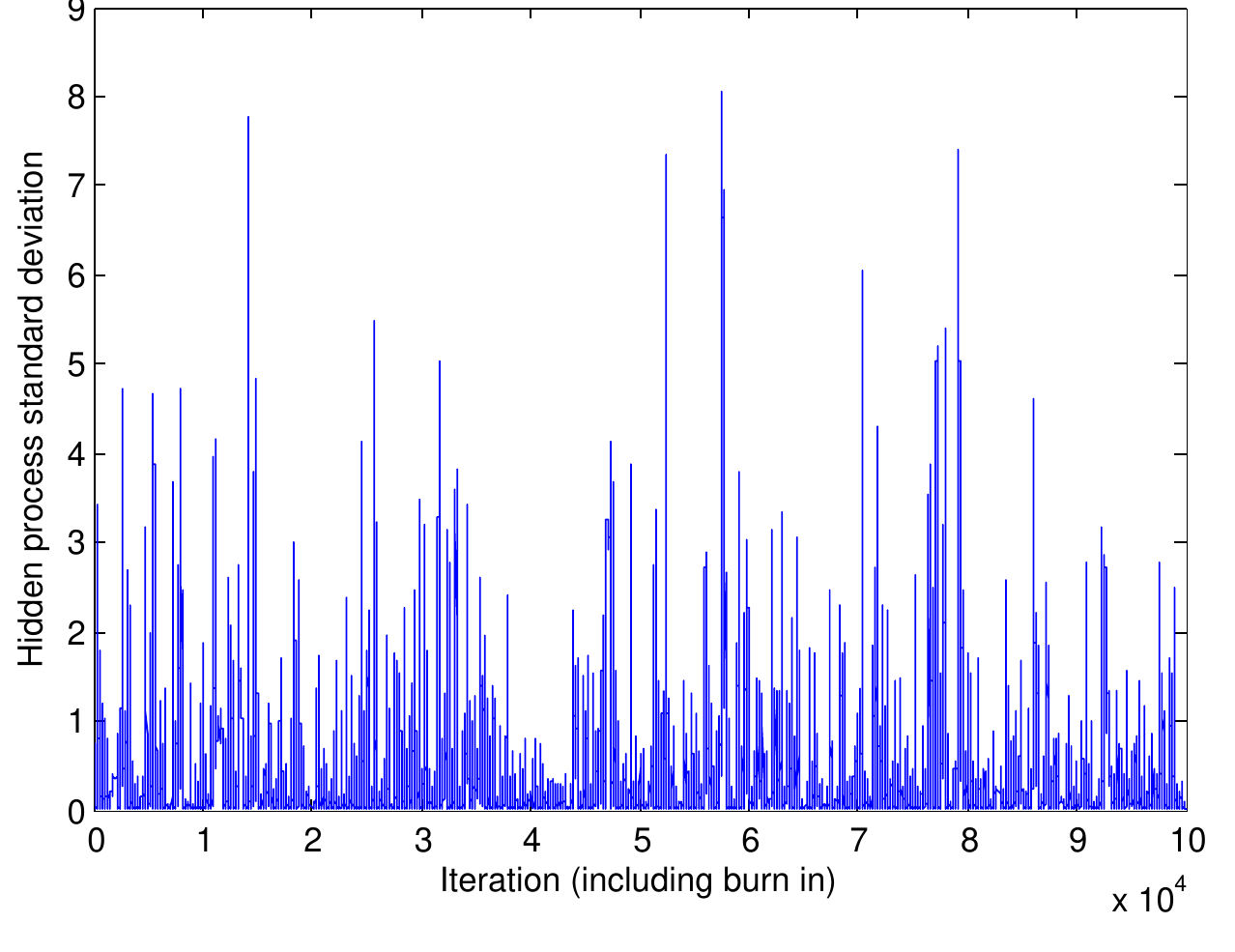}}
\scalebox{0.3}{\includegraphics[trim = 40mm 90mm 40mm 107mm, clip]{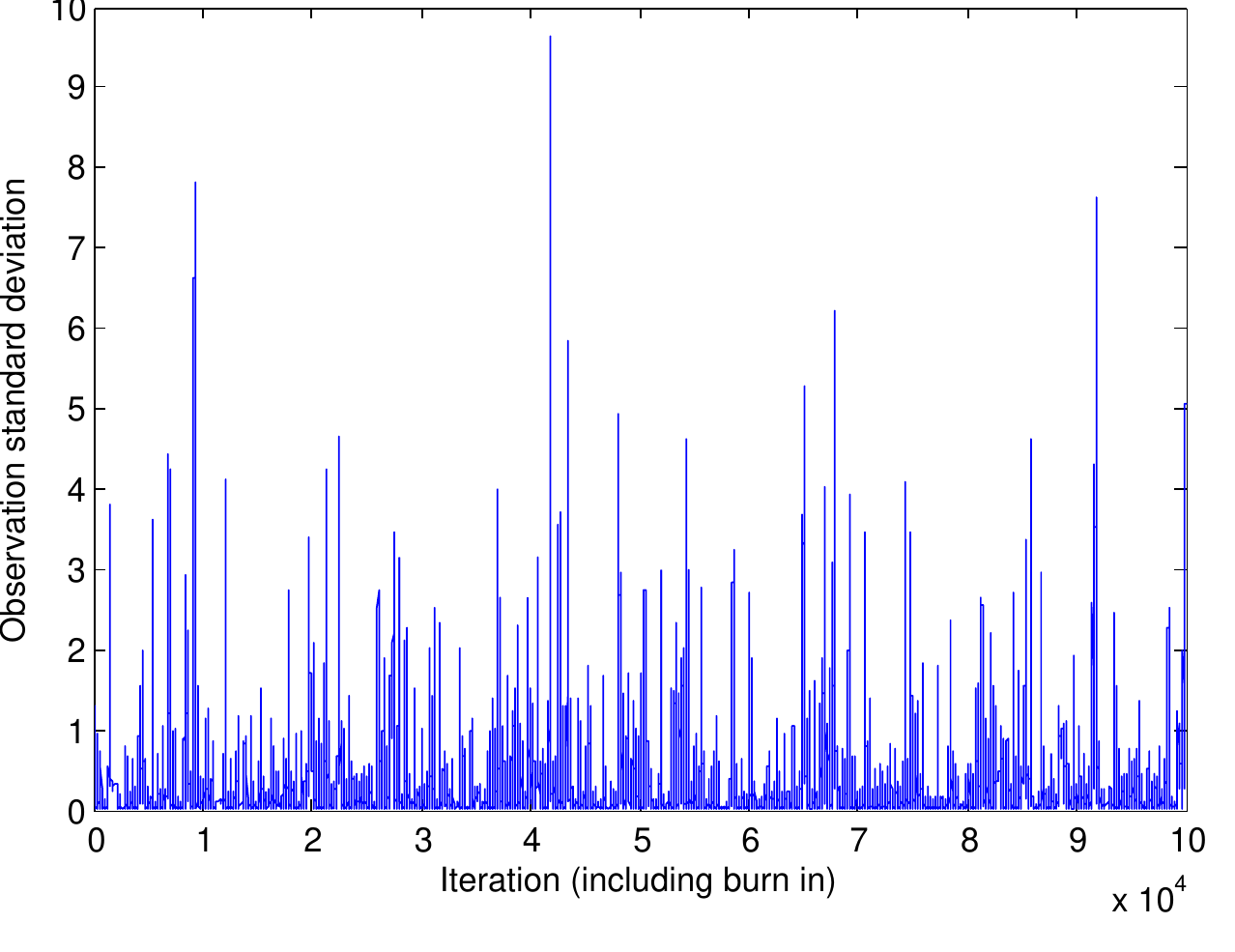}} \\
\scalebox{0.3}{\includegraphics[trim = 40mm 90mm 40mm 107mm, clip]{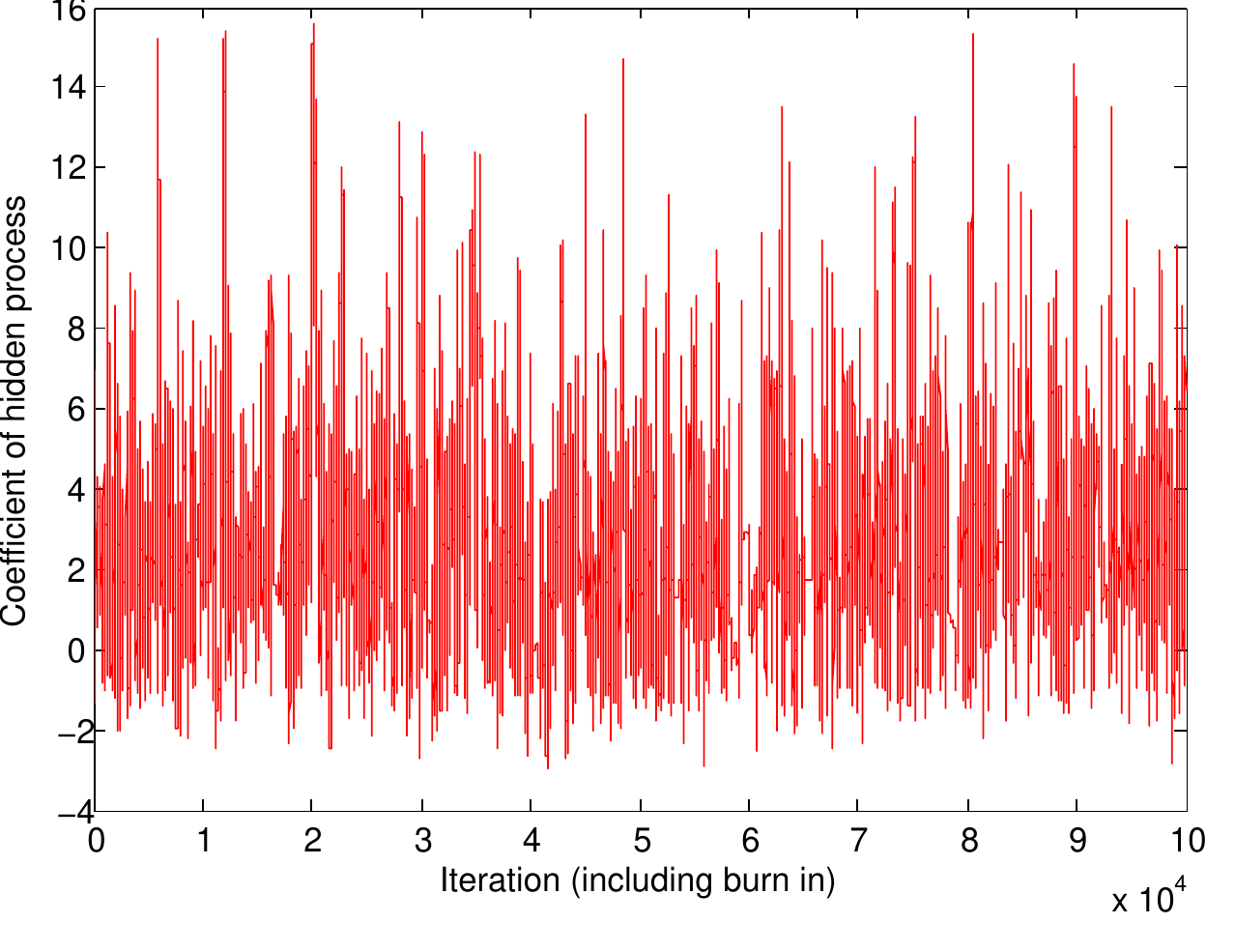}}
\scalebox{0.3}{\includegraphics[trim = 40mm 90mm 40mm 107mm, clip]{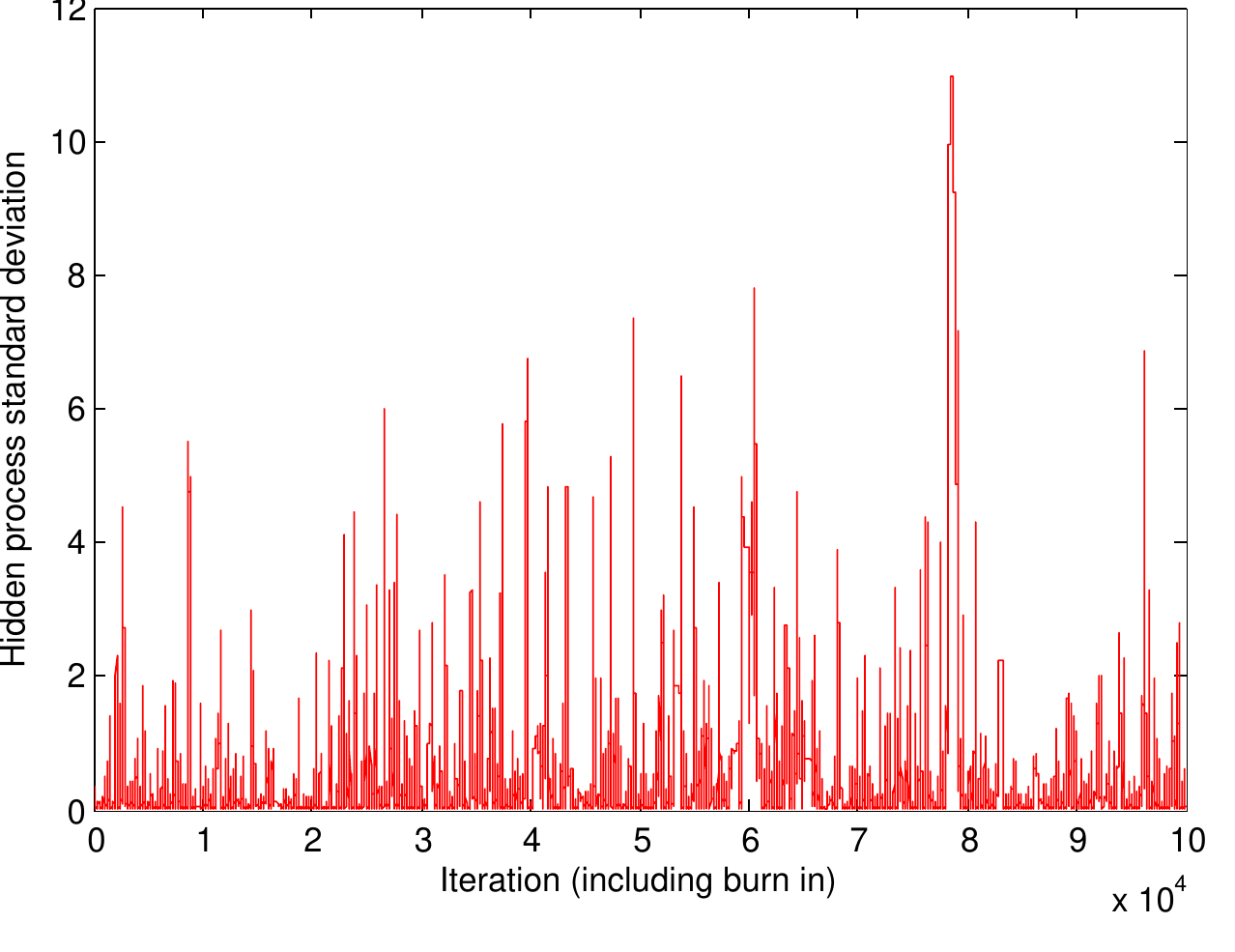}}
\scalebox{0.3}{\includegraphics[trim = 40mm 90mm 40mm 107mm, clip]{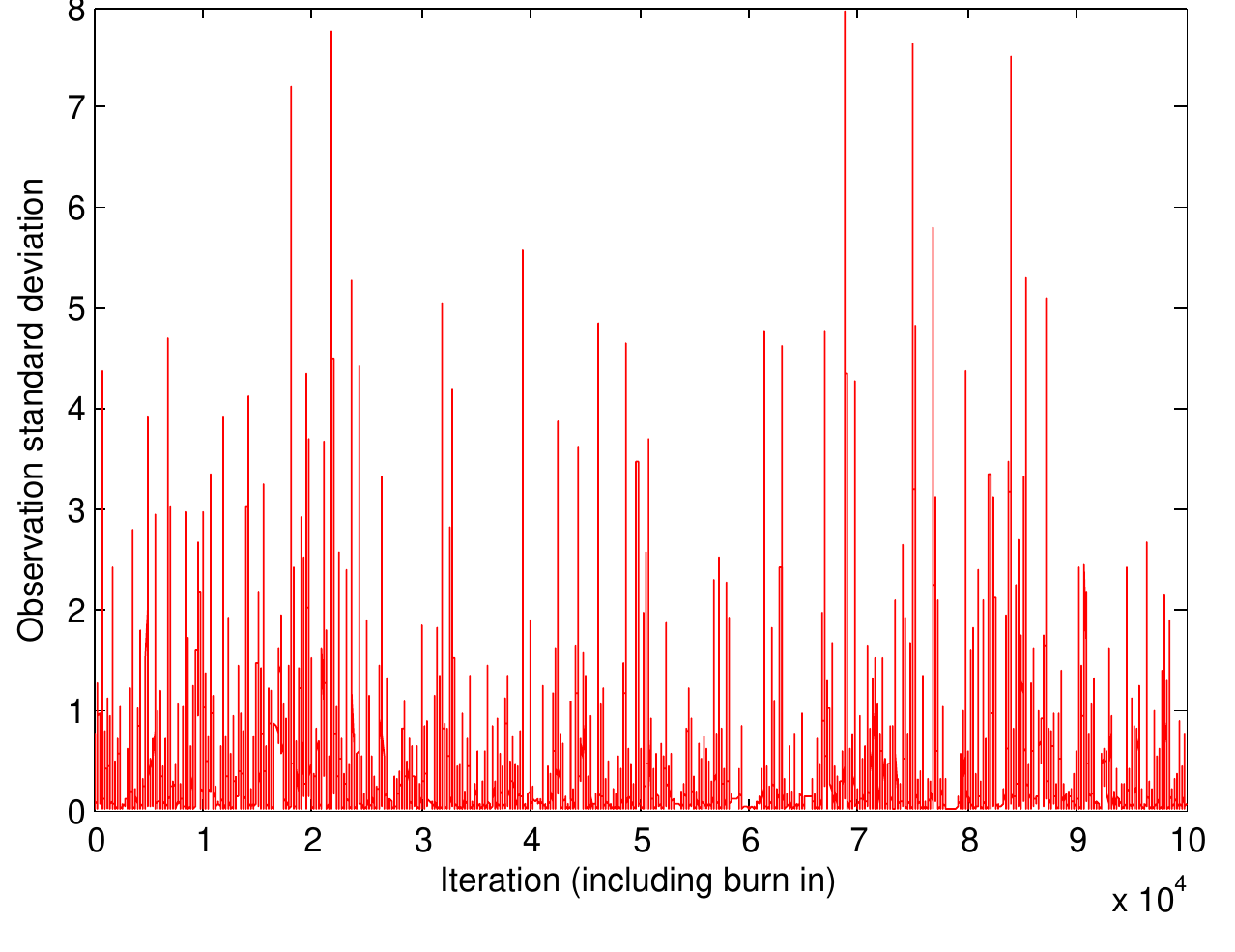}}
\caption[$\quad$Comparing alive PMMH to alive twisted PMMH]{Results for $N=100$.  Top: alive PMMH; bottom: alive twisted PMMH.  Trace plots for $F$, $\nu^2$, and $\gamma$ (from left to right).
Only one of the five repetitions of each simulation is shown.}
\label{fig:nocountryforoldmen8} 
\end{figure}
\par

\setcounter{chapter}{10}
\setcounter{section}{10}
\setcounter{subsection}{0}
\setcounter{figure}{0}
\setcounter{equation}{0} 
\noindent {\bf 10. Discussion}

In this paper, we introduced a change of measure on the alive particle filter of \cite{Jasra_2013} to reduce the variance of its estimate of the normalising constant.  By adopting similar assumptions as in \cite{Whiteley_2013}, we followed the theoretical framework developed in \cite{Whiteley_2013} to determine the unique, optimal change of measure for the alive algorithm.  That optimal choice also happens to be the same unique choice discovered in \cite{Whiteley_2013}, which, unlike this paper, does not consider HMMs whose observations have unknown or intractable likelihood densities.

We used our theoretical findings to formalise an alive twisted particle filter and an alive twisted PMMH.   Both methods were implemented on HMMs, with the PMMH being used in a real world example.  The numerical analyses illustrated that when the change of measure on the alive algorithms is not a close approximation of the ideal change in measure, twisting may not be worthwhile.  However, when a good approximation of the ideal change in measure was available, our algorithms did exhibit superior performance in some scenarios.

The assumptions used to prove our theoretical results may be difficult to verify.  Assumption (B\ref{hyp:H1anal}), in particular, would be very hard to verify when one knows little about the process producing the observations.  A future work might consider proving the same results under weaker assumptions, although that will likely not be a straightforward task.  Additionally, other future work might investigate possible applications outside of HMMs, such as in the rare events literature \cite{Cerou_2012} or ABC approximations of epidemiological models \cite{Del_Moral_2012}.
\par


\noindent {\large\bf Acknowledgment}
The first author was partially supported through a Roth Studentship at Imperial College London.  Both authors were also supported by an MOE Singapore grant.
\par


\begin{appendix}
\setcounter{chapter}{1}
\renewcommand{\theequation}{A.\arabic{equation}}
\setcounter{equation}{0} 
\noindent {\bf Appendix}

\noindent {\bf A. Proof of the main result from Section 6}

We first illustrate that ${Z}_{\theta,1:n}$ is actually finite in the limit as $n\rightarrow\infty$, for otherwise there would be no circumstance under which $\Upsilon\left(\widetilde{\mathbf{M}}_\theta\right)=0$.  The result is established as the following proposition.
 \begin{prop}\label{theo:analoguetoProp1NWAL}
Assume (B\ref{hyp:H1anal}) and (B\ref{hyp:H2anal}).  There exists a finite, real-valued constant $\Lambda$, which is independent of the initial distribution $\mu$, such that
$$
\frac{1}{n}\log\mu Q_{\theta,n}^\omega\left(1\right) \rightarrow \Lambda
$$
as $n\rightarrow\infty$, $\mathbb{P}-$almost surely.
\end{prop}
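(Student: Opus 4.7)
The plan is to apply Kingman's subadditive ergodic theorem to the sequence $\Lambda_n(\omega) := \log \nu Q_{\theta,n}^\omega(1)$, where $\nu$ is the reference probability measure supplied by (B\ref{hyp:H2anal}), and then to transfer the conclusion to an arbitrary initial distribution $\mu$ by means of the uniform ratio bounds. First I would invoke the semigroup property $Q_{\theta,n+m}^\omega(\cdot) = Q_{\theta,n}^\omega\bigl(Q_{\theta,m}^{z^n\omega}(\cdot)\bigr)$, which follows directly from the definition of $Q$ together with the way the shift $z$ encodes the time-evolution of the observations. Using the uniform ratio bound $\sup_{x,u} Q_{\theta,m}^{z^n\omega}(1)(x)/Q_{\theta,m}^{z^n\omega}(1)(u) \leq C$ from (B\ref{hyp:H2anal}) one can replace $Q_{\theta,m}^{z^n\omega}(1)$ by a constant up to a multiplicative factor $C$ and factor the integral, giving the sandwich
$$C^{-1}\,\nu Q_{\theta,n}^\omega(1)\cdot\nu Q_{\theta,m}^{z^n\omega}(1)\ \leq\ \nu Q_{\theta,n+m}^\omega(1)\ \leq\ C\,\nu Q_{\theta,n}^\omega(1)\cdot\nu Q_{\theta,m}^{z^n\omega}(1).$$
Taking logarithms yields $|\Lambda_{n+m}(\omega) - \Lambda_n(\omega) - \Lambda_m(z^n\omega)| \leq \log C$.

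Next I would set $\widetilde{\Lambda}_n(\omega) := \Lambda_n(\omega) + \log C$ so that $\widetilde{\Lambda}_{n+m}(\omega) \leq \widetilde{\Lambda}_n(\omega) + \widetilde{\Lambda}_m(z^n\omega)$, i.e., $\{\widetilde{\Lambda}_n\}$ is genuinely subadditive along the measure-preserving, ergodic shift $z$ supplied by (B\ref{hyp:H1anal}). The integrability hypothesis of Kingman's theorem is satisfied because (B\ref{hyp:H2anal}) gives two-sided uniform bounds $0 < \inf_\omega \nu Q_{\theta,1}^\omega(1)$ and $\sup_\omega \nu Q_{\theta,1}^\omega(1) < \infty$ (the lower bound uses the minorisation $M_\theta(\omega,x,\cdot) \geq \epsilon_-\nu(\cdot)$ and the upper bound uses $M_\theta(\omega,x,\cdot) \leq \epsilon_+\nu(\cdot)$). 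Kingman's theorem then produces a finite deterministic constant $\Lambda$ (deterministic by ergodicity) with $n^{-1}\widetilde{\Lambda}_n \to \Lambda$ a.s., and since the additive correction $\log C$ disappears in the Cesàro limit we obtain $n^{-1}\Lambda_n \to \Lambda$ $\mathbb{P}$-a.s.

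To remove the dependence on $\nu$, I would use the uniform ratio bound on $Q_{\theta,n}^\omega(1)$ once more: for any probability measure $\mu$,
$$\frac{\mu Q_{\theta,n}^\omega(1)}{\nu Q_{\theta,n}^\omega(1)} \in [C^{-1}, C],$$
so $|\log \mu Q_{\theta,n}^\omega(1) - \Lambda_n(\omega)| \leq \log C$, and dividing by $n$ and passing to the limit gives $n^{-1}\log \mu Q_{\theta,n}^\omega(1) \to \Lambda$ a.s., independently of $\mu$. Finiteness of $\Lambda$ itself follows at once from the $n$-step versions of the minorisation/majorisation in (B\ref{hyp:H2anal}), which bound $\Lambda_n$ between $n\log\epsilon_-$ and $n\log\epsilon_+$ (plus uniformly bounded remainders).

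The main obstacle I anticipate is the careful execution of the sandwiched submultiplicativity in the first step: one must keep track of how the shift $z^n$ correctly translates the observations that enter the definition of $Q_{\theta,m}^{z^n\omega}$, and verify that the ratio bound in (B\ref{hyp:H2anal}) really does separate cleanly inside the inner integral. A secondary issue is the measurability of $\omega \mapsto \Lambda_n(\omega)$ required by Kingman's theorem; this follows from the construction $Y_n(\omega) = \omega(n)$ and the fact that $Q_{\theta,n}^\omega$ depends measurably on the finite observation block $(Y_1(\omega),\dots,Y_n(\omega))$, but should be stated explicitly for completeness.
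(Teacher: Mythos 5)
Your proposal is correct and follows essentially the same route as the paper: establish (approximate) subadditivity of $\log\nu Q_{\theta,n}^\omega(1)$ up to a constant correction, apply Kingman's subadditive ergodic theorem under the ergodic shift of (B1), and transfer the limit to an arbitrary $\mu$ via the uniform ratio bounds of (B2). The only cosmetic difference is that the paper obtains exact superadditivity by multiplying by the minorisation constant $\underline{g}=\inf_{(\omega,x)}\nu Q_\theta^\omega(x)$, whereas you obtain exact subadditivity by adding $\log C$; both corrections vanish in the limit.
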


\begin{proof}[Proof of Proposition \ref{theo:analoguetoProp1NWAL}]
This proof closely follows the proof of \cite[Proposition 1]{Whiteley_2013}, with only some minor modifications.  Assuming (B\ref{hyp:H2anal}), we can define a constant $\underline{g}=\inf_{(\omega,x)} \nu Q_\theta^\omega\left( x \right)>0$ which also must be finite. Consider a sequence of random variables $\{ \kappa_n^\omega \}_{n \geq 1}$ where $\kappa_n^\omega=\nu Q_{\theta,n-1}^\omega\left(1\right) \underline{g}$.
 We know $\kappa_n^\omega>0$, and as
 \begin{align*}
 \kappa_{n+p}^\omega = \nu Q_{\theta,p+n-1}^\omega \left(1\right) \underline{g}
 = \nu Q_{\theta,p}^\omega Q_{\theta,n-1}^{z^{p}\omega} \left(1\right) \underline{g}
 = \nu Q_{\theta,p-1}^\omega Q_{\theta}^{z^{p-1}\omega} Q_{\theta,n-1}^{z^{p}\omega} \left(1\right) \underline{g} \geq \kappa_p^\omega \kappa_n^{z^p \omega},
 \end{align*}
 we have
 \begin{equation}\label{eq:someevidence2}
  -\log \kappa_{n+p}^\omega \leq -\log \kappa_p^\omega - \log \kappa_n^{z^p \omega}.
 \end{equation}
 Furthermore, (B\ref{hyp:H2anal}) and the definition of $\underline{g}$ ensure that each $\kappa_n$ is finite, and so
\begin{equation}\label{eq:someevidence3}
 \int_\Omega -\log \kappa_n^\omega \mathbb{P}(d\omega) > -\infty.
\end{equation}
Considering \eqref{eq:someevidence2}, \eqref{eq:someevidence3}, and the ergodicity of the shift operator assumed by (B\ref{hyp:H1anal}), we can apply Kingman's subbadditive ergodic theory \cite{Kingman_1976theorem} to obtain
\begin{equation}\label{eq:someevidence4}
 \frac{1}{n} \log \kappa_n^\omega \rightarrow \Lambda,
\end{equation}
as $n\rightarrow\infty$, $\mathbb{P}-$almost surely, where $\Lambda$ is a finite constant.

As (B\ref{hyp:H2anal}) implies
\begin{equation*}
 0 < \frac{\kappa_n^\omega}{\mu Q_{\theta,n}^\omega\left(1\right) }
 \leq \frac{\nu Q_{\theta,n}^\omega\left(1\right) }{\mu Q_{\theta,n}^\omega\left(1\right) } = \frac{\nu Q_{\theta}^\omega Q_{\theta,n-1}^{z\omega}\left(1\right) }{\mu Q_{\theta}^\omega Q_{\theta,n-1}^{z\omega}\left(1\right) } \leq \Delta_2, 
\end{equation*}
we have
\begin{equation}\label{eq:someevidence5}
 \sup_{\omega\in\Omega}\bigg| \frac{1}{n}\log \kappa_n^\omega - \frac{1}{n}\log\mu Q_{\theta,n}^\omega\left(1\right)\bigg|\leq \frac{1}{n}\log\Delta_2.
\end{equation}
Considering \eqref{eq:someevidence4} and \eqref{eq:someevidence5}, we find $(1/n)\log\mu Q_{\theta,n}^\omega\left(1\right) \rightarrow \Lambda$, as $n\rightarrow\infty$, $\mathbb{P}-$almost surely.
\end{proof}

The next two propositions clearly define the triple $\left(\eta,h,\lambda\right)$ that uniquely satisfies the system of equations \eqref{eq:ihavetoleavesoonforcoventgarden1558}.  It is not yet shown that $h$ is the optimal measure by which Algorithm \ref{alg:SMCalivetwistedsunday} should be twisted.  Before proving that, we have to first show that the measure exists.  The fact that the triple $\left(\eta,h,\lambda\right)$ uniquely satisfies \eqref{eq:ihavetoleavesoonforcoventgarden1558} is used in calculations in later parts of the proof of the main result.

\begin{prop}\label{theo:analoguetoProp2NWAL}
Assume (B\ref{hyp:H2anal}).
\begin{enumerate}
 \item{Fixing $\sigma\in\mathscr{P}(E)$, the limits
 \begin{align*}
 \eta^\omega (A) = \lim_{n \to \infty} \Phi_{\theta,n}^{z^{-n}\omega} \left( \sigma \right) \left( A \right)\quad \text{and} \quad
  h\left(\omega,x\right) = \lim_{n \to \infty} \frac{Q_{\theta,n}^\omega\left(1\right)\left(x\right)}{\Phi_{\theta,n}^{z^{-n}\omega} \left( \sigma \right)Q_{\theta,n}^\omega\left(1\right)}
 \end{align*}
exist, where $\eta^\omega (A)$ is a member of a family of probability measures, $\eta=\{\eta^\omega \in \mathscr{P}(E);\omega\in\Omega\}$, and $h\left(\omega,x\right)$ is a member of a family of real-valued, $\mathscr{F}\otimes\mathcal{E}-$measurable functions, $h:\Omega \times E \rightarrow \mathbb{R}$.}
\item{The families of probability measures and measurable functions just defined are independent of $\sigma$, and there exist constants $C<\infty$ and $\rho<1$ such that $\sup_{\omega\in \Omega} \sup_{\sigma\in \mathscr{P}(E)} \bigg| \bigg[\Phi_{\theta,n}^{z^{-n}\omega}\left( \sigma \right)-\eta^\omega\bigg]\left(\varphi\right) \bigg| \leq \sup_{x}|\varphi\left(x\right)| C \rho^n$ and

$\sup_{\omega\in \Omega} \sup_{x\in E} \sup_{\sigma\in \mathscr{P}(E)} \bigg| 
 \frac{Q_{\theta,n}^\omega\left(1\right)\left(x\right)}{\Phi_{\theta,n}^{z^{-n}\omega} \left( \sigma \right)Q_{\theta,n}^\omega\left(1\right)}-
h\left(\omega,x\right) \bigg| \leq C \rho^n$ for $\varphi\in \mathcal{B}_b (E)$ and $n\geq 1$.}
\item{The function $\lambda:\omega\in\Omega \rightarrow \eta^\omega \left(W^{\omega}\right)$ is $\mathscr{F}-$measurable, and
\begin{equation*}
 \sup_{(\omega,\omega)^{'}\in\Omega^2} \frac{\lambda_\omega}{\lambda_{\omega^{'}}}<\infty, \quad 
 \sup_{(\omega,\omega^{'},x,x^{'})\in\Omega^2 \times E^2} \frac{h\left(\omega,x\right)}{h\left(\omega^{'},x^{'}\right)}<\infty.
\end{equation*}
}
\item{Consider the triples that consist of (a) a family of probability measures on $\left(E,\mathcal{E} \right)$ indexed by $\Omega$, (b) an $\mathbb{R}^{+}-$valued measurable function on $\Omega \times E$, and (c) a measurable function on $\Omega$.  For all $\omega\in\Omega$, the triple $\left(\eta,h,\lambda\right)$ uniquely satisfies $\eta^\omega Q_\theta^\omega\left(\cdot\right) = \lambda_\omega \eta^{z\omega} \left(\cdot\right)$, $Q_\theta^\omega\left( h\left(z\omega,\cdot\right) \right)\left(x\right) = \lambda_\omega h\left(\omega,x\right)$, and $\eta^\omega (h\left(\omega,x\right)) = 1$.}
\end{enumerate}
\end{prop}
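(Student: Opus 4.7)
The plan is to follow the framework of \cite{Whiteley_2013}'s Proposition 2 closely, exploiting the fact that assumption (B\ref{hyp:H2anal}) furnishes a Doeblin-type minorization $\nu(\cdot)\epsilon_{-}\le M_\theta(\omega,x,\cdot)\le \epsilon_{+}\nu(\cdot)$ that is uniform in $\omega$, together with the uniform ratio bounds on $\sigma_1 Q_{\theta,n}^\omega / \sigma_2 Q_{\theta,n}^\omega$ and $Q_{\theta,n}^\omega(1)(x)/Q_{\theta,n}^\omega(1)(u)$ that (B\ref{hyp:H2anal}) implies. These uniform bounds are the sole ingredients needed for a Hilbert-metric (or equivalently Dobrushin-coefficient) contraction of the Feynman--Kac semigroup, and from this contraction every statement in the proposition falls out by the same arguments as in the deterministic-kernel setting of \cite{Whiteley_2013}.

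For part 1, I would first show that $\sigma\mapsto\Phi_{\theta,n}^{z^{-n}\omega}(\sigma)$ is Cauchy in total variation as $n\to\infty$, uniformly in $\sigma$, by iterating the one-step Birkhoff contraction: the minorization forces each application of the normalized kernel to contract the Hilbert metric by a factor bounded above by some $\rho\in(0,1)$ depending only on $\epsilon_{-}/\epsilon_{+}$. This gives the limit $\eta^\omega$ as a probability measure on $(E,\mathcal{E})$, independent of the initial law. For the eigenfunction $h(\omega,x)$, I would rewrite the ratio $Q_{\theta,n}^\omega(1)(x)/\Phi_{\theta,n}^{z^{-n}\omega}(\sigma)Q_{\theta,n}^\omega(1)$ as the ratio of two forward iterations of $Q_\theta$ with different starting points and apply the same minorization from the dual side, again yielding a Cauchy sequence that is uniformly bounded by the ratio constant from (B\ref{hyp:H2anal}); measurability of the limits in $\omega$ is inherited from pointwise limits of $\mathscr{F}$-measurable quantities.

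Part 2 is then a bookkeeping statement: the contraction constant $\rho$ and multiplicative constant $C$ that emerge in part 1 are functions only of $\epsilon_{\pm},\Delta_1,\Delta_2$, which by (B\ref{hyp:H2anal}) are uniform over $\omega\in\Omega$ and $\sigma\in\mathscr{P}(E)$. For part 3, measurability of $\lambda_\omega=\eta^\omega(W^\omega)$ follows from measurability of $\omega\mapsto\eta^\omega$ combined with measurability of $W$, and the uniform two-sided bounds on $\lambda_\omega/\lambda_{\omega'}$ and $h(\omega,x)/h(\omega',x')$ are immediate consequences of the uniform ratio bounds in (B\ref{hyp:H2anal}) together with the definitions as limits of such ratios.

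For part 4, I would verify directly that $(\eta,h,\lambda)$ satisfies the three equations by passing to the limit in the defining expressions: the relation $\eta^\omega Q_\theta^\omega=\lambda_\omega\eta^{z\omega}$ arises from shifting $n\mapsto n+1$ in the defining limit for $\eta$, the second relation arises from an analogous shift in the definition of $h$ combined with dominated convergence (justified by the uniform bounds in part 3), and the normalization $\eta^\omega(h(\omega,\cdot))=1$ is automatic. Uniqueness follows because any other solution $(\tilde\eta,\tilde h,\tilde\lambda)$ must, upon iterating $\tilde\eta^\omega Q_{\theta,n}^\omega=(\prod_{j=0}^{n-1}\tilde\lambda_{z^j\omega})\tilde\eta^{z^n\omega}$ and normalizing, equal $\Phi_{\theta,n}^{z^{-n}\omega}(\tilde\eta^{z^{-n}\omega})$, which converges to $\eta^\omega$ by part 1; an analogous backward argument identifies $\tilde h$ with $h$, and $\tilde\lambda$ is then forced by the eigen-equation. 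The main obstacle is confirming that the classical Hilbert-metric contraction argument goes through uniformly in the random environment $\omega$: the contraction is usually formulated for fixed deterministic kernels, and here one must check that the $\omega$-dependence in $M_\theta(\omega,\cdot,\cdot)$ and $W(\omega,\cdot)$ does not spoil measurability of the limits or the uniform application of dominated convergence, which is precisely what (B\ref{hyp:H1anal})--(B\ref{hyp:H2anal}) are designed to secure.
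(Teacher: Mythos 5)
Your proposal is correct and follows essentially the same route as the paper, which simply defers to the proof of Proposition 2 in \cite{Whiteley_2013}: the Doeblin minorization and ratio bounds supplied by (B2) drive a uniform Hilbert-metric/Dobrushin contraction of the normalized Feynman--Kac semigroup, from which the limits, the geometric rates, the measurability statements, and the uniqueness of the triple $(\eta,h,\lambda)$ all follow exactly as you outline. No substantive difference from the paper's (cited) argument.
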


\begin{proof}[Proof of Proposition \ref{theo:analoguetoProp2NWAL}]
The proof is the same as the proof of \cite[Proposition 2]{Whiteley_2013}.  Even though our assumption (B\ref{hyp:H2anal}) differs slightly from the analogue in \cite{Whiteley_2013}, the necessary implications are the same.
\end{proof}

\begin{prop}\label{theo:analoguetoProp3NWAL}
Assume (B\ref{hyp:H1anal}) and (B\ref{hyp:H2anal}).  Then for $\Lambda$ as in Proposition \ref{theo:analoguetoProp1NWAL} and $\lambda$ as in Proposition \ref{theo:analoguetoProp2NWAL}, we have $\Lambda=\mathbb{E}[ \log \lambda ] = \int_\Omega \frac{Q_\theta^\omega\left(h\left(z\omega,\cdot\right)\right)\left(x\right)}{h\left(\omega,x\right)} \mathbb{P}\left(d\omega\right)$, for any $x\in E$.
\end{prop}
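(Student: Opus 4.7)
The plan is to use the eigenfunction relation from Proposition \ref{theo:analoguetoProp2NWAL} to reduce $\Lambda$ to an ergodic average of $\log\lambda$, then invoke Birkhoff's theorem. Iterating $Q_\theta^\omega(h(z\omega,\cdot))(x) = \lambda_\omega h(\omega,x)$ yields
\begin{equation*}
Q_{\theta,n}^\omega\bigl(h(z^n\omega,\cdot)\bigr)(x) \;=\; h(\omega,x)\prod_{k=0}^{n-1}\lambda_{z^k\omega},
\end{equation*}
so taking logarithms and dividing by $n$ gives
\begin{equation*}
\frac{1}{n}\log Q_{\theta,n}^\omega\bigl(h(z^n\omega,\cdot)\bigr)(x) \;=\; \frac{1}{n}\log h(\omega,x)\;+\;\frac{1}{n}\sum_{k=0}^{n-1}\log\lambda_{z^k\omega}.
\end{equation*}
The bound $\sup_{(\omega,\omega')}\lambda_\omega/\lambda_{\omega'}<\infty$ from Proposition \ref{theo:analoguetoProp2NWAL}(3) forces $\lambda$ to take values in a compact subset of $(0,\infty)$ uniformly in $\omega$, hence $\log\lambda$ is bounded and in particular $\mathbb{P}$-integrable. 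Combined with the ergodicity assumption (B\ref{hyp:H1anal}), Birkhoff's ergodic theorem then gives $n^{-1}\sum_{k=0}^{n-1}\log\lambda_{z^k\omega}\to\mathbb{E}[\log\lambda]$, $\mathbb{P}$-almost surely, while the analogous uniform bound on $h$ makes $n^{-1}\log h(\omega,x)\to 0$.

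The next step is to replace $Q_{\theta,n}^\omega(h(z^n\omega,\cdot))(x)$ with $\mu Q_{\theta,n}^\omega(1)$ on the log-scale without changing the limit. Using that $h$ lies between two positive constants, I get the sandwich $c\,Q_{\theta,n}^\omega(1)(x)\le Q_{\theta,n}^\omega(h(z^n\omega,\cdot))(x)\le C\,Q_{\theta,n}^\omega(1)(x)$, and the ratio bound in (B\ref{hyp:H2anal}) yields uniform two-sided control of $Q_{\theta,n}^\omega(1)(x)/\mu Q_{\theta,n}^\omega(1)$ for any fixed $x$ and any initial $\mu$. Taking $\log$ and dividing by $n$, both discrepancies vanish, so
\begin{equation*}
\frac{1}{n}\log\mu Q_{\theta,n}^\omega(1)\;\longrightarrow\;\mathbb{E}[\log\lambda],\qquad \mathbb{P}\text{-a.s.}
\end{equation*}
Comparing with Proposition \ref{theo:analoguetoProp1NWAL} identifies this limit with $\Lambda$, giving $\Lambda=\mathbb{E}[\log\lambda]$.

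The second equality is then immediate: the eigenfunction equation says $\lambda_\omega=Q_\theta^\omega(h(z\omega,\cdot))(x)/h(\omega,x)$, a quantity that is independent of $x$ and equals $\lambda_\omega$ pointwise; substituting into $\mathbb{E}[\log\lambda]$ gives the stated integral representation for any $x\in E$ (where the integrand on the right of the stated identity is understood as $\log$ of the ratio, consistent with the middle expression $\mathbb{E}[\log\lambda]$).

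The main obstacle is the Birkhoff step, specifically the integrability of $\log\lambda$; happily this is handed to us for free by Proposition \ref{theo:analoguetoProp2NWAL}(3), which upgrades the relative bound on $\lambda$ into a uniform two-sided bound. The remaining work is a telescoping identity plus the Doeblin-type comparison of Feynman--Kac potentials already encoded in (B\ref{hyp:H2anal}), both of which are routine once the eigenstructure of Proposition \ref{theo:analoguetoProp2NWAL} is available.
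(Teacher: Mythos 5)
Your proof is correct and follows essentially the same route as the argument the paper defers to (the proof of Proposition 3 in \cite{Whiteley_2013}): telescope the eigenfunction identity $Q_\theta^\omega\left(h\left(z\omega,\cdot\right)\right)\left(x\right)=\lambda_\omega h\left(\omega,x\right)$, use the uniform two-sided bounds on $h$ and on $Q_{\theta,n}^\omega\left(1\right)$ from Proposition \ref{theo:analoguetoProp2NWAL} and (B\ref{hyp:H2anal}) to pass to $\mu Q_{\theta,n}^\omega\left(1\right)$ on the $\frac{1}{n}\log$ scale, and apply Birkhoff's ergodic theorem under (B\ref{hyp:H1anal}). You also correctly read the final integral as $\int_\Omega \log\bigl(Q_\theta^\omega\left(h\left(z\omega,\cdot\right)\right)\left(x\right)/h\left(\omega,x\right)\bigr)\mathbb{P}\left(d\omega\right)$; the missing $\log$ is a typo in the statement.
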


\begin{proof}[Proof of Proposition \ref{theo:analoguetoProp3NWAL}]
The proof is the same as the proof of \cite[Proposition 3]{Whiteley_2013}.  It only relies on the assumptions (B\ref{hyp:H1anal}) and (B\ref{hyp:H2anal}) because it makes use of Propositions \ref{theo:analoguetoProp1NWAL} and \ref{theo:analoguetoProp2NWAL}.
\end{proof}

Now that the triple $\left(\eta,h,\lambda\right)$ is clearly defined and it is known that $\widehat{Z}_{\theta,1:n}$ is approximating a finite value in the limit as $n\rightarrow\infty$, we can begin to establish how the optimal $h$ affects the particle filter.  That illustration begins by showing how the particle filter behaves when its transition density is any $\widetilde{\mathbf{M}}_\theta$ (which is a member of $\mathbb{M}$) and not necessarily one twisted with $h$.

The functions $\mathbf{J}_\theta$ and $\mathbf{L}_\theta$ of Definition \ref{def:coolM} can be used to construct \eqref{eq:Wewillfrequentlyrefertotheratiosgthai}.  Thus, for any $\widetilde{\mathbf{M}}_\theta$, we establish bounds on those functions via Lemmas \ref{theo:analoguetoLem1NWAL} and \ref{theo:analoguetoLem6NWAL} below to show that $(1/n)\log \widetilde{\mathcal{V}}_{\theta,n}^\omega \rightarrow \Upsilon\left(\widetilde{\mathbf{M}}_\theta\right)
 \quad \text{as} \quad n \rightarrow \infty$, $\mathbb{P}-$almost surely, in Proposition \ref{theo:analoguetoProp4NWAL} below.

\begin{lemma}\label{theo:analoguetoLem1NWAL}
Assume (B\ref{hyp:H2anal}) and (B\ref{hyp:H3}).  For all $\omega,\omega^{'}\in\Omega$, $x,x^{'}\in E$, $T_{\omega}\geq N$, and $T_{\omega^{'}}\geq N$, ${\mathbf{W}^{T_\omega-1}\left(\omega,x\right)}/{\mathbf{W}^{T_{\omega^{'}}-1}\left(\omega^{'},x^{'}\right)} \leq \Delta_3$ for some $\Delta_3\in(0,\infty)$.  Furthermore, $\epsilon_{-}^{T_{z\omega}-1}\nu^{\otimes\left(T_{z\omega}-1\right)}\left(\cdot\right) \leq \mathbf{M}^{T_{\omega}-1,T_{z\omega}-1}_\theta\left(\omega, x,\cdot\right) \leq \epsilon_{+}^{T_{z\omega}-1}\nu^{\otimes\left(T_{z\omega}-1\right)}\left(\cdot\right)$.
\end{lemma}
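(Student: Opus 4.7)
The lemma has two essentially independent claims, and I would treat them in order with no new ideas beyond unpacking definitions.

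For the first claim, the key observation is that under the alive-filter stopping rule in \eqref{eq:alivetwistedtomegasgtues}, the indicator-valued weights $W\in\{0,1\}$ satisfy $W(\omega, x^{T_\omega}) = 1$ and $\sum_{i=1}^{T_\omega-1} W(\omega, x^i) = N-1$ whenever $T_\omega \geq N$. Hence $\mathbf{W}^{T_\omega-1}(\omega,x) = (N-1)/(T_\omega - 1)$, and the ratio in the lemma collapses to $(T_{\omega'} - 1)/(T_\omega - 1)$. Using $T_\omega \geq N > 1$ in the denominator and interpreting (B\ref{hyp:H3}) as providing a uniform finite upper bound on $T_{\omega'}$, the supremum of this ratio is a finite positive constant, which I take to be $\Delta_3$.

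For the second claim, I would unpack the definition of $\mathbf{M}^{m_1, m_2}_\theta$ in \eqref{eq:Mm1m2defnsgtues}, writing it as the $m_2$-fold product of the single-coordinate mixture
$$
\Phi_\theta^{\omega, m_1}\bigl(\eta_\omega^{m_1}\bigr)(\cdot) = \frac{\sum_{j=1}^{m_1} W(\omega, x^j)\, M_\theta(\omega, x^j, \cdot)}{\sum_{j=1}^{m_1} W(\omega, x^j)}.
$$
Applying the two-sided Doeblin-type bound $\nu(\cdot)\epsilon_- \leq M_\theta(\omega, x^j, \cdot) \leq \epsilon_+ \nu(\cdot)$ from (B\ref{hyp:H2anal}) to each component of the mixture, the sums $\sum_j W(\omega, x^j)$ cancel between numerator and denominator, so $\epsilon_- \nu(\cdot) \leq \Phi_\theta^{\omega, m_1}(\eta_\omega^{m_1})(\cdot) \leq \epsilon_+ \nu(\cdot)$. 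Specialising $m_1 = T_\omega - 1$ and $m_2 = T_{z\omega} - 1$ and tensorising over the $T_{z\omega} - 1$ coordinates gives the stated product bound.

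The only subtleties I would flag are (i) well-definedness of the mixture normalisation in the second step, which requires $\sum_{j=1}^{T_\omega - 1} W(\omega, x^j) > 0$ and follows from the first step together with $N > 1$ in (B\ref{hyp:H3}); and (ii) the uniform (not merely pointwise) finiteness of $T_\omega$ relied upon in the first step, which is how the paper interprets (B\ref{hyp:H3}). Neither step uses the $\Delta_1$ ratio bound in (B\ref{hyp:H2anal}); only the minorisation/majorisation of $M_\theta$ is invoked. I do not expect any real obstacle: this lemma is essentially a direct translation of (B\ref{hyp:H2anal}) and (B\ref{hyp:H3}) into the Feynman--Kac notation of Section 2, preparing the more substantive bounds in Lemma \ref{theo:analoguetoLem6NWAL} and Proposition \ref{theo:analoguetoProp4NWAL}.
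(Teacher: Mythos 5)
Your proof is correct and follows essentially the same route as the paper: the first claim via positivity and finiteness of $\mathbf{W}^{T_\omega-1}$ under (B3), and the second by applying the minorisation/majorisation of $M_\theta$ from (B2) inside the mixture $\Phi_\theta^{\omega,m_1}$, cancelling the weight sums, and tensorising. If anything, you are more explicit than the paper, which simply declares the first bound ``obvious''; your observation that it silently requires a uniform (not merely pointwise) bound on $T_{\omega'}$ is a fair reading of how (B3) must be interpreted for the lemma to hold.
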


\begin{proof}[Proof of Lemma \ref{theo:analoguetoLem1NWAL}]
Under (B\ref{hyp:H3}), it is clear that any ${\mathbf{W}^{T_\omega-1}\left(\omega,x\right)}$ is positive and finite, and so a positive and finite upper bound on ${\mathbf{W}^{T_\omega-1}\left(\omega,x\right)}/{\mathbf{W}^{T_{\omega^{'}}-1}\left(\omega^{'},x^{'}\right)}$ is obvious.

Recalling (B\ref{hyp:H2anal}) and \eqref{eq:Mm1m2defnsgtues}, we can calculate
\begin{align*}
 \mathbf{M}^{T_{\omega}-1,T_{z\omega}-1}_\theta\left(\omega, x,{d}x\right)
 &\leq \prod_{i=1}^{T_{z\omega}-1} \frac{\left( N-1 \right) \epsilon_{+}\nu\left({d}x^i\right) }{N-1}
 = \epsilon_{+}^{T_{z\omega}-1}\nu^{\otimes\left(T_{z\omega}-1\right)}\left({d}x\right).
\end{align*}
Similarly, we have the lower bound $\mathbf{M}^{T_{\omega}-1,T_{z\omega}-1}_\theta\left(\omega, x,{d}x\right) \geq \epsilon_{-}^{T_{z\omega}-1}\nu^{\otimes\left(T_{z\omega}-1\right)}\left({d}x\right)$.
\end{proof}

\begin{lemma}\label{theo:analoguetoLem6NWAL}
For any $\omega\in\Omega$, assume (B\ref{hyp:H2anal}) and (B\ref{hyp:H3}), let $\widetilde{\mathbf{M}}^{T_{\omega}-1,T_{z\omega}-1}_\theta$ be any member of $\mathbb{M}^{T_{\omega}-1,T_{z\omega}-1}$, and let $\widetilde{\nu}$ be as in the definition of $\mathbb{M}^{T_{\omega}-1,T_{z\omega}-1}$.  There exist constants $\alpha\in(0,\infty)$ and $\left(\delta_{-},\delta_{+}\right)\in\left(0,\infty\right)^2$ and a probability measure $\sigma\in\mathscr{P}(E^{T_{z\omega}-1})$ such that ${\mathbf{J}^{T_{\omega}-1,T_{z\omega}-1}_\theta\left(\omega, x\right)}/{\mathbf{J}^{T_{\omega^{'}}-1,T_{z\omega^{'}}-1}_\theta\left(\omega^{'}, x^{'}\right)} \leq \alpha$ for all $\left(\omega, \omega^{'}, x, x^{'}\right)\in \Omega^2 \times E^{T_{\omega}-1} \times E^{T_{\omega^{'}}-1}$ and $\delta_{-}\sigma\left(\cdot\right) \leq \mathbf{L}^{T_{\omega}-1,T_{z\omega}-1}_\theta\left(\omega, x,\cdot\right) \leq \delta_{+}\sigma\left(\cdot\right)$ for all $\left(\omega, x\right)\in \Omega \times E^{T_{\omega}-1}$, where $\sigma\left( dx\right)\propto \left( \left({\mathrm{d}\nu^{\otimes (T_{z\omega}-1)}}/{\mathrm{d}\widetilde{\nu}}\right) \left(x\right)\right)^2 \widetilde{\nu}\left(dx\right)$.
\end{lemma}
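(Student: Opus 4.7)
The plan is to exploit Lemma \ref{theo:analoguetoLem1NWAL} and the defining properties of $\mathbb{M}^{T_\omega - 1, T_{z\omega} - 1}$ in Definition \ref{def:coolM} to obtain pointwise bounds on the Radon--Nikodym derivative $\phi_\theta^{\omega, m_1, m_2}$, and then propagate these through $\mathbf{R}$, $\mathbf{J}$ and $\mathbf{L}$. Throughout I set $m_1 = T_\omega - 1$ and $m_2 = T_{z\omega} - 1$; since $\nu^{\otimes m_2}$ is dominated by $\widetilde{\nu}$ (part~2 of Definition \ref{def:coolM}), the derivative $d\nu^{\otimes m_2}/d\widetilde{\nu}$ is well defined.

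First I would express $\phi = d\mathbf{M}/d\widetilde{\mathbf{M}} = (d\mathbf{M}/d\widetilde{\nu})/(d\widetilde{\mathbf{M}}/d\widetilde{\nu})$. The bound $\epsilon_-^{m_2}\nu^{\otimes m_2} \leq \mathbf{M}^{m_1,m_2}_\theta(\omega, x, \cdot) \leq \epsilon_+^{m_2}\nu^{\otimes m_2}$ from Lemma \ref{theo:analoguetoLem1NWAL}, combined with $\widetilde{\epsilon}_-\widetilde{\nu} \leq \widetilde{\mathbf{M}}^{m_1,m_2}_\theta(\omega, x, \cdot) \leq \widetilde{\epsilon}_+\widetilde{\nu}$ from Definition \ref{def:coolM}, yields
\[
\frac{\epsilon_-^{m_2}}{\widetilde{\epsilon}_+}\frac{d\nu^{\otimes m_2}}{d\widetilde{\nu}}(x') \;\leq\; \phi_\theta^{\omega,m_1,m_2}(x, x') \;\leq\; \frac{\epsilon_+^{m_2}}{\widetilde{\epsilon}_-}\frac{d\nu^{\otimes m_2}}{d\widetilde{\nu}}(x').
\]
Squaring and multiplying by $(d\widetilde{\mathbf{M}}/d\widetilde{\nu})(x')\in[\widetilde{\epsilon}_-,\widetilde{\epsilon}_+]$ gives two-sided bounds on $\phi^2\widetilde{\mathbf{M}}(dx')$ of the form $C\,(d\nu^{\otimes m_2}/d\widetilde{\nu})(x')^2\,\widetilde{\nu}(dx')$ with $C$ a ratio of the $\epsilon$'s and $\widetilde{\epsilon}$'s.

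Next I would integrate these bounds over $E^{m_2}$ and multiply by $\mathbf{W}^{m_1}(\omega,x)^2$. Setting $Z_\omega = \int_{E^{m_2}}(d\nu^{\otimes m_2}/d\widetilde{\nu})(x')^2\,\widetilde{\nu}(dx') \in (0, \infty)$, finite by part~3 of Definition \ref{def:coolM} and positive because $\nu^{\otimes m_2}$ is a probability measure, this produces explicit upper and lower envelopes on $\mathbf{J}^{m_1,m_2}_\theta(\omega, x)$ proportional to $\mathbf{W}^{m_1}(\omega,x)^2 Z_\omega$. Forming the ratio $\mathbf{J}(\omega, x)/\mathbf{J}(\omega', x')$ and invoking the uniform bound on $\mathbf{W}^{T_\omega-1}/\mathbf{W}^{T_{\omega'}-1}$ from Lemma \ref{theo:analoguetoLem1NWAL} together with the boundedness of $\epsilon_\pm^{m_2}$, $\widetilde{\epsilon}_\pm$ and $Z_\omega$ then produces the required constant $\alpha$.

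Finally, the factor $\mathbf{W}^{m_1}(\omega,x)^2$ cancels in $\mathbf{L} = \mathbf{R}/\mathbf{J}$, so pairing the upper envelope on $\mathbf{R}/\mathbf{W}^2$ with the lower envelope on $\mathbf{J}/\mathbf{W}^2$ delivers $\mathbf{L}(\omega, x, dx') \leq \delta_+ \sigma(dx')$, with $\sigma(dx') \propto (d\nu^{\otimes m_2}/d\widetilde{\nu})(x')^2\,\widetilde{\nu}(dx')$ the claimed probability measure and $\delta_+$ a product of $\epsilon_+^{2m_2}/\epsilon_-^{2m_2}$ and a power of $\widetilde{\epsilon}_+/\widetilde{\epsilon}_-$; reversing the roles of the envelopes gives the matching lower bound $\delta_-$. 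The main obstacle I anticipate is keeping the constants uniform in $\omega$: since $m_2 = T_{z\omega} - 1$ is random, the exponents $\epsilon_\pm^{m_2}$ could in principle degenerate, so one must interpret (B\ref{hyp:H3}) together with the lower bound on the probability of a single successful weight coming from (B\ref{hyp:H2anal}) as furnishing a uniform cap on $T_\omega$, and verify that the probability measure $\sigma$ and the normalising constant $Z_\omega$ inherit uniform control from this cap.
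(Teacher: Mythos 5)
Your argument is essentially the paper's own proof: both routes combine the envelope $\epsilon_{-}^{T_{z\omega}-1}\nu^{\otimes(T_{z\omega}-1)}\leq\mathbf{M}_\theta\leq\epsilon_{+}^{T_{z\omega}-1}\nu^{\otimes(T_{z\omega}-1)}$ and the $\mathbf{W}$-ratio bound $\Delta_3$ from Lemma \ref{theo:analoguetoLem1NWAL} with the defining properties of $\mathbb{M}^{T_\omega-1,T_{z\omega}-1}$ to sandwich $\phi_\theta^2\,\widetilde{\mathbf{M}}_\theta(dx')$ between constant multiples of $\left(\mathrm{d}\nu^{\otimes(T_{z\omega}-1)}/\mathrm{d}\widetilde{\nu}\right)^2\widetilde{\nu}(dx')$, then integrate to control $\mathbf{J}_\theta$ and cancel $\mathbf{W}^2$ in $\mathbf{L}_\theta=\mathbf{R}_\theta/\mathbf{J}_\theta$. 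The uniformity-in-$\omega$ caveat you raise (the exponents $\epsilon_{\pm}^{T_{z\omega}-1}$ depending on the random $T_{z\omega}$) is a legitimate observation, but the paper's proof leaves exactly the same dependence in its constants, so it does not distinguish your approach from theirs.
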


\begin{proof}[Proof of Lemma \ref{theo:analoguetoLem6NWAL}]
For any $A\in \mathcal{E}^{\otimes (T_{z\omega}-1)}$,
\begin{align}\label{eq:numeratorJJsgthai}
 \int_{A} &\mathbf{W}^{T_{\omega}-1}\left(\omega,x\right)^2 \phi_\theta^{\omega,T_{\omega}-1,T_{z\omega}-1}\left(x,x^{'}\right)^2 \widetilde{\mathbf{M}}^{T_{\omega}-1,T_{z\omega}-1}_\theta\left(\omega, x,{d}x^{'}\right)\\ \nonumber
 &\leq \sup_{\left(\omega^{'},z\right)\in\Omega\times E^{T_{\omega^{'}}-1}} \mathbf{W}^{T_{\omega^{'}}-1}\left(\omega^{'},z\right)^2  \left(\Delta_3 \frac{\epsilon_{+}^{T_{z\omega}-1}}{\widetilde{\epsilon}_{-}}\right)^2 \widetilde{\epsilon}_{+} \int_{A} \left(\frac{\mathrm{d}\nu^{\otimes (T_{z\omega}-1)}}{\mathrm{d}\widetilde{\nu}}\left(x^{'}\right)\right)^2\widetilde{\nu}\left(dx^{'}\right) \\ \nonumber &< \infty,
\end{align}
by Lemma \ref{theo:analoguetoLem1NWAL} and \eqref{eq:Mm1m2defnsgtues}.  Similarly,
\begin{align}\label{eq:denominatorJJsgthai}
 \int_{A} &\mathbf{W}^{T_{\omega}-1}\left(\omega,x\right)^2 \phi_\theta^{\omega,T_{\omega}-1,T_{z\omega}-1}\left(x,x^{'}\right)^2 \widetilde{\mathbf{M}}^{T_{\omega}-1,T_{z\omega}-1}_\theta\left(\omega, x,{d}x^{'}\right)\\ \nonumber
 &\geq \inf_{\left(\omega^{'},z\right)\in\Omega\times E^{T_{\omega^{'}}-1}} \mathbf{W}^{T_{\omega^{'}}-1}\left(\omega^{'},z\right)^2  \left(\frac{\epsilon_{-}^{T_{z\omega}-1}}{\Delta_3 \widetilde{\epsilon}_{+}}\right)^2 \widetilde{\epsilon}_{-} \int_{A} \left(\frac{\mathrm{d}\nu^{\otimes (T_{z\omega}-1)}}{\mathrm{d}\widetilde{\nu}}\left(x^{'}\right)\right)^2\widetilde{\nu}\left(dx^{'}\right),
\end{align}
by Lemma \ref{theo:analoguetoLem1NWAL} and \eqref{eq:Mm1m2defnsgtues}.  Taking $A=E^{T_{z\omega}-1}$ and dividing \eqref{eq:numeratorJJsgthai} by \eqref{eq:denominatorJJsgthai}, we have 
\begin{equation*}
  \frac{\mathbf{J}^{T_{\omega}-1,T_{z\omega}-1}_\theta\left(\omega, x\right)}{\mathbf{J}^{T_{\omega^{'}}-1,T_{z\omega^{'}}-1}_\theta\left(\omega^{'}, x^{'}\right)}\leq
  \left((\Delta_3)^3 \frac{\widetilde{\epsilon}_{+} \epsilon_{+}^{T_{z\omega}-1}}{\widetilde{\epsilon}_{-} \epsilon_{-}^{T_{z\omega}-1}}\right)^2 \frac{\widetilde{\epsilon}_{+}}{\widetilde{\epsilon}_{-}}=\alpha.
\end{equation*}
Finally, as $\mathbf{L}^{T_{\omega}-1,T_{z\omega}-1}_\theta\left(\omega, x,\cdot\right)$ is a kernel that has $\mathbf{J}^{T_{\omega}-1,T_{z\omega}-1}_\theta\left(\omega, x\right)$ as a normalising constant, it is clear via \eqref{eq:numeratorJJsgthai} and \eqref{eq:denominatorJJsgthai} that
\begin{equation*}
 \delta_{-}=\inf_{\left(\omega^{'},z\right)\in\Omega\times E^{T_{\omega^{'}}-1}} \mathbf{W}^{T_{\omega^{'}}-1}\left(\omega^{'},z\right)^2  \left(\frac{\epsilon_{-}^{T_{z\omega}-1}}{\Delta_3 \widetilde{\epsilon}_{+}}\right)^2 \widetilde{\epsilon}_{-} \left(\int_{E} \left(\frac{\mathrm{d}\nu^{\otimes (T_{z\omega}-1)}}{\mathrm{d}\widetilde{\nu}}\left(x^{'}\right)\right)^2\widetilde{\nu}\left(dx^{'}\right)\right)^{-1}
\end{equation*}
and
\begin{equation*}
 \delta_{+}=\sup_{\left(\omega^{'},z\right)\in\Omega\times E^{T_{\omega^{'}}-1}} \mathbf{W}^{T_{\omega^{'}}-1}\left(\omega^{'},z\right)^2  \left(\Delta_3 \frac{\epsilon_{+}^{T_{z\omega}-1}}{\widetilde{\epsilon}_{-}}\right)^2 \widetilde{\epsilon}_{+} \left(\int_{E} \left(\frac{\mathrm{d}\nu^{\otimes (T_{z\omega}-1)}}{\mathrm{d}\widetilde{\nu}}\left(x^{'}\right)\right)^2\widetilde{\nu}\left(dx^{'}\right)\right)^{-1}
\end{equation*}
when $\sigma\left( dx\right)\propto \left( \left({\mathrm{d}\nu^{\otimes (T_{z\omega}-1)}}/{\mathrm{d}\widetilde{\nu}} \right) \left(x\right)\right)^2 \widetilde{\nu}\left(dx\right)$.
\end{proof}

\begin{prop}\label{theo:analoguetoProp4NWAL}
Assume (B\ref{hyp:H1anal}), (B\ref{hyp:H2anal}) and (B\ref{hyp:H3}). For each $\widetilde{\mathbf{M}}^{T_{\omega}-1,T_{z\omega}-1}_\theta$ any member of a $\mathbb{M}^{T_{\omega}-1,T_{z\omega}-1}$, there exists a non-negative, finite constant $\Upsilon\left(\widetilde{\mathbf{M}}_\theta\right)$, which is independent of the initial distribution $\mu$, such that $(1/n)\log \widetilde{\mathcal{V}}_{\theta,n}^\omega \rightarrow \Upsilon\left(\widetilde{\mathbf{M}}_\theta\right)$ as $n\rightarrow\infty$, $\mathbb{P}-$almost surely.
\end{prop}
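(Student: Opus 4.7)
The plan is to view $\widetilde{\mathcal{V}}_{\theta,n}^\omega$ as a ratio $N_n^\omega/D_n^\omega$, where $N_n^\omega$ is the numerator of \eqref{eq:Wewillfrequentlyrefertotheratiosgthai} and $D_n^\omega$ is its denominator, and to prove that $(1/n)\log N_n^\omega$ and $(1/n)\log D_n^\omega$ each converge $\mathbb{P}$-almost surely to finite deterministic constants that do not depend on the initial distribution $\mu$. Their difference will be the desired $\Upsilon\left(\widetilde{\mathbf{M}}_\theta\right)$. Non-negativity of this difference falls out at once from the fact that $\widehat{Z}_{\theta,1:n}$ is unbiased under $\widetilde{\mathbf{M}}_\theta$, so that $\widetilde{\mathcal{V}}_{\theta,n}^\omega = \mathbb{E}[\widehat{Z}_{\theta,1:n}^2]/Z_{\theta,1:n}^2 \ge 1$ by Jensen's inequality, and therefore its logarithm is $\ge 0$ for every $n$, forcing $\Upsilon\left(\widetilde{\mathbf{M}}_\theta\right)\ge 0$ in the limit.

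For the denominator, $D_n^\omega$ is, up to an $O(1)$ factor coming from the initial distribution, the square of $\mu Q_{\theta,n}^\omega(1)$, so Proposition \ref{theo:analoguetoProp1NWAL} yields $(1/n)\log D_n^\omega \to 2\Lambda$ $\mathbb{P}$-almost surely, with $\Lambda$ finite and independent of $\mu$. The two-sided bound in (B\ref{hyp:H2anal}) then gives the linear-in-$n$ upper and lower controls on $\log D_n^\omega$ that are needed to keep the final limit finite.

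For the numerator, I would use the factorisation $\mathbf{R}^{m_k,m_{k+1}}_\theta(\omega,x,dx') = \mathbf{J}^{m_k,m_{k+1}}_\theta(\omega,x)\,\mathbf{L}^{m_k,m_{k+1}}_\theta(\omega,x,dx')$ from Definition \ref{def:coolM} to rewrite
$$
N_n^\omega = \int \prod_{k=1}^n \mathbf{J}^{m_k,m_{k+1}}_\theta(z^k\omega,x_k)\,\prod_{k=1}^n \mathbf{L}^{m_k,m_{k+1}}_\theta(z^k\omega,x_k,dx_{k+1}).
$$
By Lemma \ref{theo:analoguetoLem6NWAL}, each $\mathbf{L}$-kernel enjoys a uniform two-sided minorisation $\delta_-\sigma(\cdot)\le \mathbf{L}^{m_k,m_{k+1}}_\theta(z^k\omega,x,\cdot)\le \delta_+\sigma(\cdot)$, and the ratio $\mathbf{J}^{m_k,m_{k+1}}_\theta(z^k\omega,x)/\mathbf{J}^{m_k,m_{k+1}}_\theta(z^k\omega,x')$ is bounded uniformly in $(x,x',\omega,k)$ by $\alpha$. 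Using these sandwich estimates to absorb the $x$-dependence into $O(1)$ factors when splitting the time horizon $[1,n+m]$ into $[1,n]$ and $[n+1,n+m]$, one obtains an approximate supermultiplicative bound $N_{n+m}^\omega \ge C^{-1} N_n^\omega N_m^{z^n\omega}$ together with the matching submultiplicative inequality, both valid for every $\omega$. The ergodicity of $z$ supplied by (B\ref{hyp:H1anal}) then lets us apply Kingman's subadditive ergodic theorem to $-\log N_n^\omega$, exactly as in the proof of Proposition \ref{theo:analoguetoProp1NWAL}, to conclude $(1/n)\log N_n^\omega \to L$ $\mathbb{P}$-almost surely, with $L$ a finite constant independent of $\mu$. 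Setting $\Upsilon\left(\widetilde{\mathbf{M}}_\theta\right):= L - 2\Lambda$ then completes the proof.

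The main obstacle is establishing the approximate super/submultiplicativity of $N_n^\omega$: unlike $\mu Q_{\theta,n}^\omega(1)$ in Proposition \ref{theo:analoguetoProp1NWAL}, which composes cleanly as a semigroup evaluation, the numerator $N_n^\omega$ couples the random $\mathbf{J}$-factors with an inhomogeneous $\mathbf{L}$-chain, so the cut-and-paste at time $n$ is not automatic. The Doeblin-type bounds of Lemma \ref{theo:analoguetoLem6NWAL} must be used carefully to decouple the state at the splitting time from both halves of the horizon and to absorb the resulting slack into a constant $C$ independent of $n$, $m$ and $\omega$. Once that sandwich is in place, the Kingman and Birkhoff steps follow the template already used for Proposition \ref{theo:analoguetoProp1NWAL}, and the independence of $\mu$ in the limit is inherited from the uniform bounds on $\mathbf{J}$ and $\mathbf{L}$ in Lemmas \ref{theo:analoguetoLem1NWAL} and \ref{theo:analoguetoLem6NWAL}.
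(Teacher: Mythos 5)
Your proposal is correct and follows essentially the same route as the paper: the paper likewise splits $\widetilde{\mathcal{V}}_{\theta,n}^\omega$ into $\mu^{\otimes(T_{\omega}-1)}\mathbf{R}_{\theta,n}^{\omega,T_{\omega}-1,T_{z^n\omega}-1}(1)/(\mu Q_{\theta,n}^\omega(1))^2$, gets $2\Lambda$ for the denominator from Proposition \ref{theo:analoguetoProp1NWAL}, and obtains a constant $\Xi$ for the numerator by running the same Kingman subadditive argument on the $\mathbf{R}$-semigroup using the Doeblin-type bounds of Lemma \ref{theo:analoguetoLem6NWAL}, setting $\Upsilon(\widetilde{\mathbf{M}}_\theta)=\Xi-2\Lambda$. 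The only difference is expository: the paper delegates the numerator step to ``the exact same steps as in the proof of Whiteley's Proposition 1,'' whereas you spell out the approximate super/submultiplicativity explicitly, and you add a clean Jensen/unbiasedness argument for non-negativity.
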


\begin{proof}[Proof of Proposition \ref{theo:analoguetoProp4NWAL}]
Assume (B\ref{hyp:H1anal}) and (B\ref{hyp:H2anal}).  By Proposition \ref{theo:analoguetoProp1NWAL}, for any $\mu\in \mathscr{P}(E)$, $(2/n)\log\mu Q_{\theta,n}^\omega\left(1\right) \rightarrow 2\Lambda$ as $n\rightarrow\infty$, $\mathbb{P}-$almost surely.

Next, assume (B\ref{hyp:H3}) and consider the bounds presented in Lemma \ref{theo:analoguetoLem6NWAL}.  Following the exact same steps as in the proof of \cite[Proposition 1]{Whiteley_2013}, one can show that there exists a constant $\Xi\in\left(-\infty,\infty\right)$ such that $(1/n)\log\mu^{\otimes(T_{\omega}-1)}\mathbf{R}_{\theta,n}^{\omega,T_{\omega}-1,T_{z^n\omega}-1}\left(1\right)$ approaches $\Xi$ as $n\rightarrow\infty$, $\mathbb{P}-$almost surely.

By the definition \eqref{eq:Wewillfrequentlyrefertotheratiosgthai}, we have $\widetilde{\mathcal{V}}_{\theta,n}^\omega = {\mu^{\otimes(T_{\omega}-1)}\mathbf{R}_{\theta,n}^{\omega,T_{\omega}-1,T_{z^n \omega}-1}\left(1\right)}/{\left(\mu Q_{\theta,n}^\omega\left(1\right)\right)^2}$, and so $\Upsilon\left(\widetilde{\mathbf{M}}_\theta\right)=\Xi-2\Lambda$.
\end{proof}

There is one final lemma which is needed to prove Theorem \ref{theo:analoguetoTheorem1NWAL}.  The following is a technical result establishing that an additive functional of the form \eqref{eq:labeltheaddfunclh}, with optimal $h$ as defined in Proposition \ref{theo:analoguetoProp2NWAL}, is an eigenfunction for $\mathbf{Q}_\theta$.

\begin{lemma}\label{theo:analoguetoLem2NWAL}
Assume (B\ref{hyp:H2anal}). Then for any $\omega\in\Omega$, $\mathbf{Q}_\theta^{\omega,m_1,m_2}\left(\mathbf{h}^{m_2}\left(z\omega,\cdot\right)\right)\left(x\right)= \lambda_\omega \mathbf{h}^{m_1}\left(\omega,x\right)$, where $\lambda_\omega$ is as in Proposition \ref{theo:analoguetoProp2NWAL}.
\end{lemma}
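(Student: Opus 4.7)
The plan is a direct computation that reduces the lemma to a single application of the eigenfunction relation $Q_\theta^\omega(h(z\omega,\cdot))(x) = \lambda_\omega h(\omega,x)$ from Proposition \ref{theo:analoguetoProp2NWAL}. The key observation is that the additive structure of $\mathbf{h}^{m_2}$, combined with the product structure of $\mathbf{M}_\theta^{m_1,m_2}$ given in \eqref{eq:Mm1m2defnsgtues}, collapses the $m_2$-fold integral to a one-dimensional integral against the mixture kernel $\Phi_\theta^{\omega,m_1}(\eta_\omega^{m_1})$.

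First I would expand $\mathbf{Q}_\theta^{\omega,m_1,m_2}(\mathbf{h}^{m_2}(z\omega,\cdot))(x) = \mathbf{W}^{m_1}(\omega,x)\int \mathbf{h}^{m_2}(z\omega,x')\,\mathbf{M}_\theta^{m_1,m_2}(\omega,x,dx')$ by plugging in the product form $\mathbf{M}_\theta^{m_1,m_2}(\omega,x,dx') = \prod_{i=1}^{m_2}\Phi_\theta^{\omega,m_1}(\eta_\omega^{m_1})(dx'^i)$. Because $\mathbf{h}^{m_2}(z\omega,x') = \frac{1}{m_2}\sum_{j=1}^{m_2} h(z\omega,x'^j)$ and each component of $x'$ is marginally distributed as $\Phi_\theta^{\omega,m_1}(\eta_\omega^{m_1})$, the integral simplifies to
\begin{equation*}
\mathbf{W}^{m_1}(\omega,x)\int h(z\omega,x')\,\Phi_\theta^{\omega,m_1}(\eta_\omega^{m_1})(dx').
\end{equation*}

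Next I would substitute the explicit expression for $\Phi_\theta^{\omega,m_1}(\eta_\omega^{m_1})$ from \eqref{eq:Mm1m2defnsgtues}. The denominator $\frac{1}{m_1}\sum_j W(\omega,x^j)$ is exactly $\mathbf{W}^{m_1}(\omega,x)$ and cancels the prefactor, yielding
\begin{equation*}
\frac{1}{m_1}\sum_{j=1}^{m_1} W(\omega,x^j)\int h(z\omega,u)\,M_\theta(\omega,x^j,du) = \frac{1}{m_1}\sum_{j=1}^{m_1} Q_\theta^\omega(h(z\omega,\cdot))(x^j).
\end{equation*}
By Proposition \ref{theo:analoguetoProp2NWAL}, each summand equals $\lambda_\omega h(\omega,x^j)$, so pulling out $\lambda_\omega$ gives $\lambda_\omega\cdot\frac{1}{m_1}\sum_{j=1}^{m_1} h(\omega,x^j) = \lambda_\omega\,\mathbf{h}^{m_1}(\omega,x)$, as required.

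There is no real obstacle here: everything reduces to Fubini applied to a product measure plus the pointwise eigenfunction identity, and assumption (B\ref{hyp:H2anal}) is only needed implicitly, via Proposition \ref{theo:analoguetoProp2NWAL}, to guarantee that the limit $h$ (and hence $\lambda$) exists and that the integrals in question are finite. The only care required is bookkeeping: ensuring that the normalising constant $\mathbf{W}^{m_1}(\omega,x)$ in front of the integral matches the denominator of $\Phi_\theta^{\omega,m_1}(\eta_\omega^{m_1})$ so the cancellation is exact, and that the exchange of sum and integral in reducing $\mathbf{h}^{m_2}$ under the product kernel is performed coordinate-by-coordinate.
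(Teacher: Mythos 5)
Your computation is correct and is essentially identical to the paper's own proof: both expand $\mathbf{Q}^{m_1,m_2}_\theta=\mathbf{W}^{m_1}\mathbf{M}^{m_1,m_2}_\theta$, use the additive form of $\mathbf{h}^{m_2}$ with the product form of $\mathbf{M}^{m_1,m_2}_\theta$ to collapse to a single integral against $\Phi_\theta^{\omega,m_1}(\eta_\omega^{m_1})$, cancel the normalising constant, and apply the eigenfunction identity $Q_\theta^\omega(h(z\omega,\cdot))(x^j)=\lambda_\omega h(\omega,x^j)$ from Proposition \ref{theo:analoguetoProp2NWAL} term by term. No gaps.
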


\begin{proof}[Proof of Lemma \ref{theo:analoguetoLem2NWAL}]
\begin{align*}
 &\mathbf{Q}_\theta^{\omega,m_1,m_2}\left(\mathbf{h}^{m_2}\left(z\omega,\cdot\right)\right)\left(x\right)=\int_{E^{m_2}} \mathbf{Q}^{m_1,m_2}_\theta \left(\omega, x,{d}u\right)\mathbf{h}^{m_2}\left(z\omega,u\right) \\
 &=\frac{1}{m_2} \sum_{j=1}^{m_2} \mathbf{W}^{m_1}\left(\omega,x\right) \int_{E} \frac{\sum_{i=1}^{m_1} W\left(\omega,x^i\right)M_\theta\left(\omega, x^i,{d}u^j\right)}{\sum_{l=1}^{m_1} W\left(\omega,x^l\right)} h\left(z\omega,u^j\right) \\
 &=\frac{1}{m_1} \frac{1}{m_2} \sum_{j=1}^{m_2} \sum_{i=1}^{m_1} \lambda_\omega h\left(\omega,x^i\right)
 = \lambda_\omega \frac{1}{m_1} \sum_{i=1}^{m_1} h\left(\omega,x^i\right)
 = \lambda_\omega \mathbf{h}^{m_1}\left(\omega,x\right),
\end{align*}
where we have applied Proposition \ref{theo:analoguetoProp2NWAL}.
\end{proof}

Finally, we use the optimal $h$ of Proposition \ref{theo:analoguetoProp2NWAL} and prove that the specific, unique $\widetilde{\mathbf{M}}_\theta$ defined in Theorem \ref{theo:analoguetoTheorem1NWAL} achieves $\Upsilon\left(\widetilde{\mathbf{M}}_\theta\right)=0$.  The main result is presented as Theorem \ref{theo:analoguetoTheorem1NWAL} in Section 6, and its proof follows the same steps as in the proof of \cite[Theorem 1]{Whiteley_2013}.

\setcounter{chapter}{2}
\noindent {\bf B. Tables}

\begin{table}[H]
\caption{Distributions used throughout}
\centering
\begin{tabular}{l l l l}
\hline
\hline
Distribution & Parameters & Notation & Expected value \\[0.5ex]
\hline
Gamma & shape $\alpha>0$ and scale $\beta>0$ & $\mathcal{G}a\left( \alpha, \beta \right)$ & $\alpha \beta$
\\ [2ex]
Gaussian & mean $\mu\in(-\infty,\infty)$ and variance $\sigma^2>0$ & $\mathcal{N}\left( \mu, \sigma^2 \right)$ & $\mu$
\\ [2ex]
Stable & stability $\alpha\in(0,2]$, skewness $\beta\in[-1,1]$, & $\mathcal{S}\left(\alpha,\beta,\gamma,\delta\right)$ & $\mu$ when $\alpha>1$
\\
 & scale $\gamma>0$, and location $\delta\in(-\infty,\infty)$ &  &
\\ [1ex]
\hline
\end{tabular}
\label{table:dist}
\end{table}

\begin{table}[H]
\caption{Kernel and operator notation used throughout}
\centering
\begin{tabular}{l l}
\hline
\hline
 & Kernels \\[0.5ex]
\hline
1. & $Q_\theta\left(\omega, x,{d}x\right) = W\left(\omega,x\right) M_\theta\left(\omega, x,{d}x\right)$
\\[1.25ex]
2a. & $Q_{\theta,n}\left(\omega, x,x^{'}\right) = \int Q_{\theta,n-1} \left(\omega, x,u\right) Q_\theta\left(z^{n-1}\omega, du,x^{'}\right), \quad n \geq 1$ 
\\[1.25ex]
2b. & $Q_{\theta,n}\left(\omega, x,x^{'}\right) = Q_{\theta,n-1}^\omega Q_\theta\left(z^{n-1}\omega, \cdot,x^{'}\right), \quad n \geq 1$, with $Q_{\theta,0}\left(\omega, x,x\right) = \text{I}$
\\[1.25ex]
2c. & $Q_{\theta,p+n}\left(\omega, x,{d}x\right) = Q_{\theta,p}^\omega Q_{\theta,n}\left(z^{p}\omega, \cdot,{d}x\right)$ $\quad-$ via induction as in \cite{Whiteley_2013}
\\[1.25ex]
3. & $\mathbf{Q}^{m_1,m_2}_\theta\left(\omega, x,{d}x\right) = \mathbf{W}^{m_1}\left(\omega,x\right) \mathbf{M}^{m_1,m_2}_\theta\left(\omega, x,{d}x\right)$
\\[1ex]
\hline
 & Operators \\[0.5ex]
\hline
1. & $M_\theta^\omega\left( W \right)\left( x \right)=\int_E M_\theta\left(\omega, x,{d}x^{'}\right) W\left(z\omega,x^{'}\right)$
\\[1.25ex]
2a. & $Q_\theta^\omega\left(\varphi\right)\left(x\right)=\int_E Q_\theta\left(\omega, x,{d}x^{'}\right)\varphi\left(x^{'}\right), \quad \varphi\in \mathcal{B}_b (E)$
\\[1.25ex]
2b. & $Q_{\theta,n}^\omega\left(\varphi\right)\left(x\right)=\int_{E^n} Q_{\theta,n}\left(\omega, x,{d}x^{'}\right)\varphi\left(x^{'}\right), \quad \varphi\in \mathcal{B}_b (E)$
\\[1.25ex]
3a. & $\sigma Q_\theta^\omega\left(\cdot\right)=\int_E Q_\theta\left(\omega, x,\cdot\right)\sigma\left(dx\right), \quad \sigma\in \mathscr{M}(E)$ or $\sigma\in\mathscr{P}(E)$
\\[1.25ex]
3b. & $\sigma Q_{\theta,n}^\omega\left(\cdot\right)=\int_{E^n} Q_{\theta,n}\left(\omega, x,\cdot\right)\sigma\left(dx\right), \quad \sigma\in \mathscr{M}(E)$ or $\sigma\in\mathscr{P}(E)$
\\[1.25ex]
3c. & $\sigma Q_{\theta,n}^\omega\left(W^{z^n\omega}\right)=\int_{E^{n+1}} Q_{\theta,n}\left(\omega, x,x^{'}\right)W\left({z^n\omega},x^{'}\right)\sigma\left(dx^{'}\right)=\sigma Q_{\theta,n+1}^\omega\left(1\right)$
\\[1.25ex]
4a. & $\mathbf{Q}_\theta^{\omega,m_1,m_2}\left(\mathbf{f}^{m_2}\left(z\omega,\cdot\right)\right)\left(x\right)=\int_{E^{m_2}} \mathbf{Q}^{m_1,m_2}_\theta \left(\omega, x,{d}x^{'}\right)\mathbf{f}^{m_2}\left(z\omega,x^{'}\right), \quad f\in \mathcal{B}_b (E)$
\\[1.25ex]
4b. & $\sigma\mathbf{Q}_\theta^{\omega,m_1,m_2}\left(\cdot\right)=\int_{E^{m_1}} \mathbf{Q}^{m_1,m_2}_\theta \left(\omega, x,\cdot\right)\sigma\left(dx\right), \quad \sigma\in \mathscr{M}(E^{m_1})$ or $\sigma\in\mathscr{P}(E^{m_1})$
\\[1ex]
\hline
 & Probability measures \\[0.5ex]
\hline
1. & $\Phi_\theta^\omega\left( \sigma \right) \left(\cdot\right)= \frac{\sigma Q_\theta^\omega }{\sigma Q_\theta^\omega\left(1\right)}\left(\cdot\right)$
\\[1.25ex]
2. & $\Phi_{\theta,n}^\omega \left( \sigma \right) \left(\cdot\right)= \left( \Phi_\theta^{z^{n-1}\omega} \circ \Phi_{\theta,n-1}^\omega \right) \left( \sigma \right) \left(\cdot\right), \quad n \geq 1$, with $\Phi_{\theta,0}^\omega \left( \sigma \right) \left(\cdot\right)= \text{I}$
\\[1.25ex]
3. & $\Phi_{\theta,n}^\omega\left( \sigma \right) = \frac{\sigma Q_{\theta,n}^\omega }{\sigma Q_{\theta,n}^\omega\left(1\right)} = \left( \Phi_{\theta,n-1}^{z\omega} \circ \Phi_{\theta}^\omega \right) \left( \sigma \right)$ $\quad-$ via induction as in \cite{Whiteley_2013}
\\ [1ex]
\hline
\end{tabular}
\label{table:kernelnotation}
\end{table}

\end{appendix}



\bibliography{thesis_biblio}{}
\bibliographystyle{plain}


\vskip .65cm
\noindent
Department of Mathematics, Imperial College London, London, SW7 2AZ, UK.
\vskip 2pt
\noindent
E-mail: a.persing11@ic.ac.uk

\vskip .65cm
\noindent
Department of Statistics \& Applied Probability, National University of Singapore, Singapore, 117546, SG.
\vskip 2pt
\noindent
E-mail: staja@nus.edu.sg


\end{document}